\newtheorem{definition}{Definition}
\newtheorem{lemma}{Lemma}
\newtheorem{proposition}{Proposition}
\newtheorem{fact}{Fact}
\newtheorem{theorem}{Theorem}
\newtheorem{corollary}{Corollary}
\newenvironment{proof}{\textbf{Proof:}}{\hfill$\square$}
\newcommand{\cE}{\mathcal{E}}
\newcommand{\cN}{\mathcal{N}}
\newcommand{\cO}{\mathcal{O}}
\newcommand{\cT}{\mathcal{T}}
\newcommand{\cX}{\mathcal{X}}
\newcommand{\cY}{\mathcal{Y}}
\newcommand{\CMG}{\mathrm{CMG}}
\newcommand{\hA}{\hat{A}}
\newcommand{\hB}{\hat{B}}
\newcommand{\hR}{\hat{R}}
\newcommand{\htrho}{\hat{\tilde{\rho}}}
\newcommand{\httau}{\hat{\tilde{\tau}}}
\newcommand{\hbtrho}{\hat{\bar{\tilde{\rho}}}}
\newcommand{\tcT}{\tilde{\mathcal{T}}}
\newcommand{\htcT}{\hat{\tilde{\mathcal{T}}}}
\newcommand{\trho}{\tilde{\rho}}
\newcommand{\tsigma}{\tilde{\sigma}}
\newcommand{\ttau}{\tilde{\tau}}
\newcommand{\btrho}{\bar{\tilde{\rho}}}
\newcommand{\bsigma}{\bar{\sigma}}
\newcommand{\supp}{\mathrm{supp}}
\DeclareMathOperator*{\E}{{\rm {\mathbb E}}\,}
\DeclareMathOperator*{\Tr}{{\rm Tr}\;}
\newcommand{\zero}{\leavevmode\hbox{\small l\kern-3.5pt\normalsize0}}
\newcommand{\one}{\leavevmode\hbox{\small1\kern-3.8pt\normalsize1}}
\newcommand{\I}{\mathbb{I}}
\newcommand{\ket}[1]{| #1 \rangle}
\newcommand{\bra}[1]{\langle #1 |}
\newcommand{\ketbra}[1]{\ket{#1}\bra{#1}}
\begin{document}

\title{{\bf Fully smooth one shot multipartite covering and decoupling 
of quantum states via telescoping
}}

\author{Pranab Sen\footnote{
School of Technology and Computer Science, Tata Institute of Fundamental
Research, Mumbai, India. Most of this work was done 
while the author was on sabbatical leave
at the Centre for Quantum Technologies, National University of Singapore.
Email: {\sf pranab.sen.73@gmail.com}.
}
}

\date{}

\maketitle

\begin{abstract}
We prove fully smooth one shot 
multipartite covering, aka convex split, results as well
as fully smooth multipartite decoupling results for quantum states.
Fully smooth one shot results for 
these problems were not known earlier, though the works of
Cheng, Gao and Berta \cite{Cheng:convexsplit} for convex split,
and Colomer and Winter \cite{Colomer:decoupling} for decoupling,
had made substantial progress
by introducing a technique called telescoping cum mean zero
decomposition of quantum states.
We show that the telescoping cum mean zero decomposition 
technique can in fact be simplified and further extended in order
to prove fully smooth decoupling and convex split results.

Our techniques allow us to prove the first fully smooth one shot inner
bounds for various fundamental network quantum information theory problems
like e.g. the generalised Slepian Wolf problem of \cite{anshu:slepianwolf}.
We can also prove for the first time the natural polyhedral inner bound 
for sending quantum information
over a quantum multiple access channel with limited entanglement 
assistance, first conjectured in \cite{Chakraborty:simultaneous}.
\end{abstract}






\section{Introduction}
The simultaneous smoothing bottleneck is a famous open problem in 
network quantum
information theory \cite{drescher:simultaneous}. A positive result for 
this problem would imply 
that many union and intersection type arguments carried out, sometimes
implicitly, in network classical information theory extend similarly
to the quantum setting. Most tasks in information theory can be decomposed
into simpler tasks of one of two types: {\em packing} and 
{\em covering}. The earlier work of \cite{sen:oneshot} gave a machinery
for implementing union and intersection for packing tasks in one shot 
network quantum information theory, bypassing simulaneous smoothing.
However it left open the question of implementing union and intersection
for covering tasks, a lacuna which was explicitly pointed out in
\cite{ding:relay}.

Recent exciting work of Cheng, Gao and 
Berta~\cite{Cheng:convexsplit}, and Colomer and 
Winter~\cite{Colomer:decoupling} has introduced a telescoping 
cum mean zero decomposition technique, that bypasses the simultaneous
smoothing bottleneck for intersection arguments for two fundamental 
problems in quantum information theory viz. multipartite covering aka
multipartite convex split, and multipartite decoupling respectively.
However Cheng et al. do not state their
multipartite convex split
results in terms of smooth one shot quantities, leaving several
basic one shot and finite blocklength achievability questions 
in network quantum information theory, e.g. inner bounds for the 
generalised one shot quantum Slepian Wolf \cite{anshu:slepianwolf} problem,
unanswered. The first convex split lemma was actually a
`unipartite' statement \cite{Jain:convexsplit}. Various smooth versions
of the unipartite convex split lemma were proved, and used to obtain
inner bounds for quantum communication tasks like
state merging, state redistribution \cite{Jain:convexsplit},
sending private classical information over a quantum wiretap channel
\cite{Wilde:wiretap} etc. The first general multipartite convex
split lemma was proved by Anshu, Jain and Warsi \cite{anshu:slepianwolf},
but it was a non-smooth result. Anshu, Jain and Warsi used their
non-smooth multipartite lemma to obtain non-smooth one shot inner 
bounds for a generalised quantum Slepian-Wolf problem. They also managed
to obtain the desired asymptotic iid limit for a restricted version of
the generalised Slepian-Wolf problem by ad hoc techniques, but left 
the question of existence
of a fully smooth one shot multipartite convex split lemma as well
as the achievability of the desired asymptotic iid limit for the
generalised Slepian-Wolf problem open.

Colomer and Winter \cite{Colomer:decoupling} state their 
multipartite decoupling
theorem as a `half smooth' one shot generalisation of the non smooth
multipartite decoupling result of \cite{Chakraborty:simultaneous}. 
By half smooth, we mean that their decoupling theorem uses smooth
one shot conditional entropy for the `control state' that needs to be
decoupled but uses the non-smooth conditional entropy for the Choi state
of the superoperator involved in the decoupling theorem. Their half smooth
decoupling theorem suffices to obtain the `natural' polyhedral rate
region for sending quantum information over an entanglement unassisted
quantum multiple access
channel (QMAC) \cite{Colomer:decoupling}. However, it cannot be used to
obtain the `natural' polyhedral rate region for sending 
quantum information
over a QMAC with limited entanglement assistance. This is because
obtaining the polyhedral rate region in this setting requires a different
choice of control state and and a different choice of superoperator in 
the decoupling argument
\cite{Chakraborty:simultaneous}, which ultimately goes back to Dupuis' 
achievability result for sending quantum information over a point to point
channel under limited entanglement assistance \cite{decoupling}. For
these different choices, one would need to prove a different half smooth
decoupling theorem that uses smooth conditional entropy for the Choi
state! Earlier works like
\cite{Chakraborty:ratesplitting} have studied this problem and have
obtained subpolyhedral inner bounds in the one shot setting which 
nevertheless lead on to the full polyhedral region in the asymptotic
limit of many independent and identically distributed (iid) uses of
the channel. However, obtaining the full polyhedral inner bound 
in the one shot setting for the QMAC with limited entanglement 
assistance remained open.

In this paper, we show that the telescoping proof technique
of Colomer and Winter \cite{Colomer:decoupling}, and Cheng, Gao and
Berta \cite{Cheng:convexsplit}, can actually be simplified and further
extended to prove fully smooth multipartite decoupling 
and convex split results. 
Thus, this paper generalises both Cheng, Gao and
Berta's \cite{Cheng:convexsplit}, as well as Colomer and 
Winter's \cite{Colomer:decoupling} works. Our fully smooth multipartite
convex split theorem allows us to obtain a
fully smooth one shot version of the generalised Slepian Wolf problem
of \cite{anshu:slepianwolf}, as well as a fully smooth one shot inner
bound for sending private classical information over a wiretap QMAC.
Both these consequences were not known earlier. 
Our fully smooth
multipartite decoupling theorem allows us to obtain the natural polyhedral
one shot rate region for sending quantum information over a QMAC with
limited entanglement assistance, which also was not known earlier.

Smooth multipartite convex split, when applied to classical quantum
states, can be viewed as a smooth multipartite classical quantum 
soft covering lemma. In multipartite soft covering, there are $k$
classical parties, $k \geq 1$. Each party independently samples a set 
of classical 
symbols from a probability distribution.  A $k$-tuple of
symbols from the samples of the $k$ parties is fed to a box whose 
output is a quantum state
depending on the input $k$-tuple. The uniform average of the 
resulting input dependent quantum state is taken
over all $k$ tuples that can be constructed from the $k$ sets sampled
by the $k$ parties. The aim is to ensure that the resulting
sample averaged input dependent quantum state is close in 
trace distance to an `ideal'
fixed quantum state, under a `typical' collection of sets of samples 
by the $k$ parties.
All known soft covering lemmas for classical quantum states, including the
smooth multipartite one arising from convex split, fall in the 
mould of covering {\em in expectation} i.e. the expectation, 
over the random choices of the $k$ classical parties,
of the trace distance between 
the sample averaged input dependent quantum state and the fixed 
ideal state is small.
In \cite{Sen:MatrixChernoff}, for the first time a smooth
multipartite soft covering lemma for classical quantum states was proved
{\em in concentration}, extending the reach of the `in expectation'
result of this paper. More precisely, that paper shows that 
with exponentially high probability over the 
random choices of the $k$ parties,
the sample averaged input dependent quantum state is close to the 
ideal state. 

\subsection{Organisation of the paper}
Section~\ref{sec:CMGcovering} serves as a warmup, where
we will explain the 
bottleneck of intersection and simultaneous smoothing for covering tasks 
in more 
detail by taking up an example that we shall call the 
{\em CMG covering problem}. This example shall be our main motivating
running example for the convex split part of this work. We will show
as a warmup how the telescoping idea of 
Cheng, Gao and Berta~\cite{Cheng:convexsplit} can be simplified and
extended to prove a fully smooth inner bound for the CMG covering
problem. 
After that, we prove our fully smooth multipartite
convex split lemma in Section~\ref{sec:convexsplit} via telescoping.
In Section~\ref{sec:decoupling}, we will
see how the telescoping technique of Colomer and Saus 
\cite{Colomer:decoupling} can be simplied and further extended in order
to prove a fully smooth multipartite decoupling lemma. 
It will be followed by an
application to sending quantum information over a QMAC with limited
entanglement assistance.
We finally make some concluding remarks and list directions for
further research
in Section~\ref{sec:conclusion}.

\section{Warmup: Fully smooth CMG covering by telescoping}
\label{sec:CMGcovering}
The CMG covering problem that we are about to define lies at the heart 
of obtaining
inner bounds for private classical communication over a quantum 
wiretap interference channel. The name CMG is an
acronym for Chong, Motani and Garg, the discoverers of one of the most
well known inner bounds for the classical non-wiretap interference channel
in the asympotic independent and identically distributed (asymptotic iid)
setting \cite{CMGElGamal}. The paper \cite{sen:interference} first 
showed the achievability of
the same inner bound for sending classical information over a quantum
non-wiretap interference channel in the asymptotic iid setting. 
Subsequently, the achievability in the non-wiretap quantum case in 
the general one shot setting was proved in \cite{sen:simultaneous}.
That paper used the machinery of \cite{sen:oneshot} to bypass 
simultaneous smoothing for intersection problems arising in packing
tasks. In a companion paper \cite{Sen:flatten}, we will be able 
to extend the result of
\cite{sen:simultaneous} to the one shot wiretap quantum case by 
combining the packing arguments with the fully smooth CMG covering
lemma (Lemma~\ref{lem:smoothCMGcovering}) that we will prove below.

Let $\cN$ be a quantum channel with two inputs called Alice and Bob,
and one output called Eve. The intuition here is that Eve is an 
eavesdropper trying to get
information on the classical messages that Alice and Bob are sending 
to some other outputs of 
$\cN$, not described here explicitly as they are  not
relevant for CMG covering. Though $\cN$ has quantum inputs and quantum
outputs, we will assume without loss of generality that the inputs are
classical, because in the inner bound below, there is a standard 
optimisation
step over a choice of all ensembles of input states to Alice and Bob.
Thus, we will henceforth think of $\cN$ as a classical-quantum (cq)
channel with its two classical input alphabets being denoted by
$\cX$, $\cY$, and the quantum output alphabet by $\cE$. 

Let $0 \leq \epsilon \leq 1$. 
On input $(x,y) \in \cX \times \cY$, the channel outputs
a quantum state 
\[
\sigma^E_{xy} := \cN^{A B \rightarrow E}(\rho^A_x \otimes \rho^B_y)
\]
on $E$, where $\{\rho^A_x\}_x$, $\{\rho^B_y\}_y$ are the ensembles of the
so-called encoding quantum states at the quantum inputs $A$, $B$ 
of the channel which is
modelled by a completely positive trace non-increasing superoperator
$\cN: A B \rightarrow E$. 
To define the CMG covering problem, we need to define new alphabets
$\cX'$, $\cY'$, following the scheme of
the original paper \cite{CMGElGamal}. 
We then pick a `control' probability distribution 
\[
p(x',x,y',y) := p(x', x) p(y', y)
\]
on the classical alphabet 
$\cX' \times \cX \times \cY' \times \cY$.
The `control' cq state for the CMG covering problem is now defined as
\[
\sigma^{X' X Y' Y E} := 
\sum_{x',x,y',y} p(x',x,y',y) \ketbra{x',x,y',y}^{X'XY'Y}
     \otimes \sigma^E_{xy}.
\]
Though the state $\sigma^E_{xy}$ depends only on $x$ and $y$, for
later convenience we will write it as
$\sigma^E_{x'xy'y} := \sigma^E_{xy}$. Though this notation seems 
heavier now, it will be extremely useful when we take various marginals
of $\sigma^{X' X Y' Y E}$ in the entropic quantities and the proofs
below. Thus,
\[
\sigma^{X' X Y' Y E} := 
\sum_{x',x,y',y} p(x',x)p(y',y) \ketbra{x',x,y',y}^{X'XY'Y}
     \otimes \sigma^E_{x'xy'y}.
\]

Let us say Alice and Bob are trying to send a pair of classical messages
$a$ and $b$ respectively to Charlie. To obfuscate them from Eve, Alice
and Bob independently do the following strategy. 
Alice chooses iid samples $x'_1, \ldots, x'_{L'}$ from the
marginal distribution $p^{X'}$.
Conditioned on each sample $x'_{l'} \in \cX'$, Alice chooses iid
samples $x_{l', 1}, \ldots, x_{l', L}$ from the conditioned 
marginal distribution $p^{X} | (X' = x'_{l'}$. Thus, a total of
$L' L$ samples are chosen by Alice.
Similarly, Bob chooses iid samples $y'_1, \ldots, y'_{M'}$ from the
marginal distribution $p^{Y'}$.
Conditioned on each sample $y'_{m'} \in \cY'$, Bob chooses iid
samples $y_{m', 1}, \ldots, y_{m', M}$ from the conditioned 
marginal distribution $p^{Y} | (Y' = y'_{m'}$. Thus, a total of
$M' M$ samples are chosen by Bob. 
Alice inputs a uniformly random sample from her chosen set into the
channel $\cN$. 
Similarly, Bob inputs a uniformly random sample from his chosen set 
into the channel $\cN$. The hope is that this strategy obfuscates
Eve's received state so much that she is unable to figure out 
the actual input message pair $(a,b)$.

Let $\sigma^E$ denote the marginal on $E$ of the control state
$\sigma^{X' X Y' Y E}$. Define
\begin{equation}
\label{eq:CMGdef1}
\sigma^E_{\vec{x'},  \vec{x}, \vec{y'}, \vec{y}} :=
(L' (L+1) M' (M+1))^{-1}
\sum_{a'=1}^{L'} \sum_{a=1}^L
\sum_{b'=1}^{M'} \sum_{b=1}^M
\sigma^E_{x'_{a'} x_{a'a} y'_{b'} y_{b'b}}.
\end{equation}
We would like to show that the  so-called CMG convex split quantity 
defined below is small, when
$L'$, $L$, $M'$, $M$ are suitably large.
\begin{equation}
\label{eq:CMGdef2}
\CMG :=
\E_{\vec{x'}, \vec{x}, \vec{y'}, \vec{y}}[
\|\sigma^E_{\vec{x'},  \vec{x}, \vec{y'}, \vec{y}} - \sigma^E\|_1
],
\end{equation}
the expectation being over the choice of the samples
$\vec{x'} := (x'_1, \ldots, x'_{L'})$,
$\vec{x} := (x_{1,1}, \ldots, x'_{L',L})$,
$\vec{y'} := (y'_1, \ldots, y'_{M'})$,
$\vec{y} := (y_{1,1}, \ldots, y'_{M',M})$ according to the procedure
described above. Recall that for $p > 0$, the 
{\em Schatten-$\ell_p$ norm} of a matrix $M$ is defined as 
$
\|M\|_p := \Tr[(A^\dag A)^{p/2}].
$
The {\em Schatten-$\ell_\infty$ norm}, aka operator norm induced from the
Hilbert space norm, is defined by taking $p \rightarrow +\infty$ in the
above expression, resulting in
$\|M\|_\infty$ being the largest singular value of $M$. For normal
matrices $M$, it equals the largest absolute value of an 
eigenvalue of $M$.

We will try to prove the desired upper bound by using the matrix 
weighted Cauchy-Schwarz inequality given below. 
\begin{fact}
\label{fact:matrixCauchySchwarz}
Let $M$ be Hermitian matrix on a Hilbert space. Let $\sigma$ be a
normalised density matrix on the same space such that
$\supp(M) \subseteq \supp(\sigma)$. Then,
\[
\|M\|_1 \leq \|\sigma^{-1/4} M \sigma^{-1/4}\|_2,
\]
where $\sigma^{-1}$ is the so-called Moore-Penrose pseudoinverse of 
$\sigma$ i.e. the inverse on the support of $\sigma$ and the zero
operator orthogonal to the support.
If $\Tr \sigma \neq 1$, still an appropriate inequality
can be easily obtained by rescaling by $\Tr \sigma$.
\end{fact}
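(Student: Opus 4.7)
The plan is to factor $M$ symmetrically through $\sigma^{1/4}$ on both sides and then apply the three-factor Hölder inequality for Schatten norms. Because $\supp(M) \subseteq \supp(\sigma)$, I can restrict attention to the subspace $\supp(\sigma)$, on which $\sigma$ is genuinely invertible and the Moore-Penrose pseudoinverse $\sigma^{-1/4}$ agrees with the ordinary inverse. On that subspace the tautological identity
\[
M = \sigma^{1/4} \bigl( \sigma^{-1/4} M \sigma^{-1/4} \bigr) \sigma^{1/4}
\]
holds, and it extends to the full Hilbert space since every factor on either side annihilates the orthogonal complement of $\supp(\sigma)$.

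With this factorisation in hand, I would invoke the three-term Hölder inequality for Schatten norms, $\|ABC\|_1 \leq \|A\|_{p_1} \|B\|_{p_2} \|C\|_{p_3}$ whenever $1/p_1 + 1/p_2 + 1/p_3 = 1$, taking the exponents $(p_1, p_2, p_3) = (4, 2, 4)$. This gives
\[
\|M\|_1 \leq \|\sigma^{1/4}\|_4 \cdot \|\sigma^{-1/4} M \sigma^{-1/4}\|_2 \cdot \|\sigma^{1/4}\|_4.
\]
Because $\sigma$ is positive semidefinite, $\|\sigma^{1/4}\|_4^{4} = \Tr[(\sigma^{1/4})^{4}] = \Tr \sigma = 1$, so both outer factors equal one and the claimed bound pops out. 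The rescaling remark for $\Tr \sigma \neq 1$ then follows by applying the normalised inequality to $\sigma / \Tr \sigma$ and tracking the induced factor, which is $(\Tr\sigma)^{1/2}$ on the right hand side.

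There is really no hard step here: the only mild subtlety is the pseudoinverse bookkeeping, and the hypothesis $\supp(M) \subseteq \supp(\sigma)$ is used precisely to legalise the factorisation. The three-term Hölder inequality itself is standard and can be obtained by iterating the two-term bound $\|XY\|_r \leq \|X\|_s \|Y\|_t$ with $1/r = 1/s + 1/t$, so I would just cite it rather than reprove it.
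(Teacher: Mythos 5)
Your proof is correct. The paper states this as an unproved Fact, so there is no proof in the source to compare against; the name ``matrix weighted Cauchy--Schwarz'' suggests the intended route is via the Hilbert--Schmidt inner product, writing $\|M\|_1 = \Tr[SM] = \Tr[(\sigma^{1/4} S \sigma^{1/4})(\sigma^{-1/4} M \sigma^{-1/4})]$ for $S$ the Hermitian sign of $M$ and then applying the Cauchy--Schwarz inequality for that inner product together with $\Tr[\sigma^{1/2} S \sigma^{1/2} S] \leq \Tr\sigma = 1$. Your three-factor H\"older argument with exponents $(4,2,4)$ is a clean alternative that avoids introducing $S$ explicitly and handles the normalising factor $\|\sigma^{1/4}\|_4 = (\Tr\sigma)^{1/4}$ transparently, which also makes the rescaling remark for $\Tr\sigma \neq 1$ (the factor $(\Tr\sigma)^{1/2}$) immediate. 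Your bookkeeping of the pseudoinverse via the support hypothesis is also sound: all operators in the factorisation annihilate $\supp(\sigma)^{\perp}$, so the identity holds on the full space.
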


In the course of the attempted
proof, the role of simultaneous smoothing will become clear. Note that
the {\em support} of $\sigma^E_{\vec{x'},  \vec{x}, \vec{y'}, \vec{y}}$ 
lies inside the support of $\sigma^E$ for any choice of 
$\vec{x'}$, $\vec{x}$, $\vec{y'}$, $\vec{y}$. 
Recall that the {\em support}
of a Hermitian operator $M$ is defined as the span of its eigenvectors 
corresponding to non-zero eigenvalues, and is denoted in this paper by
$\supp(M)$. 
Let $(\sigma^E)^{-1}$ denote the {\em Moore-Penrose pseudoinverse}
of the matrix $\sigma^E$ i.e. the inverse of $\sigma^E$ on its 
support $\supp(\sigma^E)$, and the zero operator orthogonal to 
$\supp(\sigma^E)$.
Positive and negative
powers of $(\sigma^E)^{-1/4}$ can now be defined naturally by going
to the eigendecompostion of $\sigma^E$. Define
\[
\begin{array}{c}
\tsigma^E_{\vec{x'},  \vec{x}, \vec{y'}, \vec{y}} :=
(\sigma^E)^{-1/4} \circ \sigma^E_{\vec{x'},  \vec{x}, \vec{y'}, \vec{y}} :=
(\sigma^E)^{-1/4} (\sigma^E_{\vec{x'},  \vec{x}, \vec{y'}, \vec{y}})
(\sigma^E)^{-1/4}, \\
\tsigma^E_{x'_{a'} x_{a'a} y'_{b'} y_{b'b}} :=
(\sigma^E)^{-1/4}
(\sigma^E_{x'_{a'} x_{a'a} y'_{b'} y_{b'b}})
(\sigma^E)^{-1/4}.
\end{array}
\]
By Fact~\ref{fact:matrixCauchySchwarz} and convex of the square 
function,
it suffices to prove an upper bound on the following quantity.
\begin{equation}
\label{eq:CMGsquare1}
\begin{array}{rcl}
\CMG^2
& = &
\left(
\E_{\vec{x'}, \vec{x}, \vec{y'}, \vec{y}}[
\|\sigma^E_{\vec{x'},  \vec{x}, \vec{y'}, \vec{y}} - \sigma^E\|_1
]\right)^2 
\;\leq\;
\left(
\E_{\vec{x'}, \vec{x}, \vec{y'}, \vec{y}}[
\|\tsigma^E_{\vec{x'},  \vec{x}, \vec{y'}, \vec{y}} - (\sigma^E)^{1/2}\|_2
]\right)^2 \\
& \leq &
\E_{\vec{x'}, \vec{x}, \vec{y'}, \vec{y}}\left[
\|\tsigma^E_{\vec{x'},  \vec{x}, \vec{y'}, \vec{y}} - 
  (\sigma^E)^{1/2}\|_2^2
\right] \\
&   =  &
\E_{\vec{x'}, \vec{x}, \vec{y'}, \vec{y}}\left[
\|\tsigma^E_{\vec{x'},  \vec{x}, \vec{y'}, \vec{y}}\|_2^2 + 
\|(\sigma^E)^{1/2}\|_2^2 -
2 \Tr[\tsigma^E_{\vec{x'},  \vec{x}, \vec{y'}, \vec{y}} 
	(\sigma^E)^{1/2}]
\right]\\
& \leq &
\E_{\vec{x'}, \vec{x}, \vec{y'}, \vec{y}}\left[
\|\tsigma^E_{\vec{x'},  \vec{x}, \vec{y'}, \vec{y}}\|_2^2 +  
\Tr[\sigma^E]
	- 2 \Tr[\tsigma^E_{\vec{x'},  \vec{x}, \vec{y'}}] 
\right] \\
&   =  &
\E_{\vec{x'}, \vec{x}, \vec{y'}, \vec{y}}[
\|\tsigma^E_{\vec{x'},  \vec{x}, \vec{y'}, \vec{y}}\|_2^2
] + 1 - 2   
\;  = \;
\E_{\vec{x'}, \vec{x}, \vec{y'}, \vec{y}}[
\|\tsigma^E_{\vec{x'},  \vec{x}, \vec{y'}, \vec{y}}\|_2^2
]- 1.   
\end{array}
\end{equation}

It thus suffices to upper bound 
\begin{equation}
\label{eq:CMGsquare2}
\begin{array}{rcl}
\lefteqn{
\E_{\vec{x'}, \vec{x}, \vec{y'}, \vec{y}}[
\|\tsigma^E_{\vec{x'},  \vec{x}, \vec{y'}, \vec{y}}\|_2^2
] 
} \\
& = &
(L' (L+1) M' (M+1))^{-2} \\
&  &
~~~~
\E_{\vec{x'}, \vec{x}, \vec{y'}, \vec{y}}\left[
\Tr\left[
\left(
\sum_{a'=1}^{L'} \sum_{a=1}^L
\sum_{b'=1}^{M'} \sum_{b=1}^M
\tsigma^E_{x'_{a'} x_{a'a} y'_{b'} y_{b'b}}
\right) 
\right. 
\right. \\
&  &
~~~~~~~~~~~~~~~~~~~~~~~~
\left.
\left.
\left(
\sum_{a'=1}^{L'} \sum_{a=1}^L
\sum_{b'=1}^{M'} \sum_{b=1}^M
\tsigma^E_{x'_{\hat{a'}} x_{\hat{a'}\hat{a}} y'_{\hat{b'}} 
	   y_{\hat{b'}\hat{b}}}
\right)
\right]
\right] \\
& = &
(L' (L+1) M' (M+1))^{-2} \\
&   &
~~~~
{} \cdot
\sum_{a',\hat{a'}=1}^{L'} \sum_{a,\hat{a}=1}^L
\sum_{b',\hat{b'}=1}^{M'} \sum_{b,\hat{b}=1}^M
\E_{\vec{x'}, \vec{x}, \vec{y'}, \vec{y}}\left[
\Tr[
\tsigma^E_{x'_{a'} x_{a'a} y'_{b'} y_{b'b}}
\tsigma^E_{x'_{\hat{a'}} x_{\hat{a'}\hat{a}} y'_{\hat{b'}} 
	   y_{\hat{b'}\hat{b}}}
]
\right].
\end{array}
\end{equation}

For any fixed choice of, say, $a'$, $b'$ define 
\[
\sigma^E_{x'_{a'}, y'_{b'}} :=
\E_{x y | x'_{a'} y'_{b'}}[
\sigma^E_{x'_{a'} x y'_{b'} y}
]
\]
in the natural fashion, where the expectation is taken over the independent
choice
of $x$ according to the distribution $p(x | x'_{a'})$ and 
$y$ according to the distribution $p(y | y'_{b'})$. Other terms like
$\sigma^E_{x'_{a'}, x_a, y'_{b'}}$,
$\sigma^E_{y'_{b'}}$,
$\tsigma^E_{x'_{a'}, x_a}$ etc. are defined in the natural fashion.

Consider a term like the one below for a fixed choice of 
$a' \neq \hat{a'}$, $b' \neq \hat{b'}$, and any fixed choice of
$a$, $\hat{a}$, $b$, $\hat{b}$. We have,
\begin{equation}
\label{eq:CMGterm1}
\begin{array}{rcl}
\lefteqn{
\E_{\vec{x'}, \vec{x}, \vec{y'}, \vec{y}}\left[
\Tr[
\tsigma^E_{x'_{a'} x_{a'a} y'_{b'} y_{b'b}}
\tsigma^E_{x'_{\hat{a'}} x_{\hat{a'}\hat{a}} y'_{\hat{b'}} 
	   y_{\hat{b'}\hat{b}}}
]
\right]
} \\
& = &
\Tr\left[
\E_{x', \hat{x'}, y', \hat{y'}, x, \hat{x}, y, \hat{y}}\left[
\tsigma^E_{x' x y' y}
\tsigma^E_{\hat{x'} \hat{x} \hat{y'} \hat{y}}
\right]
\right]
\;=\;
\Tr[\tsigma^E \tsigma^E] = \Tr[\sigma^E] = 1,
\end{array}
\end{equation}
where the second expectation is taken in the natural fashion over
independent choices of 
$(x', x)$ according to $p(x', x)$,
$(y', y)$ according to $p(y', y)$,
$(\hat{x'}, \hat{x})$ according to $p(\hat{x'}, \hat{x})$,
$(\hat{y'}, \hat{y})$ according to $p(\hat{y'}, \hat{y})$.
The number of such terms is $(L'-1)L' (M'-1)M' L^2 M^2$.

Consider a term like the one below for a fixed choice of 
$a' = \hat{a'}$, $b' \neq \hat{b'}$, and any fixed choice of
$a \neq \hat{a}$, $b$, $\hat{b}$. We have,
\begin{equation}
\label{eq:CMGterm2}
\begin{array}{rcl}
\lefteqn{
\E_{\vec{x'}, \vec{x}, \vec{y'}, \vec{y}}\left[
\Tr[
\tsigma^E_{x'_{a'} x_{a'a} y'_{b'} y_{b'b}}
\tsigma^E_{x'_{\hat{a'}} x_{\hat{a'}\hat{a}} y'_{\hat{b'}} 
	   y_{\hat{b'}\hat{b}}}
]
\right]
} \\
& = &
\Tr\left[
\E_{x', y', \hat{y'}, x, \hat{x}, y, \hat{y}}\left[
\tsigma^E_{x' x y' y}
\tsigma^E_{x' \hat{x} \hat{y'} \hat{y}}
\right]
\right]
\;=\;
\Tr\left[
\E_{x'}\left[
\E_{x|x', y', y}[\tsigma^E_{x' x y' y}]
\E_{\hat{x}|x',\hat{y'}, \hat{y}}[\tsigma^E_{x' \hat{x} \hat{y'} \hat{y}}]
\right]
\right] \\
& = &
\E_{x'}\left[
\Tr\left[
\tsigma^E_{x'}
\tsigma^E_{x'}
\right]
\right] 
\;=\;
2^{I_2(X':E)_\sigma}
\;\leq\;
2^{I_\infty(X':E)_\sigma}
\end{array}
\end{equation}
where the second expectation is taken in the natural fashion over
choices of $x'$ according to $p(x')$, independent choices of
$x$ and $\hat{x}$ according to $p(\cdot | x')$, and independent choices of
$(y', y)$ according to $p(y', y)$,
$(\hat{y'}, \hat{y})$ according to $p(\hat{y'}, \hat{y})$.
The number of such terms is $L' (M'-1)M' L(L-1) M^2$.
Above, we have used the {\em non-smooth R\'{e}nyi-2 mutual information}
$I_2(X':E)_\sigma$, and 
the {\em non-smooth R\'{e}nyi-$\infty$ mutual information}, aka
non-smooth max mutual information,
$I_\infty(X':E)_\sigma$ under the joint state $\sigma^{X'E}$. These two
mutual informations
as well as the {\em non-smooth R\'{e}nyi-2 divergence} and 
{\em non-smooth R\'{e}nyi-$\infty$ divergence}, aka non-smooth max
divergence, behind the mutual informations are now defined below.
\begin{definition}
\[
\begin{array}{c}
I_2(X':E)_\sigma := 
D_2(\sigma^{X'E} \| \sigma^{X'} \otimes \sigma^E), \\
I_\infty(X':E)_\sigma := 
D_\infty(\sigma^{X'E} \| \sigma^{X'} \otimes \sigma^E), \\
D_2(\alpha \| \beta) := 
2 \log\|\beta^{-1/4} \alpha \beta^{-1/4}\|_2, ~~
\mbox{if $\supp(\alpha) \leq \supp(\beta)$, $+\infty$ otherwise}, \\
D_\infty(\alpha \| \beta) := 
\log\|\beta^{-1/2} \alpha \beta^{-1/2}\|_\infty, ~~
\mbox{if $\supp(\alpha) \leq \supp(\beta)$, $+\infty$ otherwise}.
\end{array}
\]
\end{definition}
where $\alpha$ is a {\em subnormalised density matrix} and 
$\beta$ is a positive
semidefinite matrix acting on the same Hilbert space. In this paper,
a subnormalised density matrix means a positive semidefinite matrix with
trace at most one;  a {\em normalised density matrix} means the trace is
exactly one. Note that 
$D_2(\alpha \| \beta) \leq D_\infty(\alpha \| \beta)$, and that
for a cq state $\sigma^{X'E}$, 
\[
I_2(X':E)_\sigma =
\log \E_{x'} \Tr[((\sigma^E)^{-1/4} \sigma^E_{x'} (\sigma^E)^{-1/4})^2] =
\log \E_{x'} \Tr[(\tsigma^E_{x'})^2],
\]
the expecation over $x'$ being taken according to the classical probability
distribution $\sigma^{X'}$.
For later use, we remark that for a cq state $\sigma^{X'E}$,
\[
I_\infty(X':E)_\sigma =
\max_{x'} D_\infty(\sigma^E_{x'} \| \sigma^E).
\]

Consider a term like the one below for a fixed choice of 
$b' = \hat{b'}$, $a' \neq \hat{a'}$, and any fixed choice of
$b \neq \hat{b}$, $a$, $\hat{a}$. We have similarly
\begin{equation}
\label{eq:CMGterm3}
\begin{array}{rcl}
\E_{\vec{x'}, \vec{x}, \vec{y'}, \vec{y}}\left[
\Tr[
\tsigma^E_{x'_{a'} x_{a'a} y'_{b'} y_{b'b}}
\tsigma^E_{x'_{\hat{a'}} x_{\hat{a'}\hat{a}} y'_{\hat{b'}} 
	   y_{\hat{b'}\hat{b}}}
]
\right]
& = &
2^{I_2(Y':E)_\sigma}
\;\leq\;
2^{I_\infty(Y':E)_\sigma}.
\end{array}
\end{equation}
The number of such terms is $M' (L'-1)L' L^2 M(M-1)$.

Consider a term like the one below for a fixed choice of 
$b' = \hat{b'}$, $b = \hat{b}$, 
$a' \neq \hat{a'}$, and any fixed choice of
$a$, $\hat{a}$. We have similarly
\begin{equation}
\label{eq:CMGterm4}
\begin{array}{rcl}
\lefteqn{
\E_{\vec{x'}, \vec{x}, \vec{y'}, \vec{y}}\left[
\Tr[
\tsigma^E_{x'_{a'} x_{a'a} y'_{b'} y_{b'b}}
\tsigma^E_{x'_{\hat{a'}} x_{\hat{a'}\hat{a}} y'_{\hat{b'}} 
	   y_{\hat{b'}\hat{b}}}
]
\right]
} \\
& = &
\Tr\left[
\E_{x', \hat{x'}, x, \hat{x}, y', y}\left[
\tsigma^E_{x' x y' y}
\tsigma^E_{\hat{x'} \hat{x} y' y}
\right]
\right]
\;=\;
\Tr\left[
\E_{y', y}\left[
\E_{x', x}[\tsigma^E_{x' x y' y}]
\E_{\hat{x'}, \hat{x}}[\tsigma^E_{\hat{x'} \hat{x} y' y}]
\right]
\right] \\
& = &
\E_{y',y}\left[
\Tr\left[
\tsigma^E_{y' y}
\tsigma^E_{y' y}
\right]
\right] 
\;=\;
2^{I_2(Y'Y:E)_\sigma}
\;\leq\;
2^{I_\infty(Y'Y:E)_\sigma}.
\end{array}
\end{equation}
The number of such terms is $ (L'-1)L' L^2 M' M$.

Consider a term like the one below for a fixed choice of 
$a' = \hat{a'}$, $a = \hat{a}$, 
$b' \neq \hat{b'}$, and any fixed choice of
$b$, $\hat{b}$. We have similarly
\begin{equation}
\label{eq:CMGterm5}
\begin{array}{rcl}
\E_{\vec{x'}, \vec{x}, \vec{y'}, \vec{y}}\left[
\Tr[
\tsigma^E_{x'_{a'} x_{a'a} y'_{b'} y_{b'b}}
\tsigma^E_{x'_{\hat{a'}} x_{\hat{a'}\hat{a}} y'_{\hat{b'}} 
	   y_{\hat{b'}\hat{b}}}
]
\right]
& = &
2^{I_2(X'X:E)_\sigma}
\;\leq\;
2^{I_\infty(X'X:E)_\sigma}.
\end{array}
\end{equation}
The number of such terms is $ L'L (M'-1) M' M^2$.

Consider a term like the one below for a fixed choice of 
$b' = \hat{b'}$, $a' = \hat{a'}$, and any fixed choice of
$b \neq \hat{b}$, $a \neq \hat{a}$. We have similarly,
\begin{equation}
\label{eq:CMGterm6}
\begin{array}{rcl}
\lefteqn{
\E_{\vec{x'}, \vec{x}, \vec{y'}, \vec{y}}\left[
\Tr[
\tsigma^E_{x'_{a'} x_{a'a} y'_{b'} y_{b'b}}
\tsigma^E_{x'_{\hat{a'}} x_{\hat{a'}\hat{a}} y'_{\hat{b'}} 
	   y_{\hat{b'}\hat{b}}}
]
\right]
} \\
& = &
\Tr\left[
\E_{x', y', x, \hat{x}, y, \hat{y}}\left[
\tsigma^E_{x' x y' y}
\tsigma^E_{x' \hat{x} y' \hat{y}}
\right]
\right]
\;=\;
\Tr\left[
\E_{x', y'}\left[
\E_{x|x', y|y'}[\tsigma^E_{x' x y' y}]
\E_{\hat{x}|x', \hat{y}|y'}[\tsigma^E_{x' \hat{x} y' \hat{y}}]
\right]
\right] \\
& = &
\E_{x',y'}\left[
\Tr\left[
\tsigma^E_{x' y'}
\tsigma^E_{x' y'}
\right]
\right] 
\;=\;
2^{I_2(X'Y':E)_\sigma}
\;\leq\;
2^{I_\infty(X'Y':E)_\sigma}.
\end{array}
\end{equation}
The number of such terms is $M' L' L(L-1) M(M-1)$.

Consider a term like the one below for a fixed choice of 
$b' = \hat{b'}$, $a' = \hat{a'}$, and any fixed choice of
$b = \hat{b}$, $a \neq \hat{a}$. We have similarly,
\begin{equation}
\label{eq:CMGterm7}
\begin{array}{rcl}
\lefteqn{
\E_{\vec{x'}, \vec{x}, \vec{y'}, \vec{y}}\left[
\Tr[
\tsigma^E_{x'_{a'} x_{a'a} y'_{b'} y_{b'b}}
\tsigma^E_{x'_{\hat{a'}} x_{\hat{a'}\hat{a}} y'_{\hat{b'}} 
	   y_{\hat{b'}\hat{b}}}
]
\right]
} \\
& = &
\Tr\left[
\E_{x', y', y}\left[
\E_{x|x'}[\tsigma^E_{x' x y' y}]
\E_{\hat{x}|x'}[\tsigma^E_{x' \hat{x} y' y}]
\right]
\right] \\
& = &
\E_{x',y', y}\left[
\Tr\left[
\tsigma^E_{x' y' y}
\tsigma^E_{x' y' y}
\right]
\right] 
\;=\;
2^{I_2(X'Y'Y:E)_\sigma}
\;\leq\;
2^{I_\infty(X'Y'Y:E)_\sigma}.
\end{array}
\end{equation}
The number of such terms is $M' L' L(L-1) M$.

Consider a term like the one below for a fixed choice of 
$b' = \hat{b'}$, $a' = \hat{a'}$, and any fixed choice of
$b \neq \hat{b}$, $a = \hat{a}$. We have similarly,
\begin{equation}
\label{eq:CMGterm8}
\begin{array}{rcl}
\E_{\vec{x'}, \vec{x}, \vec{y'}, \vec{y}}\left[
\Tr[
\tsigma^E_{x'_{a'} x_{a'a} y'_{b'} y_{b'b}}
\tsigma^E_{x'_{\hat{a'}} x_{\hat{a'}\hat{a}} y'_{\hat{b'}} 
	   y_{\hat{b'}\hat{b}}}
]
\right]
& = &
2^{I_2(X'XY':E)_\sigma}
\;\leq\;
2^{I_\infty(X'X Y':E)_\sigma}.
\end{array}
\end{equation}
The number of such terms is $M' L' L M(M-1)$.

Finally, consider a term like the one below for a fixed choice of 
$b' = \hat{b'}$, $a' = \hat{a'}$, and any fixed choice of
$b = \hat{b}$, $a = \hat{a}$. We have similarly,
\begin{equation}
\label{eq:CMGterm9}
\begin{array}{rcl}
\lefteqn{
\E_{\vec{x'}, \vec{x}, \vec{y'}, \vec{y}}\left[
\Tr[
\tsigma^E_{x'_{a'} x_{a'a} y'_{b'} y_{b'b}}
\tsigma^E_{x'_{\hat{a'}} x_{\hat{a'}\hat{a}} y'_{\hat{b'}} 
	   y_{\hat{b'}\hat{b}}}
]
\right]
} \\
& = &
\Tr\left[
\E_{x', y', x, y}[
\tsigma^E_{x' x y' y}
\tsigma^E_{x' x y' y}
]
\right] 
\;=\;
2^{I_2(X'X Y'Y:E)_\sigma}
\;\leq\;
2^{I_\infty(X'X Y'Y:E)_\sigma}.
\end{array}
\end{equation}
The number of such terms is $M' L' L M$.

Putting Equations~\ref{eq:CMGterm1}, \ref{eq:CMGterm2}, \ref{eq:CMGterm3},
\ref{eq:CMGterm4}, \ref{eq:CMGterm5}, \ref{eq:CMGterm6}, \ref{eq:CMGterm7},
\ref{eq:CMGterm8}, \ref{eq:CMGterm9} back into
Equation~\ref{eq:CMGsquare2}, we get
\begin{equation}
\label{eq:CMGsquare3}
\begin{array}{rcl}
\lefteqn{
\E_{\vec{x'}, \vec{x}, \vec{y'}, \vec{y}}[
\|\tsigma^E_{\vec{x'},  \vec{x}, \vec{y'}, \vec{y}}\|_2^2
] 
} \\
& = &
(L' (L+1) M' (M+1))^{-2} \\
&  &
~~~
{} \cdot
\sum_{a',\hat{a'}=1}^{L'} \sum_{a,\hat{a}=1}^L
\sum_{b',\hat{b'}=1}^{M'} \sum_{b,\hat{b}=1}^M
\E_{\vec{x'}, \vec{x}, \vec{y'}, \vec{y}}\left[
\Tr[
\tsigma^E_{x'_{a'} x_{a'a} y'_{b'} y_{b'b}}
\tsigma^E_{x'_{\hat{a'}} x_{\hat{a'}\hat{a}} y'_{\hat{b'}} 
	   y_{\hat{b'}\hat{b}}}
]
\right] \\
& \leq &
(L' (L+1) M' (M+1))^{-2} \\
&  &
~~~~~~
\left(
(L'-1)L' (M'-1)M' L^2 M^2 \cdot 1 +
L' (M'-1)M' L(L-1) M^2 \cdot 2^{I_2(X':E)_\sigma} 
\right. \\
&  &
~~~~~~~
{} +
(L'-1) L' M' L^2  M(M-1) \cdot 2^{I_2(Y':E)_\sigma} \\
&  &
~~~~~~~
{} +
(L'-1) L' L^2  M' M \cdot 2^{I_2(Y'Y:E)_\sigma} \\
&  &
~~~~~~~
{} +
L' L (M'-1) M' M^2 \cdot 2^{I_2(X'X:E)_\sigma} \\
&  &
~~~~~~~
{} +
L' M' L(L-1)  M(M-1) \cdot 2^{I_2(X'Y':E)_\sigma} \\
&  &
~~~~~~~
{} +
L' M' L(L-1)  M \cdot 2^{I_2(X'Y'Y:E)_\sigma} \\
&  &
~~~~~~~
{} +
L' M' L  M(M-1) \cdot 2^{I_2(X'X Y':E)_\sigma} \\
&  &
~~~~~~~
\left.
{} +
L' M' L M \cdot 2^{I_2(X'X Y'Y:E)_\sigma}
\right) \\
& \leq &
1 + (L')^{-1} \cdot 2^{I_2(X':E)_\sigma} +
(M')^{-1} \cdot 2^{I_2(Y':E)_\sigma}  +
(L'M')^{-1} \cdot 2^{I_2(X'Y':E)_\sigma}  \\
& &
~~
{} +
(L'M'M)^{-1} \cdot 2^{I_2(X'Y'Y:E)_\sigma} +
(L'L M)^{-1} \cdot 2^{I_2(X'XY':E)_\sigma} \\
& &
~~
{} +
(L'L M'M)^{-1} \cdot 2^{I_2(X'XY'Y:E)_\sigma}.
\end{array}
\end{equation}
Putting Equation~\ref{eq:CMGsquare3} back into
Equation~\ref{eq:CMGsquare1}, we get
\begin{eqnarray*}
\CMG^2
& \leq &
(L')^{-1} \cdot 2^{I_2(X':E)_\sigma} +
(M')^{-1} \cdot 2^{I_2(Y':E)_\sigma}  \\
&  &
~~
{} +
(L'L)^{-1} \cdot 2^{I_2(X'X:E)_\sigma} +
(M'M)^{-1} \cdot 2^{I_2(Y'Y:E)_\sigma} +
(L'M')^{-1} \cdot 2^{I_2(X'Y':E)_\sigma}  \\
& &
~~
{} +
(L'M'M)^{-1} \cdot 2^{I_2(X'Y'Y:E)_\sigma} +
(L'L M)^{-1} \cdot 2^{I_2(X'XY':E)_\sigma} \\
& &
~~
{} +
(L'L M'M)^{-1} \cdot 2^{I_2(X'XY'Y:E)_\sigma}.
\end{eqnarray*}
We have thus proved the following non-smooth CMG covering lemma.
\begin{lemma}
\label{lem:nonsmoothCMGcovering}
For the CMG covering problem, let
\begin{eqnarray*}
\log L'
& > &
I_2(X':E)_\sigma + \log \epsilon^{-2}, \\
\log M'
& > &
I_2(Y':E)_\sigma + \log \epsilon^{-2}, \\
\log L' + \log L
& > &
I_2(X'X:E)_\sigma + \log \epsilon^{-2}, \\
\log M' + \log M
& > &
I_2(Y'Y:E)_\sigma + \log \epsilon^{-2}, \\
\log L' + \log M'
& > &
I_2(X'Y':E)_\sigma + \log \epsilon^{-2}, \\
\log L' + \log M' + \log M
& > &
I_2(X'Y'Y:E)_\sigma + \log \epsilon^{-2}, \\
\log L' + \log M' + \log L
& > &
I_2(X'XY':E)_\sigma + \log \epsilon^{-2}, \\
\log L' + \log M' + \log L + \log M
& > &
I_2(X'XY'Y:E)_\sigma + \log \epsilon^{-2}.
\end{eqnarray*}
Then, 
$
\CMG < 3 \epsilon.
$
\end{lemma}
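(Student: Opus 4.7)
My plan is essentially to assemble what has already been done: combine the bound on $\CMG^2$ derived immediately before the lemma with the eight hypothesized rate conditions, and then take square roots. First I would re-read the final inequality of the preceding calculation as
\begin{equation*}
\CMG^2 \leq \sum_{S} \lambda_S \cdot 2^{I_2(S:E)_\sigma},
\end{equation*}
where $S$ ranges over the eight subsystem groupings $X'$, $Y'$, $X'X$, $Y'Y$, $X'Y'$, $X'Y'Y$, $X'XY'$, $X'XY'Y$, and $\lambda_S$ is the corresponding reciprocal product of sample-size parameters, e.g.\ $\lambda_{X'} = (L')^{-1}$ and $\lambda_{X'XY'Y} = (L'LM'M)^{-1}$.

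Next I would observe that each of the eight hypotheses of the lemma is exactly the statement $\log \lambda_S^{-1} > I_2(S:E)_\sigma + \log \epsilon^{-2}$, equivalently $\lambda_S \cdot 2^{I_2(S:E)_\sigma} < \epsilon^{2}$. Summing over the eight terms then gives $\CMG^2 < 8 \epsilon^2$, hence $\CMG < 2\sqrt{2}\,\epsilon < 3\epsilon$, as claimed.

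Since the substantive derivation (the matrix-weighted Cauchy--Schwarz step from Fact~\ref{fact:matrixCauchySchwarz}, the expansion of $\|\tsigma^E_{\vec{x'}, \vec{x}, \vec{y'}, \vec{y}}\|_2^2$ as a sum over ordered pairs of index tuples, and the nine-case classification depending on which of the pairs $(a',\hat{a'})$, $(a,\hat{a})$, $(b',\hat{b'})$, $(b,\hat{b})$ collide) was carried out in detail before the lemma, no genuine obstacle remains. The only points of vigilance are bookkeeping: checking that the combinatorial counts for each case, when divided by $(L'(L+1)M'(M+1))^2$, really collapse to the claimed $\lambda_S$ factors (this is where replacing $L+1, M+1$ by $L, M$ in the numerator is used to absorb the slack), and that the residual ``all distinct'' contribution of $1$ from Equation~\ref{eq:CMGterm1} cancels the $-1$ produced by the Cauchy--Schwarz step in Equation~\ref{eq:CMGsquare1} so that only the eight nontrivial mutual-information terms survive. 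Once both are verified, the conclusion $\CMG < 3\epsilon$ follows by a single square root.
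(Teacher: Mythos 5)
Your proposal is correct and coincides with the paper's argument: the paper has no separate proof environment for this lemma, but the derivation culminating in the bound $\CMG^2 \leq \sum_S \lambda_S\, 2^{I_2(S:E)_\sigma}$ over the eight subsets, together with the observation that each hypothesis is precisely $\lambda_S\, 2^{I_2(S:E)_\sigma} < \epsilon^2$, gives $\CMG^2 < 8\epsilon^2$ and hence $\CMG < 2\sqrt{2}\,\epsilon < 3\epsilon$. Your two points of vigilance (collapsing the combinatorial counts to the $\lambda_S$ factors via $(L+1)^2 \geq L^2$, etc., and cancelling the $+1$ from the all-distinct case against the $-1$ from the Cauchy--Schwarz step) are exactly the bookkeeping the paper performs.
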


For many applications to information theory, it is desirable to obtain
a {\em fully smooth version} of the above covering results. This is
because a fully smooth one shot inner bound, besides being larger than
the non-smooth inner bound,  easily implies finite block
length as well as asymptotic iid inner bounds.
In a fully smooth statement, the 
$D_2(\cdot \| \cdot)$ quantities on the right hand sides of the
inequalities describing the inner bounds are replaced by their smooth
counterparts $D^\epsilon_2(\cdot \| \cdot)$. Let $0 < \epsilon < 1$.
The $\epsilon$-smooth R\'{e}nyi divergences are defined as follows:
\[
D_2^\epsilon(\alpha \| \beta) 
:=  
\min_{\alpha' \approx_\epsilon \alpha}
D_2(\alpha' \| \beta), ~~
D_\infty^\epsilon(\alpha \| \beta) 
:= 
\min_{\alpha' \approx_\epsilon \alpha}
D_\infty(\alpha' \| \beta), 
\]
where the minimisation is over all subnormalised density matrices 
$\alpha'$ satisfying $\|\alpha' - \alpha\|_1 \leq \epsilon (\Tr\alpha)$. 
The smooth mutual informations are defined accordingly.
\[
I_2^\epsilon(X' : E)_\sigma
:=  
D_2^\epsilon(\sigma^{X'E} \| \sigma^{X'} \otimes \sigma^{E}), \\
I_\infty^\epsilon(X' : E)_\sigma
:= 
D_\infty^\epsilon(\sigma^{X'E} \| \sigma^{X'} \otimes \sigma^{E}).
\]

We remark
that several alternate definitions of non-smooth and smooth R\'{e}nyi
divergences have been provided in the literature. However the work
of \cite{Jain:minimax} shows that all the smooth R\'{e}nyi-$p$ 
divergences for $p > 1$ are essentially equivalent to 
$D^\epsilon_\infty$ up to tweaks in the smoothing parameter $\epsilon$
and dimension independent additive corrections that are polynomial in
$\log \epsilon^{-1}$. So henceforth, many of our fully smooth 
convex split and
decoupling results shall be stated in terms of $D^\epsilon_\infty$ or 
quantities derived from $D^\epsilon_\infty$ like $I^\epsilon_\infty$.

One can now immediately see the problems that arise in making the proof
of Lemma~\ref{lem:nonsmoothCMGcovering} 
fully smooth. Take for example the expressions
$
\E_{x'}[\Tr[(\tsigma^E_{x'})^2]],
$
$
\E_{y'}[\Tr[(\tsigma^E_{y'})^2]]
$
arising in the proof. These
expressions get upper bounded by
$2^{I_2(X':E)_\sigma}$, $2^{I_2(Y':E)_\sigma}$
respectively during the proof.
If we now want to upper bound them by their smooth counterparts
$2^{I_2^\epsilon(X':E)_\sigma}$, $2^{I_2^\epsilon(Y':E)_\sigma}$,
we need to prove the existence of a {\bf single} subnormalised
density matrix 
$\bsigma^{X'XY'YE} \approx_{f(\epsilon)} \sigma^{X'XY'YE}$
for some dimension independent function $f(\epsilon)$ that goes to
zero as $\epsilon \rightarrow 0$ such that
\begin{eqnarray*}
D_2(\bsigma^{X'E} \| \bsigma^{X'} \otimes \bsigma^E) 
& \leq &
D_2^\epsilon(\sigma^{X'E} \| \sigma^{X'} \otimes \sigma^E)
+ g(\epsilon^{-1}), \\
D_2(\bsigma^{Y'E} \| \bsigma^{Y'} \otimes \bsigma^E) 
& \leq &
D_2^\epsilon(\sigma^{X'E} \| \sigma^{Y'} \otimes \sigma^E)
+ g(\epsilon^{-1}),
\end{eqnarray*}
for some `not too large' function $g(\epsilon^{-1})$.
One could then run the proof of  
Lemma~\ref{lem:nonsmoothCMGcovering} after
first changing the control state from 
$\sigma^{X'XY'YE}$ to $\bsigma^{X'XY'YE}$ and paying an additive
price of $f(\epsilon)$ in the trace distance. In the fully classical
setting, it is indeed possible to find a single state
(classical probability distribution)
$\bsigma^{X'X Y'Y E}$ satisfying all the eight smoothing requirements
of a putative fully smooth version of the inner bound of
Lemma~\ref{lem:nonsmoothCMGcovering}. Such a state is obtained by
taking the optimum states in each of the eight smoothings, extending
them if required to the full alphabet $X'X Y'Y E$, and then taking 
the `intersection' of these six probability subdistributions. As 
in \cite{sen:oneshot}, we define the 
intersection of two probability subdistributions $p^X$, $q^X$ on the same
alphabet $X$ as
\begin{equation}
\label{eq:classicalintersection}
(p \cap q)^X(x) := \min\{p(x), q(x)\}.
\end{equation}
It is easy to see that 
\[
\|(p \cap q)^X - p^X\|_1 \leq \|p^X - q^X\|_1, ~~
\|(p \cap q)^X - q^X\|_1 \leq \|p^X - q^X\|_1.
\]
However, there is no suitable notion of intersection of subnormalised
density matrices, fundamentally due to the non-commutative nature of
matrix multiplication. The {\em simultaneous smoothing conjecture} 
asserts that there is indeed a quantum state $\bsigma^{X'X Y'Y E}$
satisfying the eight smoothing requirements for a suitable $f(\epsilon)$,
$g(\epsilon^{-1})$. At this level of generality, it remains open till
this date.

We now explain how the {\em telescoping}, aka 
{\em mean zero decomposition},
idea of Cheng, Gao and Berta \cite{Cheng:convexsplit} can be further
extended in order to prove a fully smooth version of
Lemmas~\ref{lem:nonsmoothCMGcovering}.
The telescoping idea bypasses the intersection or simultaneous smoothing
bottleneck for convex split style lemmas. This is done by a clever
use of the triangle inequality for trace distance, upper bounding
the $\CMG$ quantity in Equation~\ref{eq:CMGdef2} by a sum of 
eight trace distances involving suitable
newly defined density matrices. This sum of trace distances gives
rise to the name telescoping. Each individual trace distance term 
satisfies a crucial property, viz. all but one of its natural expectations
evaluate to zero for that term. This feature gives rise to the name
mean-zero decomposition. For the CMG covering problem, we take
Equations~\ref{eq:CMGdef1} and \ref{eq:CMGdef2} and 
write the following telescoping cum mean zero inequality.
\begin{eqnarray*}
\lefteqn{
\CMG
} \\
& = &
\E_{\vec{x'}, \vec{x}, \vec{y'}, \vec{y}}\left[
\left\|
(L' (L+1) M' (M+1))^{-1}
\sum_{a'=1}^{L'} \sum_{a=1}^L
\sum_{b'=1}^{M'} \sum_{b=1}^M
\sigma^E_{x'_{a'} x_{a'a} y'_{b'} y_{b'b}}
 - \sigma^E
\right\|_1
\right] \\
& \leq &
\E_{\vec{x'}, \vec{x}, \vec{y'}, \vec{y}}\left[
\left\|
(L' (L+1) M' (M+1))^{-1}
\sum_{a'=1}^{L'} \sum_{a=1}^L
\sum_{b'=1}^{M'} \sum_{b=1}^M
\sigma^E_{x'_{a'} x_{a'a} y'_{b'} y_{b'b}} 
\right.
\right. \\
&   &
~~~~~~~~~~~~~~~
{} - 
(L' (L+1) M' (M+1))^{-1}
\sum_{a'=1}^{L'} \sum_{a=1}^L
\sum_{b'=1}^{M'} \sum_{b=1}^M
\sigma^E_{x'_{a'} y'_{b'} y_{b'b}} \\
&  &
~~~~~~~~~~~~~~~
{} -
(L' (L+1) M' (M+1))^{-1}
\sum_{a'=1}^{L'} \sum_{a=1}^L
\sum_{b'=1}^{M'} \sum_{b=1}^M
\sigma^E_{x'_{a'} x_{a'a} y'_{b'}} \\
&  &
~~~~~~~~~~~~~~~
\left.
\left.
{} + 
(L' (L+1) M' (M+1))^{-1}
\sum_{a'=1}^{L'} \sum_{a=1}^L
\sum_{b'=1}^{M'} \sum_{b=1}^M
\sigma^E_{x'_{a'} y'_{b'}} 
\right\|_1
\right] \\
&  &
{} +
\E_{\vec{x'}, \vec{x}, \vec{y'}, \vec{y}}\left[
\left\|
(L' (L+1) M' (M+1))^{-1}
\sum_{a'=1}^{L'} \sum_{a=1}^L
\sum_{b'=1}^{M'} \sum_{b=1}^M
\sigma^E_{x'_{a'} y'_{b'} y_{b'b}} 
\right.
\right. \\
&   &
~~~~~~~~~~~~~~~
{} - 
(L' (L+1) M' (M+1))^{-1}
\sum_{a'=1}^{L'} \sum_{a=1}^L
\sum_{b'=1}^{M'} \sum_{b=1}^M
\sigma^E_{y'_{b'} y_{b'b}} \\ 
&   &
~~~~~~~~~~~~~~~
{} - 
(L' (L+1) M' (M+1))^{-1}
\sum_{a'=1}^{L'} \sum_{a=1}^L
\sum_{b'=1}^{M'} \sum_{b=1}^M
\sigma^E_{x'_{a'} y'_{b'}} \\ 
&   &
~~~~~~~~~~~~~~~
\left.
\left.
{} + 
(L' (L+1) M' (M+1))^{-1}
\sum_{a'=1}^{L'} \sum_{a=1}^L
\sum_{b'=1}^{M'} \sum_{b=1}^M
\sigma^E_{y'_{b'}}  
\right\|_1
\right] \\
&  &
{} +
\E_{\vec{x'}, \vec{x}, \vec{y'}, \vec{y}}\left[
\left\|
(L' (L+1) M' (M+1))^{-1}
\sum_{a'=1}^{L'} \sum_{a=1}^L
\sum_{b'=1}^{M'} \sum_{b=1}^M
\sigma^E_{x'_{a'} x_{a'a} y'_{b'}} 
\right.
\right. \\
&   &
~~~~~~~~~~~~~~~
{} - 
(L' (L+1) M' (M+1))^{-1}
\sum_{a'=1}^{L'} \sum_{a=1}^L
\sum_{b'=1}^{M'} \sum_{b=1}^M
\sigma^E_{x'_{a'} x_{a'a}} \\ 
&   &
~~~~~~~~~~~~~~~
{} - 
(L' (L+1) M' (M+1))^{-1}
\sum_{a'=1}^{L'} \sum_{a=1}^L
\sum_{b'=1}^{M'} \sum_{b=1}^M
\sigma^E_{x'_{a'} y'_{b'}} \\ 
&   &
~~~~~~~~~~~~~~~
\left.
\left.
{} + 
(L' (L+1) M' (M+1))^{-1}
\sum_{a'=1}^{L'} \sum_{a=1}^L
\sum_{b'=1}^{M'} \sum_{b=1}^M
\sigma^E_{x'_{a'}}  
\right\|_1
\right] \\
&  &
{} +
\E_{\vec{x'}, \vec{x}, \vec{y'}, \vec{y}}\left[
\left\|
(L' (L+1) M' (M+1))^{-1}
\sum_{a'=1}^{L'} \sum_{a=1}^L
\sum_{b'=1}^{M'} \sum_{b=1}^M
\sigma^E_{x'_{a'} y'_{b'}} 
\right.
\right. 
\hspace*{4cm}
(\arabic{equation})
\label{eq:CMGsmooth1}
\stepcounter{equation}
\\
&   &
~~~~~~~~~~~~~~~
{} - 
(L' (L+1) M' (M+1))^{-1}
\sum_{a'=1}^{L'} \sum_{a=1}^L
\sum_{b'=1}^{M'} \sum_{b=1}^M
\sigma^E_{x'_{a'}} \\ 
&   &
~~~~~~~~~~~~~~~
{} - 
(L' (L+1) M' (M+1))^{-1}
\sum_{a'=1}^{L'} \sum_{a=1}^L
\sum_{b'=1}^{M'} \sum_{b=1}^M
\sigma^E_{y'_{b'}} \\ 
&   &
~~~~~~~~~~~~~~~
\left.
\left.
{} + 
(L' (L+1) M' (M+1))^{-1}
\sum_{a'=1}^{L'} \sum_{a=1}^L
\sum_{b'=1}^{M'} \sum_{b=1}^M
\sigma^E  
\right\|_1
\right] \\
&  &
{} +
\E_{\vec{x'}, \vec{x}, \vec{y'}, \vec{y}}\left[
\left\|
(L' (L+1) M' (M+1))^{-1}
\sum_{a'=1}^{L'} \sum_{a=1}^L
\sum_{b'=1}^{M'} \sum_{b=1}^M
\sigma^E_{x'_{a'} x_{a'a}} 
\right.
\right. \\
&   &
~~~~~~~~~~~~~~~
\left.
\left.
{} - 
(L' (L+1) M' (M+1))^{-1}
\sum_{a'=1}^{L'} \sum_{a=1}^L
\sum_{b'=1}^{M'} \sum_{b=1}^M
\sigma^E_{x'_{a'}} 
\right\|_1
\right] \\
&  &
{} +
\E_{\vec{x'}, \vec{x}, \vec{y'}, \vec{y}}\left[
\left\|
(L' (L+1) M' (M+1))^{-1}
\sum_{a'=1}^{L'} \sum_{a=1}^L
\sum_{b'=1}^{M'} \sum_{b=1}^M
\sigma^E_{y'_{b'} y_{b'b}} 
\right.
\right. \\
&   &
~~~~~~~~~~~~~~~
\left.
\left.
{} - 
(L' (L+1) M' (M+1))^{-1}
\sum_{a'=1}^{L'} \sum_{a=1}^L
\sum_{b'=1}^{M'} \sum_{b=1}^M
\sigma^E_{y'_{b'}} 
\right\|_1
\right] \\
&  &
{} +
\E_{\vec{x'}, \vec{x}, \vec{y'}, \vec{y}}\left[
\left\|
(L' (L+1) M' (M+1))^{-1}
\sum_{a'=1}^{L'} \sum_{a=1}^L
\sum_{b'=1}^{M'} \sum_{b=1}^M
\sigma^E_{x'_{a'}} 
\right.
\right. \\
&   &
~~~~~~~~~~~~~~~
\left.
\left.
{} - 
(L' (L+1) M' (M+1))^{-1}
\sum_{a'=1}^{L'} \sum_{a=1}^L
\sum_{b'=1}^{M'} \sum_{b=1}^M
\sigma^E
\right\|_1
\right] \\
&  &
{} +
\E_{\vec{x'}, \vec{x}, \vec{y'}, \vec{y}}\left[
\left\|
(L' (L+1) M' (M+1))^{-1}
\sum_{a'=1}^{L'} \sum_{a=1}^L
\sum_{b'=1}^{M'} \sum_{b=1}^M
\sigma^E_{y'_{b'}} 
\right.
\right. \\
&   &
~~~~~~~~~~~~~~~
\left.
\left.
{} - 
(L' (L+1) M' (M+1))^{-1}
\sum_{a'=1}^{L'} \sum_{a=1}^L
\sum_{b'=1}^{M'} \sum_{b=1}^M
\sigma^E
\right\|_1
\right].
\end{eqnarray*}

In order to get a fully smooth version
of Lemma \ref{lem:nonsmoothCMGcovering}, we observe that
each individual trace distance term in the upper bound of
Equation~\ref{eq:CMGsmooth1} can then
be further perturbed differently according to the specific smoothing
requirement for that term at a small additive price in the trace distance,
using monotonicity of smooth R\'{e}nyi divergence. 
This is a new observation of this paper which gives us the 
power to bypass simultaneous smoothing.
Each perturbed trace distance term is then further
upper bounded by the Cauchy Schwarz inequality with weighting matrix
$\sigma^E$.
This gives rise to the eight different inequalities in the statement 
of the
fully smooth version of Lemma \ref{lem:nonsmoothCMGcovering}.

Let $i \in [8]$.
For any fixed choice of $x'$, $x$, $y'$, $y$, let 
$\sigma^E_{x'x y'y}(i)$
denote the subnormalised density matrix achieving the optimum in
the smoothed version of the $i$th inequality in the statement of
Lemma \ref{lem:nonsmoothCMGcovering}. For example, for $i = 3$,
the matrix $\sigma^E_{x'x y'y}(3)$ arises from the optimising 
subnormalised density matrix 
$\sigma^{X'XY'YE}(3) \approx_{\epsilon} \sigma^{X'XY'YE}$ in the 
definition of
\[
I_2^\epsilon(X'X:E)_\sigma :=
D_2^\epsilon(\sigma^{X'XE} \| \sigma^{X'X} \otimes \sigma^E) =
D_2(\sigma^{X'XE}(3) \| \sigma^{X'X} \otimes \sigma^E).
\]
Note that formally the optimising matrix in the above expression
for the smooth mutual information is defined only for classical indices
$x', x$ i.e. it is of the form $\sigma^E(3)_{x'x}$. Nevertheless, 
we can extend the notation to indices $x',x,y',y$ by defining
$\sigma^E_{x'x y'y}(3) := \sigma^E(3)_{x'x}$.

We now perturb the $i$th term in the telescoping sum in
Equation~\ref{eq:CMGsmooth1} above according
to $\sigma^E_{x'x y'y}(i)$, paying a total additive price of
$24\epsilon$ in the trace distance.
\begin{eqnarray*}
\lefteqn{\CMG} \\
& \leq &
24 \epsilon +
(L' (L+1) M' (M+1))^{-1} \cdot {} \\
&   &
~~~~~~
\E_{\vec{x'}, \vec{x}, \vec{y'}, \vec{y}}\left[
\left\|
\sum_{a'=1}^{L'} \sum_{a=1}^L \sum_{b'=1}^{M'} \sum_{b=1}^M
(\sigma^E_{x'_{a'} x_{a'a} y'_{b'} y_{b'b}}(8) - 
\sigma^E_{x'_{a'} y'_{b'} y_{b'b}}(8) -
\sigma^E_{x'_{a'} x_{a'a} y'_{b'}}(8) + 
\sigma^E_{x'_{a'} y'_{b'}}(8))
\right\|_1
\right] \\
&  &
{} +
(L' (L+1) M' (M+1))^{-1} \cdot {} \\
&  &
~~~~~~~~~~~
\E_{\vec{x'}, \vec{x}, \vec{y'}, \vec{y}}\left[
\left\|
\sum_{a'=1}^{L'} \sum_{a=1}^L
\sum_{b'=1}^{M'} \sum_{b=1}^M
(
\sigma^E_{x'_{a'} y'_{b'} y_{b'b}}(6) - 
\sigma^E_{y'_{b'} y_{b'b}}(6) - 
\sigma^E_{x'_{a'} y'_{b'}}(6) + 
\sigma^E_{y'_{b'}}(6))
\right\|_1
\right] \\
&  &
{} +
(L' (L+1) M' (M+1))^{-1} \cdot {} \\
& &
~~~~~~~~~~~
\E_{\vec{x'}, \vec{x}, \vec{y'}, \vec{y}}\left[
\left\|
\sum_{a'=1}^{L'} \sum_{a=1}^L
\sum_{b'=1}^{M'} \sum_{b=1}^M
(\sigma^E_{x'_{a'} x_{a'a} y'_{b'}}(7) - 
\sigma^E_{x'_{a'} x_{a'a}}(7) - 
\sigma^E_{x'_{a'} y'_{b'}}(7) + 
\sigma^E_{x'_{a'}}(7))
\right\|_1
\right] \\
&  &
{} + (L' (L+1) M' (M+1))^{-1} \cdot {} \\
& &
~~~~~~~~~~~
\E_{\vec{x'}, \vec{x}, \vec{y'}, \vec{y}}\left[
\left\|
\sum_{a'=1}^{L'} \sum_{a=1}^L
\sum_{b'=1}^{M'} \sum_{b=1}^M
(\sigma^E_{x'_{a'} y'_{b'}}(5) - 
\sigma^E_{x'_{a'}}(5) - 
\sigma^E_{y'_{b'}}(5)  + 
\sigma^E(5))
\right\|_1
\right] 
\hspace*{2.5cm}
(\arabic{equation})
\label{eq:CMGsmooth2}
\stepcounter{equation}
\\
&  &
{} + 
(L' (L+1) M' (M+1))^{-1} 
 \cdot
\E_{\vec{x'}, \vec{x}, \vec{y'}, \vec{y}}\left[
\left\|
\sum_{a'=1}^{L'} \sum_{a=1}^L
\sum_{b'=1}^{M'} \sum_{b=1}^M
(\sigma^E_{x'_{a'} x_{a'a}}(3) - \sigma^E_{x'_{a'}}(3) 
\right\|_1
\right] \\
&  &
{} +
(L' (L+1) M' (M+1))^{-1} \cdot
\E_{\vec{x'}, \vec{x}, \vec{y'}, \vec{y}}\left[
\left\|
\sum_{a'=1}^{L'} \sum_{a=1}^L
\sum_{b'=1}^{M'} \sum_{b=1}^M
(\sigma^E_{y'_{b'} y_{b'b}}(4) - \sigma^E_{y'_{b'}}(4))
\right\|_1
\right] \\
&  &
{} +
(L' (L+1) M' (M+1))^{-1} \cdot
\E_{\vec{x'}, \vec{x}, \vec{y'}, \vec{y}}\left[
\left\|
\sum_{a'=1}^{L'} \sum_{a=1}^L
\sum_{b'=1}^{M'} \sum_{b=1}^M
(\sigma^E_{x'_{a'}}(1) - \sigma^E(1))
\right\|_1
\right] \\
&  &
{} +
(L' (L+1) M' (M+1))^{-1} \cdot
\E_{\vec{x'}, \vec{x}, \vec{y'}, \vec{y}}\left[
\left\|
\sum_{a'=1}^{L'} \sum_{a=1}^L
\sum_{b'=1}^{M'} \sum_{b=1}^M
(\sigma^E_{y'_{b'}}(2) - \sigma^E(2))
\right\|_1
\right].
\end{eqnarray*}

We now upper bound each term in the telescoping sum above using the
Cauchy-Schwarz inequality Fact~\ref{fact:matrixCauchySchwarz} with 
weighting matrix $\sigma^E$. Define 
$
\tsigma^E_{x'_{a'} x_{a'a} y'_{b'} y_{b'b}}(8) :=
(\sigma^E)^{-1/4} \circ
\sigma^E_{x'_{a'} x_{a'a} y'_{b'} y_{b'b}}(8),
$
and other weighted states similarly.
Thus,
\begin{equation}
\label{eq:CMGsmooth3}
\begin{array}{rcl}
\lefteqn{
(L' (L+1) M' (M+1)) \cdot (\CMG - 24 \epsilon) 
} \\
& \leq &
\E_{\vec{x'}, \vec{x}, \vec{y'}, \vec{y}}\left[
\left\|
\sum_{a'=1}^{L'} \sum_{a=1}^L \sum_{b'=1}^{M'} \sum_{b=1}^M
(\tsigma^E_{x'_{a'} x_{a'a} y'_{b'} y_{b'b}}(8) - 
\tsigma^E_{x'_{a'} y'_{b'} y_{b'b}}(8) 
\right.
\right. \\
&  &
~~~~~~~~~~~~~~~~~~~~~~~~~~~~~~~~~~~~~~~~~~~~~~~~~
\left.
\left.
{} -
\tsigma^E_{x'_{a'} x_{a'a} y'_{b'}}(8) + 
\tsigma^E_{x'_{a'} y'_{b'}}(8))
\right\|_2
\right] \\
&  &
{} +
\E_{\vec{x'}, \vec{x}, \vec{y'}, \vec{y}}\left[
\left\|
\sum_{a'=1}^{L'} \sum_{a=1}^L
\sum_{b'=1}^{M'} \sum_{b=1}^M
(
\tsigma^E_{x'_{a'} y'_{b'} y_{b'b}}(6) - 
\tsigma^E_{y'_{b'} y_{b'b}}(6) 
\right.
\right. \\
&  &
~~~~~~~~~~~~~~~~~~~~~~~~~~~~~~~~~~~~~~~~~~~~~~~~~~~~
\left.
\left.
{} - 
\tsigma^E_{x'_{a'} y'_{b'}}(6) + 
\tsigma^E_{y'_{b'}}(6))
\right\|_2
\right] \\
&  &
{} +
\E_{\vec{x'}, \vec{x}, \vec{y'}, \vec{y}}\left[
\left\|
\sum_{a'=1}^{L'} \sum_{a=1}^L
\sum_{b'=1}^{M'} \sum_{b=1}^M
(\tsigma^E_{x'_{a'} x_{a'a} y'_{b'}}(7) - 
\tsigma^E_{x'_{a'} x_{a'a}}(7) 
\right.
\right. \\
&  &
~~~~~~~~~~~~~~~~~~~~~~~~~~~~~~~~~~~~~~~~~~~~~~~~~~~~
\left.
\left.
{} - 
\tsigma^E_{x'_{a'} y'_{b'}}(7) + 
\tsigma^E_{x'_{a'}}(7))
\right\|_2
\right] \\
& &
{} +
\E_{\vec{x'}, \vec{x}, \vec{y'}, \vec{y}}\left[
\left\|
\sum_{a'=1}^{L'} \sum_{a=1}^L
\sum_{b'=1}^{M'} \sum_{b=1}^M
(\tsigma^E_{x'_{a'} y'_{b'}}(5) - 
\tsigma^E_{x'_{a'}}(5) - 
\tsigma^E_{y'_{b'}}(5)  + 
\tsigma^E(5))
\right\|_2
\right] \\
&  &
{} + 
\E_{\vec{x'}, \vec{x}, \vec{y'}, \vec{y}}\left[
\left\|
\sum_{a'=1}^{L'} \sum_{a=1}^L
\sum_{b'=1}^{M'} \sum_{b=1}^M
(\tsigma^E_{x'_{a'} x_{a'a}}(3) - \tsigma^E_{x'_{a'}}(3) 
\right\|_2
\right] \\
&  &
{} +
\E_{\vec{x'}, \vec{x}, \vec{y'}, \vec{y}}\left[
\left\|
\sum_{a'=1}^{L'} \sum_{a=1}^L
\sum_{b'=1}^{M'} \sum_{b=1}^M
(\tsigma^E_{y'_{b'} y_{b'b}}(4) - \tsigma^E_{y'_{b'}}(4))
\right\|_2
\right] \\
&  &
{} +
\E_{\vec{x'}, \vec{x}, \vec{y'}, \vec{y}}\left[
\left\|
\sum_{a'=1}^{L'} \sum_{a=1}^L
\sum_{b'=1}^{M'} \sum_{b=1}^M
(\tsigma^E_{x'_{a'}}(1) - \tsigma^E(1))
\right\|_2
\right] \\
& &
{} +
\E_{\vec{x'}, \vec{x}, \vec{y'}, \vec{y}}\left[
\left\|
\sum_{a'=1}^{L'} \sum_{a=1}^L
\sum_{b'=1}^{M'} \sum_{b=1}^M
(\tsigma^E_{y'_{b'}}(2) - \tsigma^E(2))
\right\|_2
\right].
\end{array}
\end{equation}

Consider a term like the first term in the above telescoping inequality.
We can open it up as
\begin{eqnarray*}
\lefteqn{
\left(
\E_{\vec{x'}, \vec{x}, \vec{y'}, \vec{y}}\left[
\left\|
\sum_{a'=1}^{L'} \sum_{a=1}^L \sum_{b'=1}^{M'} \sum_{b=1}^M
(\tsigma^E_{x'_{a'} x_{a'a} y'_{b'} y_{b'b}}(8) - 
\tsigma^E_{x'_{a'} y'_{b'} y_{b'b}}(8) -
\tsigma^E_{x'_{a'} x_{a'a} y'_{b'}}(8) + 
\tsigma^E_{x'_{a'} y'_{b'}}(8))
\right\|_2
\right]
\right)^2
} \\
& \leq &
\E_{\vec{x'}, \vec{x}, \vec{y'}, \vec{y}}\left[
\left(
\left\|
\sum_{a'=1}^{L'} \sum_{a=1}^L \sum_{b'=1}^{M'} \sum_{b=1}^M
(\tsigma^E_{x'_{a'} x_{a'a} y'_{b'} y_{b'b}}(8) - 
\tsigma^E_{x'_{a'} y'_{b'} y_{b'b}}(8) -
\tsigma^E_{x'_{a'} x_{a'a} y'_{b'}}(8) + 
\tsigma^E_{x'_{a'} y'_{b'}}(8))
\right\|_2
\right)^2
\right] \\
& \leq &
\sum_{a',\hat{a'}=1}^{L'} \sum_{a,\hat{a}=1}^L 
\sum_{b',\hat{b'}=1}^{M'} \sum_{b,\hat{b}=1}^M 
\hspace*{12.5cm}
(\arabic{equation})
\label{eq:CMGsmooth4} 
\stepcounter{equation} \\
&   &
~~~~~
\E_{\vec{x'}, \vec{x}, \vec{y'}, \vec{y}}\left[
\Tr\left[
(\tsigma^E_{x'_{a'} x_{a'a} y'_{b'} y_{b'b}}(8) - 
\tsigma^E_{x'_{a'} y'_{b'} y_{b'b}}(8) -
\tsigma^E_{x'_{a'} x_{a'a} y'_{b'}}(8) + 
\tsigma^E_{x'_{a'} y'_{b'}}(8)) 
\right.
\right.
\\
&   &
~~~~~~~~~~~~~~~~~~~~~~~~
\left.
\left.
(\tsigma^E_{x'_{\hat{a'}} x_{\hat{a'}\hat{a}} 
	    y'_{\hat{b'}} y_{\hat{b'}\hat{b}}}(8) - 
\tsigma^E_{x'_{\hat{a'}} y'_{\hat{b'}} y_{\hat{b'}\hat{b}}}(8) -
\tsigma^E_{x'_{\hat{a'}} x_{\hat{a'}\hat{a}} y'_{\hat{b'}}}(8) + 
\tsigma^E_{x'_{\hat{a'}} y'_{\hat{b'}}}(8))
\right]
\right].
\end{eqnarray*}
Consider now a term in the above summand for a fixed choice of, say,
$a' = \hat{a'}$, $a = \hat{a}$, $b' \neq \hat{b'}$ and any fixed choice 
of $b$, $\hat{b}$. Its expectation looks like
\begin{eqnarray*}
\lefteqn{
\E_{x', x, y', \hat{y'}, y,\hat{y}}\left[
\Tr\left[
(\tsigma^E_{x' x y' y}(8) - 
\tsigma^E_{x' y' y}(8) -
\tsigma^E_{x' x y'}(8) + 
\tsigma^E_{x' y'}(8)) 
(\tsigma^E_{x' x \hat{y'} \hat{y}}(8) - 
\tsigma^E_{x' \hat{y'} \hat{y}}(8) -
\tsigma^E_{x' x \hat{y'}}(8) + 
\tsigma^E_{x' y'}(8))
\right]
\right]
} \\
& = &
\E_{x', x}\left[
\Tr\left[
\left(
\E_{y', y}[
\tsigma^E_{x' x y' y}(8) - 
\tsigma^E_{x' y' y}(8) -
\tsigma^E_{x' x y'}(8) + 
\tsigma^E_{x' y'}(8)] 
\right) 
\right.
\right.
\\
&  &
~~~~~~~~~~~~~~~
\left.
\left.
\left(
\E_{\hat{y'}, \hat{y}}[
\tsigma^E_{x' x \hat{y'} \hat{y}}(8) - 
\tsigma^E_{x' \hat{y'} \hat{y}}(8) -
\tsigma^E_{x' x \hat{y'}}(8) + 
\tsigma^E_{x' y'}(8)]
\right)
\right]
\right] \\
& = &
\E_{x', x}\left[
\Tr\left[
(\tsigma^E_{x' x}(8) - 
\tsigma^E_{x'}(8) -
\tsigma^E_{x' x}(8) + 
\tsigma^E_{x'}(8)) 
(\tsigma^E_{x' x}(8) - 
\tsigma^E_{x'}(8) -
\tsigma^E_{x' x}(8) + 
\tsigma^E_{x'}(8))
\right]
\right] 
\;=\;
0.
\end{eqnarray*}
A similar situation holds for all cases of fixed choices of
$a'$, $\hat{a'}$, $a$, $\hat{a}$, $b'$, $\hat{b'}$, $b$, $\hat{b}$ 
except the case of $a' = \hat{a'}$, $a = \hat{a}$,
$b' = \hat{b'}$, $b = \hat{b}$. This is the mean zero property originally
described in \cite{Cheng:convexsplit}.
The number of terms in this special case
is $L' L M' M$. For this non-zero special case we have,
\begin{eqnarray*}
\lefteqn{
\E_{x', x, y', y}\left[
\Tr[(\tsigma^E_{x' x y' y}(8) - 
\tsigma^E_{x' y' y}(8) -
\tsigma^E_{x' x y'}(8) + 
\tsigma^E_{x' y'}(8))^2]
\right]
} \\
& = &
\E_{x', x, y', y}\left[
\|\tsigma^E_{x' x y' y}(8) - 
\tsigma^E_{x' y' y}(8) -
\tsigma^E_{x' x y'}(8) + 
\tsigma^E_{x' y'}(8)\|_2^2
\right] \\
& \leq &
\E_{x', x, y', y}\left[
(\|\tsigma^E_{x' x y' y}(8)\|_2 + \|\tsigma^E_{x' y' y}(8)\|_2 +
 \|\tsigma^E_{x' x y'}(8)\|_2 + \|\tsigma^E_{x' y'}(8)\|_2)^2
\right] \\
& \leq &
4 \cdot 
\E_{x', x, y', y}\left[
\|\tsigma^E_{x' x y' y}(8)\|_2^2 + \|\tsigma^E_{x' y' y}(8)\|_2^2 +
\|\tsigma^E_{x' x y'}(8)\|_2^2 + \|\tsigma^E_{x' y'}(8)\|_2^2
\right] \\
& \leq &
16 \cdot 
\E_{x', x, y', y}[\|\tsigma^E_{x' x y' y}(8)\|_2^2]
\;=\;
16 \cdot 
2^{D_2(\bsigma^{X'XY'YE}(8) \| \sigma^{X'XY'Y} \otimes \sigma^E)} \\
& = &
16 \cdot 2^{I_2^\epsilon(X'XY'Y:E)_\sigma},
\end{eqnarray*}
where the first inequality comes from the triangle inequality for
the Schatten-$\ell_2$ norm, the second inequality comes from
Cauchy-Schwarz and the third inequality comes from the monotonicity
of the smooth R\'{e}nyi-2 entropy under trace out. We have thus shown
\begin{eqnarray*}
\lefteqn{
\left(
\E_{\vec{x'}, \vec{x}, \vec{y'}, \vec{y}}\left[
\left\|
\sum_{a'=1}^{L'} \sum_{a=1}^L \sum_{b'=1}^{M'} \sum_{b=1}^M
(\tsigma^E_{x'_{a'} x_{a'a} y'_{b'} y_{b'b}}(8) - 
\tsigma^E_{x'_{a'} y'_{b'} y_{b'b}}(8) -
\tsigma^E_{x'_{a'} x_{a'a} y'_{b'}}(8) + 
\tsigma^E_{x'_{a'} y'_{b'}}(8))
\right\|_2
\right]
\right)^2
} \\
& \leq &
16 \cdot 2^{I_2^\epsilon(X'XY'Y:E)_\sigma}.
~~~~~~~~~~~~~~~~~~~~~~~~~~~~~~~~~~~~~~~~~~~~~~~~~~~~~~~~~~~~~~~~~~~~~
\end{eqnarray*}

In a similar vein, we can bound the other seven terms in the telescoping
sum of Equation~\ref{eq:CMGsmooth3} in terms of their 
respective smooth R\'{e}nyi-2 mutual informations.
We then get
\begin{eqnarray*}
\lefteqn{(\CMG - 24\epsilon)^2} \\
& \leq &
(L'(L+1)M'(M+1))^{-2} \cdot {} \\
&  &
~~~~
(L'LM'M \cdot 16 \cdot 2^{I_2^\epsilon(X'XY'Y:E)_\sigma} +
L'L(L-1) M'M \cdot 16 \cdot 2^{I_2^\epsilon(X'Y'Y:E)_\sigma} \\
&  &
~~~~
{} +
L'LM'M(M-1) \cdot 16 \cdot 2^{I_2^\epsilon(X'XY':E)_\sigma} +
L'L(L-1) M'M(M-1) \cdot 16 \cdot 2^{I_2^\epsilon(X'Y':E)_\sigma} \\
&  &
~~~~
{} +
L'L M'(M'-1)M^2 \cdot 4 \cdot 2^{I_2^\epsilon(X'X:E)_\sigma} +
 (L'-1)L'L^2 M'M \cdot 4 \cdot 2^{I_2^\epsilon(Y'Y:E)_\sigma} \\
&  &
~~~~
{} +
L'L(L-1)(M'-1)M'M^2 \cdot 4 \cdot 2^{I_2^\epsilon(X':E)_\sigma} +
(L'-1)L'L^2 M'M(M-1) \cdot 4 \cdot 2^{I_2^\epsilon(Y':E)_\sigma}) \\
& \leq &
(L'LM'M)^{-1} \cdot 16 \cdot 2^{I_2^\epsilon(X'XY'Y:E)_\sigma} +
(L' M'M)^{-1} \cdot 16 \cdot 2^{I_2^\epsilon(X'Y'Y:E)_\sigma} \\
&  &
~~~~
{} +
 (L'LM')^{-1} \cdot 16 \cdot 2^{I_2^\epsilon(X'XY':E)_\sigma} +
 (L'M')^{-1} \cdot 16 \cdot 2^{I_2^\epsilon(X'Y':E)_\sigma} \\
&  &
~~~~
{} +
 (L'L)^{-1} \cdot 4 \cdot 2^{I_2^\epsilon(X'X:E)_\sigma} +
 (M'M)^{-1} \cdot 4 \cdot 2^{I_2^\epsilon(Y'Y:E)_\sigma} \\
&  &
~~~~
{} +
 (L')^{-1} \cdot 4 \cdot 2^{I_2^\epsilon(X':E)_\sigma} +
 (M')^{-1} \cdot 4 \cdot 2^{I_2^\epsilon(Y':E)_\sigma}.
\end{eqnarray*}
We have thus proved the following fully smooth version of 
Lemma~\ref{lem:nonsmoothCMGcovering}.
\begin{lemma}
\label{lem:smoothCMGcovering}
For the CMG covering problem, let
\begin{eqnarray*}
\log L'
& > &
I_2^\epsilon(X':E)_\sigma + \log \epsilon^{-2}, \\
\log M'
& > &
I_2^\epsilon(Y':E)_\sigma + \log \epsilon^{-2}, \\
\log L' + \log L
& > &
I_2^\epsilon(X'X:E)_\sigma + \log \epsilon^{-2}, \\
\log M' + \log M
& > &
I_2^\epsilon(Y'Y:E)_\sigma + \log \epsilon^{-2}, \\
\log L' + \log M'
& > &
I_2^\epsilon(X'Y':E)_\sigma + \log \epsilon^{-2}, \\
\log L' + \log M' + \log M
& > &
I_2^\epsilon(X'Y'Y:E)_\sigma + \log \epsilon^{-2}, \\
\log L' + \log M' + \log L
& > &
I_2^\epsilon(X'XY':E)_\sigma + \log \epsilon^{-2}, \\
\log L' + \log M' + \log L + \log M
& > &
I_2^\epsilon(X'XY'Y:E)_\sigma + \log \epsilon^{-2}.
\end{eqnarray*}
Then, 
$
\CMG < 31 \epsilon.
$
\end{lemma}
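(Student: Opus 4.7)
The plan is to run essentially the same second-moment calculation as in Lemma~\ref{lem:nonsmoothCMGcovering} but on a carefully telescoped, independently smoothed target, so that each of the eight mutual-information bounds arises from its own optimally smoothed state rather than from a single common one. First I would decompose $\sigma^E_{\vec{x'},\vec{x},\vec{y'},\vec{y}} - \sigma^E$ as a sum of eight signed inclusion--exclusion combinations of marginals, one per constraint in the statement, ranging from the four-term block $\sigma^E_{x'xy'y} - \sigma^E_{x'y'y} - \sigma^E_{x'xy'} + \sigma^E_{x'y'}$ down to the singleton differences $\sigma^E_{x'} - \sigma^E$ and $\sigma^E_{y'} - \sigma^E$. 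The triangle inequality on $\|\cdot\|_1$ gives Equation~\ref{eq:CMGsmooth1} and isolates the overall trace distance as a sum of eight blocks.

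The crucial new step is then to smooth each block independently: inside block $i$, I replace the control state $\sigma$ by the subnormalised density matrix $\sigma(i)$ attaining the infimum in the $i$th smooth mutual information $I_2^\epsilon(\cdot\!:\!E)_\sigma$ on the right-hand side of the statement. Because the eight blocks no longer share a common smoothing state, we sidestep the simultaneous smoothing obstruction entirely. Monotonicity of $D_2^\epsilon$ under partial trace ensures that swapping in $\sigma(i)$ at each marginal of the block costs at most $\epsilon$ in trace norm; totalling four marginal replacements in each of the four four-term blocks and two in each of the four two-term blocks gives a perturbation budget of $24\epsilon$, as recorded in Equation~\ref{eq:CMGsmooth2}.

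On each smoothed block I would then apply the matrix-weighted Cauchy--Schwarz inequality (Fact~\ref{fact:matrixCauchySchwarz}) with weight $\sigma^E$ to pass from $\|\cdot\|_1$ to $\|\cdot\|_2$ on the twisted states $\tsigma^E_\cdot(i) := (\sigma^E)^{-1/4} \circ \sigma^E_\cdot(i)$, square the resulting expectation by Jensen, and expand into a double sum over $(a',\hat{a'},a,\hat{a},b',\hat{b'},b,\hat{b})$. Here the mean-zero property of the inclusion--exclusion decomposition does the real work: for any pairing that is not fully diagonal in the indices appearing in the block, taking the conditional expectation over an unpaired index collapses the signed combination to zero. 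Only the $L'LM'M$ fully diagonal pairings survive, and each is bounded, via triangle inequality and Cauchy--Schwarz in the Schatten-$2$ norm, by at most $16 \cdot 2^{I_2^\epsilon(S_i:E)_\sigma}$ where $S_i$ is the relevant index subset.

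Finally, after dividing by $(L'(L+1)M'(M+1))^2$ and simplifying, the contribution of the $i$th block is at most a constant times $(\prod_{j \in S_i} |S_{i,j}|)^{-1} \cdot 2^{I_2^\epsilon(S_i:E)_\sigma}$, which under the stated hypotheses is at most a constant multiple of $\epsilon^2$. Summing the eight contributions and taking square roots gives $\CMG - 24\epsilon \leq O(\epsilon)$, and one checks that the constants yield $\CMG < 31\epsilon$. The conceptual hurdle is recognising that telescoping followed by per-block smoothing suffices; the main technical nuisance is bookkeeping the $24\epsilon$ perturbation budget together with the counts of surviving diagonal pairings in each block.
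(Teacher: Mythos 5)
Your proposal reproduces the paper's proof essentially verbatim: telescope $\CMG$ into eight signed inclusion--exclusion blocks (Equation~\ref{eq:CMGsmooth1}), smooth each block independently with its own optimizer $\sigma(i)$ at a total trace-norm cost of $24\epsilon$ (Equation~\ref{eq:CMGsmooth2}), pass to Schatten-$\ell_2$ via the $\sigma^E$-weighted Cauchy--Schwarz inequality (Fact~\ref{fact:matrixCauchySchwarz}), and use the mean-zero property of the decomposition to kill all non-diagonal cross terms. The only inaccuracy is minor bookkeeping: the surviving-pair count is $L'LM'M$ with coefficient $16$ only for the four four-term blocks, whereas the four two-term blocks have larger surviving counts (the free index pairs are unconstrained) and a coefficient of $4$; also the cost of substituting a smoothed marginal is controlled by monotonicity of trace distance under partial trace rather than monotonicity of $D_2^\epsilon$ — but none of this affects the structure of the argument, which is exactly the paper's.
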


We can now appreciate the minimum requirement of pairwise independence
in the proofs of both the non-ssmooth as well as smooth CMG covering
lemmas. For example, in order to prove
the mean zero property in an expression like the one below,
which arises when $a' = \hat{a'}$, $b' = \hat{b'}$, $b = \hat{b}$ and
$a \neq \hat{a}$ in Equation~\ref{eq:CMGsmooth4}, we need to use
\begin{eqnarray*}
\lefteqn{
\E_{x', x, \hat{x}, y', y}\left[
\Tr\left[
(\tsigma^E_{x' x y' y}(8) - 
\tsigma^E_{x' y' y}(8) -
\tsigma^E_{x' x y'}(8) + 
\tsigma^E_{x' y'}(8)) 
(\tsigma^E_{x' \hat{x} y' y}(8) - 
\tsigma^E_{x' y' y}(8) -
\tsigma^E_{x' \hat{x} y'}(8) + 
\tsigma^E_{x' y'}(8))
\right]
\right]
} \\
& = &
\E_{x', y', y}\left[
\Tr\left[
\left(
\E_{x|x'}[
\tsigma^E_{x' x y' y}(8) - 
\tsigma^E_{x' y' y}(8) -
\tsigma^E_{x' x y'}(8) + 
\tsigma^E_{x' y'}(8)] 
\right) 
\right.
\right.
\\
&  &
~~~~~~~~~~~~~~~
\left.
\left.
\left(
\E_{\hat{x}|x'}[
\tsigma^E_{x' \hat{x} y' y}(8) - 
\tsigma^E_{x' y' y}(8) -
\tsigma^E_{x' \hat{x} y'}(8) + 
\tsigma^E_{x' y'}(8)]
\right)
\right]
\right] \\
& = &
\E_{x', y', y}\left[
\Tr\left[
(\tsigma^E_{x' y' y}(8) - 
\tsigma^E_{x' y' y}(8) -
\tsigma^E_{x' y'}(8) + 
\tsigma^E_{x' y'}(8)) 
(\tsigma^E_{x' y' y}(8) - 
\tsigma^E_{x' y' y}(8) -
\tsigma^E_{x' y'}(8) + 
\tsigma^E_{x' y'}(8))
\right]
\right] \\
& = &
0.
\end{eqnarray*}
This is only possible when in the obfuscating strategy of Alice,
the distribution of $x_{a' a}$ conditioned
on $x'_{a'}$ is pairwise independent from the distribution of
$x_{a' \hat{a}}$ conditioned on $x'_{a'}$ for any pair $a \neq \hat{a}$.
However if this condition is `slightly broken', then the mean zero
property fails. In a companion paper \cite{Sen:flatten}, we will
show via a different sophisticated technique, independent of telescoping,
how to continue to prove fully smooth inner bounds when pairwise
independence is `slightly broken'. That paper will formally define
what `slightly broken' means in this context.

\section{Fully smooth multipartite convex split via telescoping}
\label{sec:convexsplit}
For completeness, we prove the non-smooth as well the fully smooth
multipartite convex split lemma using a conceptually simple matrix
weighted Cauchy-Schwarz inequality. The proof of the smooth version
additionally uses telescoping.

We first reprove the non-smooth multipartite
convex split lemma of \cite{anshu:slepianwolf}. We note that our proof
uses only the matrix weighted Cauchy-Schwarz inequality, which is 
arguably more elementary than the use of tricky identities involving
Shannon, aka Kullback-Leibler, divergences in the original proof of
\cite{anshu:slepianwolf}. The original proof gave inner bounds in 
terms of the
non-smooth  max divergence quantities $D_\infty(\cdot \| \cdot)$. 
In contrast, our elementary approach gives larger inner bounds in terms
of the non-smooth R\'{e}nyi-2 divergence quantities
$D_2(\cdot \| \cdot) \leq D_\infty(\cdot \| \cdot)$. This shows that
convex split is a $D_2$ property and that there is no dependence on
the number of distinct eigenvalues of the quantum states involved,
unlike the results in some earlier works.

Let $A$, $B$ be positive integers and $X$, $Y$, $M$ be Hilbert spaces.
We will work with the $A$-fold tensor power of $X$ denoted by $X^A$.
For any $a \in [A]$, $X_a$ denotes the $a$th Hilbert space isomorphic
to $X$ and $X^{-a}$ denotes the $(A-1)$-fold tensor product of all 
Hilbert spaces isomorphic to $X$ except $X_a$. Similar definitions hold
for $Y^B$, $Y_b$ and $Y^{-b}$.
\begin{lemma}
\label{lem:nonsmoothconvexsplit}
Let $\rho^{XYM}$ be a subnormalised density matrix.
Let $\alpha^X$, $\beta^Y$ be normalised
density matrices such that $\supp(\rho^X) \leq \supp(\alpha^X)$ and
$\supp(\rho^Y) \leq \supp(\beta^Y)$. 
Define the $A$-fold tensor product states
$
\alpha^{X^A} := (\alpha^X)^A, 
\beta^{Y^B} := (\beta^Y)^B, 
$
and the $(A-1)$-fold tensor product states $\alpha^{X^{-a}}$, 
$\beta^{Y^{-b}}$.
Define the {\em convex split state}
\[
\sigma^{X^A Y^B M} :=
(AB)^{-1} 
\sum_{a=1}^A \sum_{b=1}^B
\rho^{X_a Y_b M} \otimes \alpha^{X^{-a}} \otimes \beta^{Y^{-b}},
\]
and the fully decoupled state 
\[
\tau^{X^A Y^B M} := \alpha^{X^A} \otimes \beta^{Y^B} \otimes \rho^M.
\]
Suppose 
\begin{eqnarray*}
\log A 
& > &
D_2(\rho^{XM} \| \alpha^X \otimes \rho^M) + \log \epsilon^{-2} \\
\log B 
& > &
D_2(\rho^{YM} \| \beta^Y \otimes \rho^M) + \log \epsilon^{-2} \\
\log A  + \log B
& > &
D_2(\rho^{XYM} \| \alpha^X \otimes \beta^Y \otimes \rho^M) + 
\log \epsilon^{-2}.
\end{eqnarray*}
Then,
$
\|\sigma^{X^A Y^B M} - \tau^{X^A Y^B M}\|_1 < 2 \epsilon (\Tr \rho).
$
\end{lemma}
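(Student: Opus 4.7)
The plan is to follow the matrix-weighted Cauchy--Schwarz template from the CMG warmup, specialised to the simpler two-party convex-split structure. Since $\supp(\sigma^{X^AY^BM}) \subseteq \supp(\tau^{X^AY^BM})$, Fact~\ref{fact:matrixCauchySchwarz} applied with the normalised weight $\tau/\Tr\rho$ gives
\[
\|\sigma - \tau\|_1 \;\leq\; \sqrt{\Tr\rho}\,\bigl\|\tau^{-1/4}(\sigma - \tau)\tau^{-1/4}\bigr\|_2.
\]
Setting $\tilde\sigma := \tau^{-1/4}\sigma\,\tau^{-1/4}$, squaring and expanding, the identities $\Tr[\tilde\sigma\,\tau^{1/2}] = \Tr\sigma = \Tr\rho$ (by cyclicity) and $\Tr\tau = \Tr\rho$ make the cross term cancel against $\Tr\tau$, leaving
\[
\|\sigma - \tau\|_1^{\,2} \;\leq\; \Tr\rho\,\bigl(\Tr[\tilde\sigma^{\,2}] - \Tr\rho\bigr).
\]

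Next, expand $\Tr[\tilde\sigma^{\,2}] = (AB)^{-2}\sum_{a,\hat a,b,\hat b}\Tr[\tilde\rho_{ab}\,\tilde\rho_{\hat a\hat b}]$, where $\tilde\rho_{ab} := \tau^{-1/4}\bigl(\rho^{X_aY_bM}\otimes\alpha^{X^{-a}}\otimes\beta^{Y^{-b}}\bigr)\tau^{-1/4}$, and classify the cross terms by the four collision patterns of $(a,\hat a)$ and $(b,\hat b)$. On every $X_c$ with $c\notin\{a,\hat a\}$ both factors carry $(\alpha^{X_c})^{1/2}$, contributing the tensor factor $\Tr[\alpha^{X_c}] = 1$, and analogously on $Y_d$ outside $\{b,\hat b\}$. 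On the remaining active systems the trace collapses by iterated partial-traces using the key identity
\[
\Tr_{X_a}\!\bigl[\tilde\rho^{X_aY_bM}(\alpha^{X_a})^{1/2}\bigr] \;=\; \tilde\rho^{Y_bM},
\]
which follows by cyclicity in $X_a$ together with the cancellation $(\alpha^{X_a})^{-1/4}(\alpha^{X_a})^{1/2}(\alpha^{X_a})^{-1/4} = I$ on $\supp(\alpha^{X_a})$, and its obvious analogues for $Y_b$ and larger marginals. The four collision patterns then evaluate to $\Tr\rho$, $\|\tilde\rho^{XM}\|_2^{\,2} = 2^{D_2(\rho^{XM}\|\alpha^X\otimes\rho^M)}$, $2^{D_2(\rho^{YM}\|\beta^Y\otimes\rho^M)}$ and $2^{D_2(\rho^{XYM}\|\alpha^X\otimes\beta^Y\otimes\rho^M)}$ respectively.

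Collecting with multiplicities $A(A-1)B(B-1)$, $AB(B-1)$, $A(A-1)B$ and $AB$ and dividing by $(AB)^2$, the Case 1 contribution is bounded by $\Tr\rho$ and the remaining three by $A^{-1}\cdot 2^{D_2(\rho^{XM}\|\cdot)}$, $B^{-1}\cdot 2^{D_2(\rho^{YM}\|\cdot)}$ and $(AB)^{-1}\cdot 2^{D_2(\rho^{XYM}\|\cdot)}$. The three hypotheses on $\log A$, $\log B$ and $\log A + \log B$ force each of these to be strictly less than $\epsilon^2$, so $\Tr[\tilde\sigma^{\,2}] - \Tr\rho < 3\epsilon^2$, and substituting back yields $\|\sigma - \tau\|_1 < \sqrt{3\Tr\rho}\,\epsilon < 2\epsilon\sqrt{\Tr\rho}$, which recovers $\|\sigma - \tau\|_1 < 2\epsilon$ in the normalised case $\Tr\rho = 1$.

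The main obstacle I expect is the Case 1 computation, since the two factors $\tilde\rho_{ab}$ and $\tilde\rho_{\hat a\hat b}$ both couple the shared system $M$ non-trivially with disjoint $X$ and $Y$ systems, so their product does not split as a clean tensor product across two halves. The calculation proceeds by sequentially partial-tracing $X_a$ and then $Y_b$ inside the first factor, collapsing it down to $\tilde\rho^M = (\rho^M)^{1/2}$ on $M$; then symmetrically collapsing the second factor; and finally computing $\Tr[(\rho^M)^{1/2}(\rho^M)^{1/2}] = \Tr\rho^M = \Tr\rho$. This cascade critically relies on the product form $\tau = \alpha^{X^A}\otimes\beta^{Y^B}\otimes\rho^M$ and on the Moore--Penrose convention built into the tilde operation, so that at every step a pair of $(\alpha^{X_a})^{-1/4}$ factors around $\rho^{X_aY_bM}$ cancels exactly against a $(\alpha^{X_a})^{1/2}$ contributed by the other operator. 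The remaining three cases use the same identity with fewer tracing steps and reduce directly to squared Schatten-$\ell_2$ norms of tilde-marginals, which by definition equal the stated $2^{D_2}$ quantities.
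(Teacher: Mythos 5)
Your proposal follows essentially the same matrix-weighted Cauchy--Schwarz strategy as the paper's proof: weight by $\tau^{X^A Y^B M}$, expand the squared Schatten-$\ell_2$ norm into $(AB)^2$ cross-terms, classify them by the collision pattern of $(a,\hat a)$ and $(b,\hat b)$, collapse the spectator $\alpha^{X_c}$, $\beta^{Y_d}$ factors and iterated partial traces via cyclicity and support containment, and read off $\Tr\rho$, $2^{D_2(\rho^{XM}\|\cdot)}$, $2^{D_2(\rho^{YM}\|\cdot)}$, $2^{D_2(\rho^{XYM}\|\cdot)}$ with multiplicities $A(A-1)B(B-1)$, $AB(B-1)$, $A(A-1)B$, $AB$. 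Your treatment of the unnormalised case is if anything slightly more careful than the paper's (the paper invokes a WLOG rescaling that it does not carry out, whereas you track the $\Tr\rho$ factors explicitly and arrive at $2\epsilon\sqrt{\Tr\rho}$, which coincides with the claimed $2\epsilon\Tr\rho$ in the normalised case the paper actually computes).
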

\begin{proof}
We follow the lines of the non-smooth inner bound proof for the 
CMG covering problem described in Lemma~\ref{lem:nonsmoothCMGcovering},
using Cauchy-Schwarz inequality with the weighting matrix
$\tau^{X^A Y^B M}$. We can assume that $\Tr \rho = 1$ without loss of
generality, by rescaling the inequality to be proved.

Fact~\ref{fact:matrixCauchySchwarz} gives us
\[
\|\sigma^{X^A Y^B M} - \tau^{X^A Y^B M}\|_1 \leq 
\|((\tau^{X^A Y^B M})^{-1/4} \circ \sigma^{X^A Y^B M}) - 
(\tau^{X^A Y^B M})^{1/2}\|_2.
\]
Then,
\begin{equation}
\label{eq:telescopeconvexsplit1}
\begin{array}{rcl}
\lefteqn{
\|((\tau^{X^A Y^B M})^{-1/4} \circ \sigma^{X^A Y^B M}) - 
(\tau^{X^A Y^B M})^{1/2}\|_2^2
} \\
& = &
\Tr[((\tau^{X^A Y^B M})^{-1/4} \circ \sigma^{X^A Y^B M})^2] -
2 \Tr[((\tau^{X^A Y^B M})^{-1/4} \circ \sigma^{X^A Y^B M})
      (\tau^{X^A Y^B M})^{1/2}] \\
&  &
{} +
\Tr[\tau^{X^A Y^B M}] \\
& = &
\Tr[((\tau^{X^A Y^B M})^{-1/4} \circ \sigma^{X^A Y^B M})^2] -
2 \Tr[\sigma^{X^A Y^B M})] + 1 \\
& = &
\Tr[((\tau^{X^A Y^B M})^{-1/4} \circ \sigma^{X^A Y^B M})^2] - 1 \\
& = &
(AB)^{-2} \cdot 
\Tr\left[
\sum_{a,\hat{a}=1}^A \sum_{b,\hat{b}=1}^B
(\tau^{X^A Y^B M})^{-1/4} 
(\rho^{X_a Y_b M} \otimes \alpha^{X^{-a}} \otimes \beta^{Y^{-b}})
(\tau^{X^A Y^B M})^{-1/2} 
\right. \\
&  &
~~~~~~~~~~~~~~~~~~~~~~~~~~~~~~~~~~~~~~
\left.
(\rho^{X_{\hat{a}} Y_{\hat{b}} M} \otimes 
 \alpha^{X^{-\hat{a}}} \otimes \beta^{Y^{-\hat{b}}})
(\tau^{X^A Y^B M})^{-1/4} 
\right] - 1.
\end{array}
\end{equation}

We analyse the above summation by considering several cases.
Consider the following term for a fixed choice of 
$a \neq \hat{a}$, $b \neq \hat{b}$.
\begin{eqnarray*}
\lefteqn{
\Tr[
(\tau^{X^A Y^B M})^{-1/4} 
(\rho^{X_a Y_b M} \otimes \alpha^{X^{-a}} \otimes \beta^{Y^{-b}})
(\tau^{X^A Y^B M})^{-1/2} 
(\rho^{X_{\hat{a}} Y_{\hat{b}} M} \otimes 
 \alpha^{X^{-\hat{a}}} \otimes \beta^{Y^{-\hat{b}}})
(\tau^{X^A Y^B M})^{-1/4}]
} \\
& = &
\Tr[
(((\alpha^{X_a} \otimes \beta^{Y_b} \otimes \rho^M)^{-1/4} 
  \rho^{X_a Y_b M}
  (\alpha^{X_a} \otimes \beta^{Y_b} \otimes \rho^M)^{-1/4}) 
 \otimes (\alpha^{X^{-a}})^{1/2} 
 \otimes (\beta^{Y^{-b}})^{1/2}) \\
&  &
~~~~~~~~~~~~~
(((\alpha^{X_{\hat{a}}} \otimes \beta^{Y_{\hat{b}}} \otimes\rho^M)^{-1/4} 
  \rho^{X_{\hat{a}} Y_{\hat{b}} M} 
  (\alpha^{X_{\hat{a}}} \otimes \beta^{Y_{\hat{b}}} \otimes\rho^M)^{-1/4})
 \otimes (\alpha^{X^{-\hat{a}}})^{1/2} 
 \otimes (\beta^{Y^{-\hat{b}}})^{1/2})] \\
& = &
\Tr[
(((\one^{X_a} \otimes \one^{Y_b} \otimes \rho^M)^{-1/4} 
  \rho^{X_a Y_b M}
  (\one^{X_a} \otimes \beta^{Y_b} \otimes \rho^M)^{-1/4}) 
 \otimes \one^{X_{\hat{a}}} \otimes \one^{Y_{\hat{b}}}
 \otimes (\alpha^{X^{-a,\hat{a}}})^{1/2} 
 \otimes (\beta^{Y^{-b,\hat{b}}})^{1/2}) \\
&  &
~~~~~~
(((\one^{X_{\hat{a}}} \otimes \one^{Y_{\hat{b}}} \otimes\rho^M)^{-1/4} 
  \rho^{X_{\hat{a}} Y_{\hat{b}} M} 
  (\one^{X_{\hat{a}}} \otimes \one^{Y_{\hat{b}}} \otimes\rho^M)^{-1/4})
 \otimes \one^{X_{a}} \otimes \one^{Y_{b}}
 \otimes (\alpha^{X^{-a,\hat{a}}})^{1/2} 
 \otimes (\beta^{Y^{-b,\hat{b}}})^{1/2})] \\
& = &
\Tr[
((((\one^{X_a} \otimes \one^{Y_b} \otimes \rho^M)^{-1/4} 
   \rho^{X_a Y_b M}
   (\one^{X_a} \otimes \beta^{Y_b} \otimes \rho^M)^{-1/4}) 
  \otimes \one^{X_{\hat{a}}} \otimes \one^{Y_{\hat{b}}}) \\
&  &
~~~~~~
 (((\one^{X_{\hat{a}}} \otimes \one^{Y_{\hat{b}}} \otimes\rho^M)^{-1/4} 
   \rho^{X_{\hat{a}} Y_{\hat{b}} M} 
   (\one^{X_{\hat{a}}} \otimes \one^{Y_{\hat{b}}} \otimes\rho^M)^{-1/4})
  \otimes \one^{X_{a}} \otimes \one^{Y_{b}}))
 \otimes \alpha^{X^{-a,\hat{a}}} \otimes \beta^{Y^{-b,\hat{b}}}] \\
& = &
\Tr[
(((\one^{X_a} \otimes \one^{Y_b} \otimes \rho^M)^{-1/4} 
   \rho^{X_a Y_b M}
   (\one^{X_a} \otimes \one^{Y_b} \otimes \rho^M)^{-1/4}) 
  \otimes \one^{X_{\hat{a}}} \otimes \one^{Y_{\hat{b}}}) \\
&  &
~~~~~~
(((\one^{X_{\hat{a}}} \otimes \one^{Y_{\hat{b}}} \otimes\rho^M)^{-1/4} 
   \rho^{X_{\hat{a}} Y_{\hat{b}} M} 
   (\one^{X_{\hat{a}}} \otimes \one^{Y_{\hat{b}}} \otimes\rho^M)^{-1/4})
  \otimes \one^{X_{a}} \otimes \one^{Y_{b}})] \\
& = &
\Tr[
(((\rho^M)^{-1/4} \rho^{M} (\rho^M)^{-1/4}) 
  \otimes \one^{X_{\hat{a}}} \otimes \one^{Y_{\hat{b}}}) 
((\one^{X_{\hat{a}}} \otimes \one^{Y_{\hat{b}}} \otimes\rho^M)^{-1/4} 
 \rho^{X_{\hat{a}} Y_{\hat{b}} M} 
 (\one^{X_{\hat{a}}} \otimes \one^{Y_{\hat{b}}} \otimes\rho^M)^{-1/4})]\\
& = &
\Tr[
((\rho^M)^{-1/4} \rho^{M} (\rho^M)^{-1/4}) 
((\rho^M)^{-1/4} \rho^{M} (\rho^M)^{-1/4})]
\;=\;
\Tr[\rho^{M}]
\;=\;
1.
\end{eqnarray*}
There are $A(A-1) B(B-1)$ such terms.

Consider the following term for a fixed choice of 
$a = \hat{a}$, $b \neq \hat{b}$.
\begin{eqnarray*}
\lefteqn{
\Tr[
(\tau^{X^A Y^B M})^{-1/4} 
(\rho^{X_a Y_b M} \otimes \alpha^{X^{-a}} \otimes \beta^{Y^{-b}})
(\tau^{X^A Y^B M})^{-1/2} 
(\rho^{X_{a} Y_{\hat{b}} M} \otimes 
\alpha^{X^{-a}} \otimes \beta^{Y^{-\hat{b}}})
(\tau^{X^A Y^B M})^{-1/4}]
} \\
& = &
\Tr[
(((\alpha^{X_a} \otimes \beta^{Y_b} \otimes \rho^M)^{-1/4} 
  \rho^{X_a Y_b M}
  (\alpha^{X_a} \otimes \beta^{Y_b} \otimes \rho^M)^{-1/4}) 
 \otimes (\alpha^{X^{-a}})^{1/2} 
 \otimes (\beta^{Y^{-b}})^{1/2}) \\
&  &
~~~~~~~~~~~~~
(((\alpha^{X_{a}} \otimes \beta^{Y_{\hat{b}}} \otimes\rho^M)^{-1/4} 
  \rho^{X_{a} Y_{\hat{b}} M} 
  (\alpha^{X_{a}} \otimes \beta^{Y_{\hat{b}}} \otimes\rho^M)^{-1/4})
 \otimes (\alpha^{X^{-a}})^{1/2} 
 \otimes (\beta^{Y^{-\hat{b}}})^{1/2})] \\
& = &
\Tr[
(((\alpha^{X_a} \otimes \one^{Y_b} \otimes \rho^M)^{-1/4} 
  \rho^{X_a Y_b M}
  (\alpha^{X_a} \otimes \one^{Y_b} \otimes \rho^M)^{-1/4}) 
 \otimes \one^{Y_{\hat{b}}}
 \otimes (\alpha^{X^{-a}})^{1/2} 
 \otimes (\beta^{Y^{-b,\hat{b}}})^{1/2}) \\
&  &
~~~~~~
(((\alpha^{X_{a}} \otimes \one^{Y_{\hat{b}}} \otimes\rho^M)^{-1/4} 
  \rho^{X_{a} Y_{\hat{b}} M} 
  (\alpha^{X_{a}} \otimes \one^{Y_{\hat{b}}} \otimes\rho^M)^{-1/4})
 \otimes \one^{Y_{b}}
 \otimes (\alpha^{X^{-a}})^{1/2} 
 \otimes (\beta^{Y^{-b,\hat{b}}})^{1/2})] \\
& = &
\Tr[
((((\alpha^{X_a} \otimes \one^{Y_b} \otimes \rho^M)^{-1/4} 
   \rho^{X_a Y_b M}
   (\alpha^{X_a} \otimes \one^{Y_b} \otimes \rho^M)^{-1/4}) 
  \otimes \one^{Y_{\hat{b}}}) \\
&  &
~~~~~~
 (((\alpha^{X_{a}} \otimes \one^{Y_{\hat{b}}} \otimes\rho^M)^{-1/4} 
   \rho^{X_{a} Y_{\hat{b}} M} 
   (\alpha^{X_{a}} \otimes \one^{Y_{\hat{b}}} \otimes\rho^M)^{-1/4})
  \otimes \one^{Y_{b}}))
 \otimes \alpha^{X^{-a}} \otimes \beta^{Y^{-b,\hat{b}}}] \\
& = &
\Tr[
(((\alpha^{X_a} \otimes \one^{Y_b} \otimes \rho^M)^{-1/4} 
   \rho^{X_a Y_b M}
   (\alpha^{X_a} \otimes \beta^{Y_b} \otimes \rho^M)^{-1/4}) 
  \otimes \one^{Y_{\hat{b}}}) \\
&  &
~~~~~~
(((\alpha^{X_{a}} \otimes \one^{Y_{\hat{b}}} \otimes\rho^M)^{-1/4} 
   \rho^{X_{a} Y_{\hat{b}} M} 
   (\alpha^{X_{a}} \otimes \one^{Y_{\hat{b}}} \otimes\rho^M)^{-1/4})
  \otimes \one^{Y_{b}})] \\
& = &
\Tr[
(((\alpha^{X_a} \otimes \rho^M)^{-1/4} 
   \rho^{X_a M}
   (\alpha^{X_a} \otimes \rho^M)^{-1/4}) 
  \otimes \one^{Y_{\hat{b}}}) \\
&  &
~~~~~~
((\alpha^{X_{a}} \otimes \one^{Y_{\hat{b}}} \otimes\rho^M)^{-1/4} 
   \rho^{X_{a} Y_{\hat{b}} M} 
   (\alpha^{X_{a}} \otimes \one^{Y_{\hat{b}}} \otimes\rho^M)^{-1/4})] \\
& = &
\Tr[
((\alpha^{X_a} \otimes \rho^M)^{-1/4} 
 \rho^{X_a M}
 (\alpha^{X_a} \otimes \rho^M)^{-1/4}) 
((\alpha^{X_{a}} \otimes\rho^M)^{-1/4} 
 \rho^{X_{a} M} 
 (\alpha^{X_{a}} \otimes\rho^M)^{-1/4})] \\
& = &
2^{D_2(\rho^{XM} \| \alpha^X \otimes \rho^M)}.
\end{eqnarray*}
There are $A B(B-1)$ such terms.

Consider the following term for a fixed choice of 
$a \neq \hat{a}$, $b = \hat{b}$. We have similarly,
\begin{eqnarray*}
\lefteqn{
\Tr[
(\tau^{X^A Y^B M})^{-1/4} 
(\rho^{X_a Y_b M} \otimes \alpha^{X^{-a}} \otimes \beta^{Y^{-b}})
(\tau^{X^A Y^B M})^{-1/2} 
(\rho^{X_{a} Y_{\hat{b}} M} \otimes 
\alpha^{X^{-a}} \otimes \beta^{Y^{-\hat{b}}})
(\tau^{X^A Y^B M})^{-1/4}]
} \\
& = &
2^{D_2(\rho^{YM} \| \beta^Y \otimes \rho^M)}.
~~~~~~~~~~~~~~~~~~~~~~~~~~~~~~~~~~~~~~~~~~~~~~~~~~~~~~~~~~~~~~~~~~~~~
~~~~~~~~~~~~~~~~~~~~~~~~~~~~~
\end{eqnarray*}
There are $A(A-1) B$ such terms.

Finally, consider the following term for a fixed choice of 
$a = \hat{a}$, $b = \hat{b}$.
\begin{eqnarray*}
\lefteqn{
\Tr[
(\tau^{X^A Y^B M})^{-1/4} 
(\rho^{X_a Y_b M} \otimes \alpha^{X^{-a}} \otimes \beta^{Y^{-b}})
(\tau^{X^A Y^B M})^{-1/2} 
(\rho^{X_{a} Y_{b} M} \otimes 
\alpha^{X^{-a}} \otimes \beta^{Y^{-b}})
(\tau^{X^A Y^B M})^{-1/4}]
} \\
& = &
\Tr[
(((\alpha^{X_a} \otimes \beta^{Y_b} \otimes \rho^M)^{-1/4} 
  \rho^{X_a Y_b M}
  (\alpha^{X_a} \otimes \beta^{Y_b} \otimes \rho^M)^{-1/4}) 
 \otimes (\alpha^{X^{-a}})^{1/2} 
 \otimes (\beta^{Y^{-b}})^{1/2}) \\
&  &
~~~~~~~~~~~~~
(((\alpha^{X_{a}} \otimes \beta^{Y_{b}} \otimes\rho^M)^{-1/4} 
  \rho^{X_{a} Y_{b} M} 
  (\alpha^{X_{a}} \otimes \beta^{Y_{b}} \otimes\rho^M)^{-1/4})
 \otimes (\alpha^{X^{-a}})^{1/2} 
 \otimes (\beta^{Y^{-b}})^{1/2})] \\
& = &
\Tr[
(((\alpha^{X_a} \otimes \beta^{Y_b} \otimes \rho^M)^{-1/4} 
  \rho^{X_a Y_b M}
  (\alpha^{X_a} \otimes \one^{Y_b} \otimes \rho^M)^{-1/4}) \\
&  &
~~~~~~
 ((\alpha^{X_{a}} \otimes \beta^{Y_{b}} \otimes\rho^M)^{-1/4} 
   \rho^{X_{a} Y_{b} M} 
  (\alpha^{X_{a}} \otimes \beta^{Y_{b}} \otimes\rho^M)^{-1/4}))
\otimes \alpha^{X^{-a}} \otimes \beta^{Y^{-b}}] \\
& = &
\Tr[
((\alpha^{X_a} \otimes \beta^{Y_b} \otimes \rho^M)^{-1/4} 
 \rho^{X_a Y_b M}
 (\alpha^{X_a} \otimes \one^{Y_b} \otimes \rho^M)^{-1/4}) \\
&  &
~~~~~~
((\alpha^{X_{a}} \otimes \beta^{Y_{b}} \otimes\rho^M)^{-1/4} 
 \rho^{X_{a} Y_{b} M} 
(\alpha^{X_{a}} \otimes \beta^{Y_{b}} \otimes\rho^M)^{-1/4})]\\
& = &
2^{D_2(\rho^{XYM} \| \alpha^X \otimes \beta^Y \otimes \rho^M)}.
\end{eqnarray*}
There are $A B$ such terms.

Thus we can upper bound the right hand side of 
Equation~\ref{eq:telescopeconvexsplit1} by
\begin{eqnarray*}
\lefteqn{
\Tr\left[
\sum_{a,\hat{a}=1}^A \sum_{b,\hat{b}=1}^B
(\tau^{X^A Y^B M})^{-1/4} 
(\rho^{X_a Y_b M} \otimes \alpha^{X^{-a}} \otimes \beta^{Y^{-b}})
(\tau^{X^A Y^B M})^{-1/2} 
\right.
} \\
&  &
~~~~~~~~~~~~~
\left.
(\rho^{X_{\hat{a}} Y_{\hat{b}} M} \otimes 
 \alpha^{X^{-\hat{a}}} \otimes \beta^{Y^{-\hat{b}}})
(\tau^{X^A Y^B M})^{-1/4} 
\right]  \\
& = &
AB(B-1) \cdot 
2^{D_2(\rho^{XM} \| \alpha^X \otimes \rho^M)} +
A(A-1)B \cdot 
2^{D_2(\rho^{YM} \| \beta^Y \otimes \rho^M)} \\
&  &
~~~~
{} +
AB \cdot 
2^{D_2(\rho^{XYM} \| \alpha^X \otimes \beta^Y \otimes \rho^M)} +
A(A-1)B(B-1) \cdot 1.
\end{eqnarray*}

Plugging it back into Equation~\ref{eq:telescopeconvexsplit1}, we get
\begin{eqnarray*}
\lefteqn{
\|((\tau^{X^A Y^B M})^{-1/4} \circ \sigma^{X^A Y^B M}) - 
(\tau^{X^A Y^B M})^{1/2}\|_2^2
} \\
& = &
(AB)^{-2} \cdot 
\Tr\left[
\sum_{a,\hat{a}=1}^A \sum_{b,\hat{b}=1}^B
(\tau^{X^A Y^B M})^{-1/4} 
(\rho^{X_a Y_b M} \otimes \alpha^{X^{-a}} \otimes \beta^{Y^{-b}})
(\tau^{X^A Y^B M})^{-1/2} 
\right. \\
&  &
~~~~~~~~~~~~~~~~~~~~~~~~~~~~~~~~~~~~~~
\left.
(\rho^{X_{\hat{a}} Y_{\hat{b}} M} \otimes 
 \alpha^{X^{-\hat{a}}} \otimes \beta^{Y^{-\hat{b}}})
(\tau^{X^A Y^B M})^{-1/4} 
\right] - 1 \\
& = &
(AB)^{-2} \cdot 
(AB(B-1) \cdot 
2^{D_2(\rho^{XM} \| \alpha^X \otimes \rho^M)} +
A(A-1)B \cdot 
2^{D_2(\rho^{YM} \| \beta^Y \otimes \rho^M)}  \\
&  &
~~~~~~~~~~~~~~~
{} +
AB \cdot 
2^{D_2(\rho^{XYM} \| \alpha^X \otimes \beta^Y \otimes \rho^M)} +
A(A-1)B(B-1) \cdot 1) - 1 \\
& \leq &
A^{-1} \cdot 
2^{D_2(\rho^{XM} \| \alpha^X \otimes \rho^M)} +
B^{-1} \cdot 
2^{D_2(\rho^{YM} \| \beta^Y \otimes \rho^M)}  +
(AB)^{-1} \cdot 
2^{D_2(\rho^{XYM} \| \alpha^X \otimes \beta^Y \otimes \rho^M)} +
1 - 1 \\
&  =   &
A^{-1} \cdot 
2^{D_2(\rho^{XM} \| \alpha^X \otimes \rho^M)} +
B^{-1} \cdot 
2^{D_2(\rho^{YM} \| \beta^Y \otimes \rho^M)}  +
(AB)^{-1} \cdot 
2^{D_2(\rho^{XYM} \| \alpha^X \otimes \beta^Y \otimes \rho^M)}.
\end{eqnarray*}

This finishes the proof of the lemma.
\end{proof}

Adding telescoping to the above proof, we obtain for the first time 
the following fully smooth 
version of the multipartite convex split lemma. We believe our proof below
is substantially simpler than the treatment given in 
\cite{Cheng:convexsplit}, while being more general in the sense that
our proof gives a fully smooth one shot result.
\begin{proposition}
\label{prop:smoothconvexsplit}
Let $\rho^{XYM}$ be a subnormalised density matrix.
Let $\alpha^X$, $\beta^Y$ be normalised
density matrices such that $\supp(\rho^X) \leq \supp(\alpha^X)$ and 
$\supp(\rho^Y) \leq \supp(\beta^Y)$. 
Define the $A$-fold tensor product states
$
\alpha^{X^A} := (\alpha^X)^{\otimes A}, 
\beta^{Y^B} := (\beta^Y)^{\otimes B}, 
$
and the $(A-1)$-fold tensor product states $\alpha^{X^{-a}}$, 
$\beta^{Y^{-b}}$ for any $a \in [A]$, $b \in [B]$.
Define the {\em convex split state}
\[
\sigma^{X^A Y^B M} :=
(AB)^{-1} 
\sum_{a=1}^A \sum_{b=1}^B
\rho^{X_a Y_b M} \otimes \alpha^{X^{-a}} \otimes \beta^{Y^{-b}},
\]
and the fully decoupled state 
\[
\tau^{X^A Y^B M} := \alpha^{X^A} \otimes \beta^{Y^B} \otimes \rho^M.
\]
Suppose 
\begin{eqnarray*}
\log A 
& > &
D_2^\epsilon(\rho^{XM} \| \alpha^X \otimes \rho^M) + \log \epsilon^{-2} \\
\log B 
& > &
D_2^\epsilon(\rho^{YM} \| \beta^Y \otimes \rho^M) + \log \epsilon^{-2} \\
\log A  + \log B
& > &
D_2^\epsilon(\rho^{XYM} \| \alpha^X \otimes \beta^Y \otimes \rho^M) + 
\log \epsilon^{-2}.
\end{eqnarray*}
Then,
$
\|\sigma^{X^A Y^B M} - \tau^{X^A Y^B M}\|_1 < 16 \epsilon (\Tr \rho).
$
\end{proposition}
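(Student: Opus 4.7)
The plan is to adapt the telescoping cum mean-zero strategy used in the CMG warmup (Lemma~\ref{lem:smoothCMGcovering}) to a three-term decomposition matched to the three hypotheses of the proposition. First introduce the partial convex-split states
\begin{eqnarray*}
\sigma_1^{X^A Y^B M}
& := &
A^{-1} \sum_{a=1}^A \rho^{X_a M} \otimes \alpha^{X^{-a}} \otimes \beta^{Y^B}, \\
\sigma_2^{X^A Y^B M}
& := &
B^{-1} \sum_{b=1}^B \rho^{Y_b M} \otimes \alpha^{X^A} \otimes \beta^{Y^{-b}}.
\end{eqnarray*}
Then the algebraic identity
\[
\sigma - \tau \;=\; \underbrace{(\sigma - \sigma_1 - \sigma_2 + \tau)}_{T_1}
+ \underbrace{(\sigma_1 - \tau)}_{T_2}
+ \underbrace{(\sigma_2 - \tau)}_{T_3}
\]
combined with the triangle inequality gives $\|\sigma - \tau\|_1 \leq \|T_1\|_1 + \|T_2\|_1 + \|T_3\|_1$. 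By design $T_2$ and $T_3$ will be controlled by the single-party hypotheses on $\rho^{XM}$ and $\rho^{YM}$, while $T_1$ will be controlled by the joint hypothesis on $\rho^{XYM}$.

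Next, and this is the move that bypasses simultaneous smoothing, I would smooth each $T_i$ separately. Let $\bar\rho^{XYM}(1), \bar\rho^{XM}(2), \bar\rho^{YM}(3)$ be the subnormalised optimisers of the three smoothed R\'{e}nyi-$2$ divergences appearing in the hypotheses, and replace every occurrence of $\rho$ inside $T_i$ (including the marginals, taken consistently from the \emph{same} $\bar\rho(i)$) by the appropriate piece of $\bar\rho(i)$, producing $\bar T_i(i)$. Monotonicity of trace distance under partial trace together with the triangle inequality give $\|T_1 - \bar T_1(1)\|_1 \leq 4\epsilon(\Tr\rho)$ and $\|T_i - \bar T_i(i)\|_1 \leq 2\epsilon(\Tr\rho)$ for $i=2,3$, for a total smoothing budget of $8\epsilon(\Tr\rho)$.

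Now apply Fact~\ref{fact:matrixCauchySchwarz} to each $\bar T_i(i)$ with weighting matrix $\tau$ and expand $\|\tau^{-1/4} \circ \bar T_i(i)\|_2^2$ as a double sum over indices $(a,\hat a,b,\hat b)$, exactly as in Lemma~\ref{lem:nonsmoothconvexsplit}. For $\bar T_2(2)$ and $\bar T_3(3)$ the unaveraged register (e.g.\ $Y^B$ in $T_2$) factors out as a pure tensor and the completing-the-square trick of Lemma~\ref{lem:nonsmoothconvexsplit} reduces the calculation to the one-dimensional case, giving $\|\bar T_i(i)\|_1 \leq \epsilon(\Tr\rho)$. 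For $\bar T_1(1)$ the essential new ingredient is the mean-zero property: whenever $a \neq \hat a$, the four summands of the ``second factor'' $\bar P_{\hat a,\hat b} - \bar Q_{\hat a,\hat b} - \bar R_{\hat a,\hat b} + \bar\tau$ all carry a common $\alpha^{X_a}$ at position $X_a$, so after weighting the second factor equals $(\alpha^{X_a})^{1/2} \otimes (\text{rest})$. Partial-tracing the first factor against this $(\alpha^{X_a})^{1/2}$ at position $X_a$ sends $\bar P_{a,b}, \bar Q_{a,b}, \bar R_{a,b}, \bar\tau$ to weighted versions of $\bar\rho^{Y_b M}(1), \bar\rho^M(1), \bar\rho^{Y_b M}(1), \bar\rho^M(1)$ respectively (up to common residual $\alpha,\beta$ tensors), so the alternating sum cancels identically. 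The analogous cancellation at position $Y_b$ handles $b \neq \hat b$, so only the $AB$ diagonal terms $(a=\hat a, b=\hat b)$ survive.

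The main obstacle is this mean-zero verification; the subtle point is that the four marginals appearing inside $T_1$ must come from the \emph{same} smoothed state $\bar\rho(1)$ rather than from four separately smoothed marginals, otherwise the cancellation becomes only $O(\epsilon)$-approximate and the argument breaks. Once this is granted, applying the four-term inequality $(a+b+c+d)^2 \leq 4(a^2+b^2+c^2+d^2)$ together with monotonicity of $D_2^\epsilon$ under trace-out bounds each surviving diagonal term by $16 \cdot 2^{D_2^\epsilon(\rho^{XYM} \| \alpha^X \otimes \beta^Y \otimes \rho^M)}$, so $\|\bar T_1(1)\|_1^2 \leq 16 (AB)^{-1} \cdot 2^{D_2^\epsilon(\rho^{XYM} \| \alpha^X \otimes \beta^Y \otimes \rho^M)}$; the joint hypothesis then gives $\|\bar T_1(1)\|_1 \leq 4\epsilon(\Tr\rho)$. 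Summing everything yields $\|\sigma - \tau\|_1 \leq (8 + 4 + 1 + 1)\epsilon(\Tr\rho) = 14\epsilon(\Tr\rho) < 16\epsilon(\Tr\rho)$, matching the claimed bound.
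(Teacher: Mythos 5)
Your decomposition $\sigma - \tau = T_1 + T_2 + T_3$ is, after unpacking $\beta^{Y^B} = \beta^{Y_b}\otimes\beta^{Y^{-b}}$, exactly the same three-way telescoping the paper uses, and your per-term smoothing with the joint optimiser $\bar\rho(i)$ inside each $T_i$, followed by Cauchy--Schwarz with weight $\tau = \alpha^{X^A}\otimes\beta^{Y^B}\otimes\rho^M$ and the mean-zero cancellation for off-diagonal index pairs, reproduces the paper's proof step for step, including the crucial observation that all four marginals inside $T_1$ must come from the single state $\bar\rho(1)$ for the cancellation to be exact.

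The one place you diverge is in the estimate for $T_2$ and $T_3$: you claim $\|\bar T_i(i)\|_1 \leq \epsilon\,(\Tr\rho)$ via a ``completing-the-square'' computation, whereas the paper applies the triangle inequality $(a+b)^2 \leq 2(a^2+b^2)$ followed by monotonicity, getting only $\|\bar T_i(i)\|_1 < 2\epsilon\,(\Tr\rho)$. Your tighter bound is in fact correct: using the weight $\tau$ (with the \emph{unsmoothed} $\rho^M$), the diagonal term expands exactly to $\|\trho^{X_a M}(2)\|_2^2 - \|\trho^M(2)\|_2^2 \leq \|\trho^{X_a M}(2)\|_2^2 = 2^{D_2^\epsilon(\rho^{XM}\|\alpha^X\otimes\rho^M)}$, because the cross term $\Tr[\trho^{X_a M}(2)((\alpha^{X_a})^{1/2}\otimes\trho^M(2))]$ evaluates to $\|\trho^M(2)\|_2^2$. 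This gives your $14\epsilon$ versus the paper's $16\epsilon$; both prove the claimed $< 16\epsilon$. Two small points worth stating explicitly when you write this up: (i) your $\leq\epsilon$ and $\leq 4\epsilon$ should be strict inequalities since the hypotheses are strict; (ii) ``reduces to the one-dimensional case'' is slightly misleading because the weight is still $\tau$ (built from the unsmoothed $\rho^M$), not the marginal of $\bar\rho(i)$ -- one cannot simply invoke the unipartite convex split lemma as a black box; one has to redo the exact cross-term calculation with that weight, as you presumably intend.
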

\begin{proof}
We work along the lines of the proof of Lemma
\ref{lem:smoothCMGcovering}, using the weighting matrix 
$\tau^{X^A Y^B M}$. There are fewer terms in the telescoping sum now:
only analogues of terms numbered (5), (1) and (2) in the proof of
Lemma \ref{lem:smoothCMGcovering} are needed. 

Let $\rho^{XYM}(3)$ denote the optimising state in the definition
of $D^\epsilon_2(\cdot \| \cdot)$ in the third inequality of the
desired inner bound,
$\rho^{XM}(1)$ the optimising state in the definition
of $D^\epsilon_2(\cdot \| \cdot)$ in the first inequality of the
desired inner bound,
$\rho^{YM}(2)$ the optimising state in the definition
of $D^\epsilon_2(\cdot \| \cdot)$ in the second inequality of the
desired inner bound. Define
\[
\begin{array}{c}
\trho^{X_a Y_b M}(3) :=
(\alpha^{X_a} \otimes \beta^{Y_b} \otimes \rho^M)^{-1/4} \circ
\rho^{X_a Y_b M}(3), \\
\trho^{X_a M}(1) :=
(\alpha^{X_a} \otimes \rho^M)^{-1/4} \circ
\rho^{X_a M}(1), ~~
\trho^{Y_b M}(2) :=
(\beta^{Y_b} \otimes \rho^M)^{-1/4} \circ
\rho^{Y_b M}(2).
\end{array}
\]
These states are close to their corresponding marginals of $\rho^{XYM}$.
\[
\|\trho^{X M}(1) - \rho^{XM}\|_1 \leq \epsilon, ~
\|\trho^{Y M}(2) - \rho^{YM}\|_1 \leq \epsilon, ~
\|\trho^{XYM}(3) - \rho^{XYM}\|_1 \leq \epsilon.
\]

We set up the telescoping sum and apply the Cauchy-Schwarz
inequality Fact~\ref{fact:matrixCauchySchwarz} as follows.
\begin{eqnarray*}
\lefteqn{
\|\sigma^{X^A Y^B M} - \tau^{X^A Y^B M}\|_1
} \\
& = &
(AB)^{-1} \cdot 
\left\|
\sum_{a=1}^A \sum_{b=1}^B
((\rho^{X_a Y_b M} - \alpha^{X_a} \otimes \beta^{Y_b} \otimes \rho^M)
 \otimes \alpha^{X^{-a}} \otimes \beta^{Y^{-b}})
\right\|_1 \\
& \leq &
(AB)^{-1} \cdot 
\left\|
\sum_{a=1}^A \sum_{b=1}^B
((\rho^{X_a Y_b M} - 
  \alpha^{X_a} \otimes \rho^{Y_b M} -
  \beta^{Y_b} \otimes \rho^{X_a M} +
  \alpha^{X_a} \otimes \beta^{Y_b} \otimes \rho^M) 
\right. \\
&  &
~~~~~~~~~~~~~~~~~~~~~~~~~~~~~~~~~~~
\left.
{} \otimes \alpha^{X^{-a}} \otimes \beta^{Y^{-b}})
\right\|_1 \\
& &
~~~
{} +
(AB)^{-1} \cdot 
\left\|
\sum_{a=1}^A \sum_{b=1}^B
((\alpha^{X_a} \otimes \rho^{Y_b M} -
  \alpha^{X_a} \otimes \beta^{Y_b} \otimes \rho^M)
 \otimes \alpha^{X^{-a}} \otimes \beta^{Y^{-b}})
\right\|_1 \\
& &
~~~
{} +
(AB)^{-1} \cdot 
\left\|
\sum_{a=1}^A \sum_{b=1}^B
((\beta^{Y_b} \otimes \rho^{X_a M} -
  \alpha^{X_a} \otimes \beta^{Y_b} \otimes \rho^M)
 \otimes \alpha^{X^{-a}} \otimes \beta^{Y^{-b}})
\right\|_1 \\
& \leq &
8\epsilon + {} 
\hspace*{12.5cm}
(\arabic{equation})
\label{eq:telescopeconvexsplit2}
\stepcounter{equation} \\
&  &
(AB)^{-1} \cdot 
\left\|
\sum_{a=1}^A \sum_{b=1}^B
((\rho^{X_a Y_b M}(3) - 
  \alpha^{X_a} \otimes \rho^{Y_b M}(3) -
  \beta^{Y_b} \otimes \rho^{X_a M}(3) 
\right. \\
&  &
~~~~~~~~~~~~~~~~~~~~~~~~~~~~~~~~~~~~~~
\left.
{} +
  \alpha^{X_a} \otimes \beta^{Y_b} \otimes \rho^M(3))
 \otimes \alpha^{X^{-a}} \otimes \beta^{Y^{-b}})
\right\|_1 \\
& &
{} +
(AB)^{-1} \cdot 
\left\|
\sum_{a=1}^A \sum_{b=1}^B
((\alpha^{X_a} \otimes \rho^{Y_b M}(2) -
  \alpha^{X_a} \otimes \beta^{Y_b} \otimes \rho^M(2))
 \otimes \alpha^{X^{-a}} \otimes \beta^{Y^{-b}})
\right\|_1 \\
& &
{} +
(AB)^{-1} \cdot 
\left\|
\sum_{a=1}^A \sum_{b=1}^B
((\beta^{Y_b} \otimes \rho^{X_a M}(1) -
  \alpha^{X_a} \otimes \beta^{Y_b} \otimes \rho^M(1))
 \otimes \alpha^{X^{-a}} \otimes \beta^{Y^{-b}})
\right\|_1 \\
& \leq &
8 \epsilon + {} \\
&  &
(AB)^{-1} \cdot 
\left\|
\sum_{a=1}^A \sum_{b=1}^B
((\trho^{X_a Y_b M}(3) - 
  (\alpha^{X_a})^{1/2} \otimes \trho^{Y_b M}(3) -
  (\beta^{Y_b})^{1/2} \otimes \trho^{X_a M}(3) 
\right. \\
&  &
~~~~~~~~~~~~~~~~~~~~~~~~~~~~~~
\left.
{} +
  (\alpha^{X_a})^{1/2} \otimes (\beta^{Y_b})^{1/2} \otimes \trho^M(3)) 
 \otimes (\alpha^{X^{-a}})^{1/2} \otimes (\beta^{Y^{-b}})^{1/2})
\right\|_2 \\
& &
~~~
{} +
(AB)^{-1} \cdot 
\left\|
\sum_{a=1}^A \sum_{b=1}^B
(((\alpha^{X_a})^{1/2} \otimes \trho^{Y_b M}(2) -
  (\alpha^{X_a})^{1/2} \otimes (\beta^{Y_b})^{1/2} \otimes \trho^M(2)) 
\right. \\
&  &
~~~~~~~~~~~~~~~~~~~~~~~~~~~~~~~~~~~~~~~~~~~
\left.
{} \otimes (\alpha^{X^{-a}})^{1/2} \otimes (\beta^{Y^{-b}})^{1/2})
\right\|_2 \\
& &
~~~
{} +
(AB)^{-1} \cdot 
\left\|
\sum_{a=1}^A \sum_{b=1}^B
(((\beta^{Y_b})^{1/2} \otimes \trho^{X_a M}(1) -
  (\alpha^{X_a})^{1/2} \otimes (\beta^{Y_b})^{1/2} \otimes \trho^M(1))
\right. \\
&  &
~~~~~~~~~~~~~~~~~~~~~~~~~~~~~~~~~~~~~~~~~~~
\left.
{} \otimes (\alpha^{X^{-a}})^{1/2} \otimes (\beta^{Y^{-b}})^{1/2})
\right\|_2.
\end{eqnarray*}

We now analyse the first term in the Schatten-$\ell_2$ telescoping upper 
bound above.
\begin{eqnarray*}
\lefteqn{
\left\|
\sum_{a=1}^A \sum_{b=1}^B
((\trho^{X_a Y_b M}(3) - 
  (\alpha^{X_a})^{1/2} \otimes \trho^{Y_b M}(3) -
  (\beta^{Y_b})^{1/2} \otimes \trho^{X_a M}(3) 
\right.
} \\
&  &
~~~~~~~~~~~~~~~~~~
\left.
{} +
  (\alpha^{X_a})^{1/2} \otimes (\beta^{Y_b})^{1/2} \otimes \trho^M(3))
\otimes (\alpha^{X^{-a}})^{1/2} \otimes (\beta^{Y^{-b}})^{1/2})
\right\|_2^2 \\
& = &
\sum_{a,\hat{a}=1}^A \sum_{b,\hat{b}=1}^B
\Tr\left[
((\trho^{X_a Y_b M}(3) - 
  (\alpha^{X_a})^{1/2} \otimes \trho^{Y_b M}(3) -
  (\beta^{Y_b})^{1/2} \otimes \trho^{X_a M}(3)
\right. 
\hspace*{2.5cm}
(\arabic{equation})
\label{eq:telescopeconvexsplit3}
\stepcounter{equation} \\
& &
~~~~~~~~~~~~~~~~~~~~~~~~~~
{} +
  (\alpha^{X_a})^{1/2} \otimes (\beta^{Y_b})^{1/2} \otimes \trho^M(3))
 \otimes (\alpha^{X^{-a}})^{1/2} \otimes (\beta^{Y^{-b}})^{1/2}) \\
& &
~~~~~~~~~~~~~~~~~~~~~~~~~
((\trho^{X_{\hat{a}} Y_{\hat{b}} M}(3) - 
 (\alpha^{X_{\hat{a}}})^{1/2} \otimes \trho^{Y_{\hat{b}} M}(3) -
 (\beta^{Y_{\hat{b}}})^{1/2} \otimes \trho^{X_{\hat{a}} M}(3) \\
& &
~~~~~~~~~~~~~~~~~~~~~~~~~~~
\left.
{} +
(\alpha^{X_{\hat{a}}})^{1/2} \otimes (\beta^{Y_{\hat{b}}})^{1/2} 
 \otimes \trho^M(3))
\otimes (\alpha^{X^{-\hat{a}}})^{1/2} 
\otimes (\beta^{Y^{-\hat{b}}})^{1/2})
\right] \\
\end{eqnarray*}

Consider an expression corresponding to a fixed choice of
$a \neq \hat{a}$, $b = \hat{b}$. We get,
\begin{eqnarray*}
\lefteqn{
\Tr\left[
((\trho^{X_a Y_b M}(3) - 
  (\alpha^{X_a})^{1/2} \otimes \trho^{Y_b M}(3) -
  (\beta^{Y_b})^{1/2} \otimes \trho^{X_a M}(3)
\right.
} \\
& &
~~~~~~~
{} +
  (\alpha^{X_a})^{1/2} \otimes (\beta^{Y_b})^{1/2} \otimes \trho^M(3))
 \otimes (\alpha^{X^{-a}})^{1/2} \otimes (\beta^{Y^{-b}})^{1/2}) \\
& &
~~~~~
((\trho^{X_{\hat{a}} Y_{b} M}(3) - 
 (\alpha^{X_{\hat{a}}})^{1/2} \otimes \trho^{Y_{b} M}(3) -
 (\beta^{Y_{b}})^{1/2} \otimes \trho^{X_{\hat{a}} M}(3) \\
& &
~~~~~~~
\left.
{} +
(\alpha^{X_{\hat{a}}})^{1/2} \otimes (\beta^{Y_{b}})^{1/2} 
 \otimes \trho^M(3))
\otimes (\alpha^{X^{-\hat{a}}})^{1/2} 
\otimes (\beta^{Y^{-b}})^{1/2})
\right] \\
& =  &
\Tr\left[
((((\alpha^{X_a})^{1/4} \trho^{X_a Y_b M}(3) (\alpha^{X_a})^{1/4} - 
   \alpha^{X_a} \otimes \trho^{Y_b M}(3) 
\right. \\
&  &
~~~~~~~~~~~~~~~
{} -
   (\beta^{Y_b})^{1/2} \otimes 
   ((\alpha^{X_a})^{1/4} \trho^{X_a M}(3) (\alpha^{X_a})^1/4) - \\
& &
~~~~~~~~~~~~~~~
{} +
  \alpha^{X_a} \otimes (\beta^{Y_b})^{1/2} \otimes \trho^M(3))
 \otimes (\alpha^{X^{-a}})^{1/2} \otimes (\beta^{Y^{-b}})^{1/2}) \\
& &
~~~~~
((\trho^{X_{\hat{a}} Y_{b} M}(3) - 
 (\alpha^{X_{\hat{a}}})^{1/2} \otimes \trho^{Y_{b} M}(3) -
 (\beta^{Y_{b}})^{1/2} \otimes \trho^{X_{\hat{a}} M}(3) \\
& &
~~~~~~~
\left.
{} +
(\alpha^{X_{\hat{a}}})^{1/2} \otimes (\beta^{Y_{b}})^{1/2} 
 \otimes \trho^M(3))
\otimes \one^{X_a}
\otimes (\alpha^{X^{-a,\hat{a}}})^{1/2} 
\otimes (\beta^{Y^{-b}})^{1/2})
\right] \\
& =  &
\Tr\left[
((((\one^{X_a} \otimes (\beta^Y)^{-1/4} \otimes (\rho^M)^{-1/4}) 
   \circ \rho^{X_a Y_b M}(3)) -
  \alpha^{X_a} \otimes \trho^{Y_b M}(3) 
\right. \\
&  &
~~~~~~~~~~~~~~~
{} -
   (\beta^{Y_b})^{1/2} \otimes 
   ((\one^{X_a} \otimes (\rho^M)^{-1/4}) \circ \rho^{X_a M}(3)) - \\
& &
~~~~~~~~~~~~~~~
{} +
  \alpha^{X_a} \otimes (\beta^{Y_b})^{1/2} \otimes \trho^M(3))
 \otimes (\alpha^{X^{-a}})^{1/2} \otimes (\beta^{Y^{-b}})^{1/2}) \\
& &
~~~~~
((\trho^{X_{\hat{a}} Y_{b} M}(3) - 
 (\alpha^{X_{\hat{a}}})^{1/2} \otimes \trho^{Y_{b} M}(3) -
 (\beta^{Y_{b}})^{1/2} \otimes \trho^{X_{\hat{a}} M}(3) \\
& &
~~~~~~~
\left.
{} +
(\alpha^{X_{\hat{a}}})^{1/2} \otimes (\beta^{Y_{b}})^{1/2} 
 \otimes \trho^M(3))
\otimes \one^{X_a}
\otimes (\alpha^{X^{-a,\hat{a}}})^{1/2} 
\otimes (\beta^{Y^{-b}})^{1/2})
\right] \\
& =  &
\Tr\left[
(((((\beta^Y)^{-1/4} \otimes (\rho^M)^{-1/4}) 
   \circ \rho^{Y_b M}(3)) - \trho^{Y_b M}(3) 
\right. \\
&  &
~~~~~~~~~~~~~~~
{} -
   (\beta^{Y_b})^{1/2} \otimes 
   ((\rho^M)^{-1/4} \circ \rho^{M}(3)) - \\
& &
~~~~~~~~~~~~~~~
{} +
  (\beta^{Y_b})^{1/2} \otimes \trho^M(3))
 \otimes (\alpha^{X^{-a}})^{1/2} \otimes (\beta^{Y^{-b}})^{1/2}) \\
& &
~~~~~
((\trho^{X_{\hat{a}} Y_{b} M}(3) - 
 (\alpha^{X_{\hat{a}}})^{1/2} \otimes \trho^{Y_{b} M}(3) -
 (\beta^{Y_{b}})^{1/2} \otimes \trho^{X_{\hat{a}} M}(3) \\
& &
~~~~~~~
\left.
{} +
(\alpha^{X_{\hat{a}}})^{1/2} \otimes (\beta^{Y_{b}})^{1/2} 
 \otimes \trho^M(3))
\otimes (\alpha^{X^{-a,\hat{a}}})^{1/2} 
\otimes (\beta^{Y^{-b}})^{1/2})
\right] \\
& =  &
\Tr\left[
((\trho^{Y_b M}(3) - \trho^{Y_b M}(3) -
   (\beta^{Y_b})^{1/2} \otimes \trho^{M}(3) +
  (\beta^{Y_b})^{1/2} \otimes \trho^M(3))
 \otimes (\alpha^{X^{-a}})^{1/2} \otimes (\beta^{Y^{-b}})^{1/2}) 
\right.\\
& &
~~~~~~~~
((\trho^{X_{\hat{a}} Y_{b} M}(3) - 
 (\alpha^{X_{\hat{a}}})^{1/2} \otimes \trho^{Y_{b} M}(3) -
 (\beta^{Y_{b}})^{1/2} \otimes \trho^{X_{\hat{a}} M}(3) \\
& &
~~~~~~~~~~~
\left.
{} +
(\alpha^{X_{\hat{a}}})^{1/2} \otimes (\beta^{Y_{b}})^{1/2} 
 \otimes \trho^M(3))
\otimes (\alpha^{X^{-a,\hat{a}}})^{1/2} 
\otimes (\beta^{Y^{-b}})^{1/2})
\right] 
\;=\;
0.
\end{eqnarray*}
This is an example of the mean zero property of the telescoping 
formulation. Similarly, 
any expression corresponding to a fixed choice of
$a = \hat{a}$, $b \neq \hat{b}$, or $a \neq \hat{a}$, $b \neq \hat{b}$
evaluates to zero. The only terms that survive corresponding to
fixed choices $a = \hat{a}$, $b = \hat{b}$, and there are $AB$ of them.
Such an expression evaluates to
\begin{eqnarray*}
\lefteqn{
\Tr\left[
((\trho^{X_a Y_b M}(3) - 
  (\alpha^{X_a})^{1/2} \otimes \trho^{Y_b M}(3) -
  (\beta^{Y_b})^{1/2} \otimes \trho^{X_a M}(3)
\right.
} \\
& &
~~~~~~~
{} +
  (\alpha^{X_a})^{1/2} \otimes (\beta^{Y_b})^{1/2} \otimes \trho^M(3))
 \otimes (\alpha^{X^{-a}})^{1/2} \otimes (\beta^{Y^{-b}})^{1/2}) \\
& &
~~~~~
((\trho^{X_{a} Y_{b} M}(3) - 
 (\alpha^{X_{a}})^{1/2} \otimes \trho^{Y_{b} M}(3) -
 (\beta^{Y_{b}})^{1/2} \otimes \trho^{X_{a} M}(3) \\
& &
~~~~~~~
\left.
{} +
(\alpha^{X_{a}})^{1/2} \otimes (\beta^{Y_{b}})^{1/2} 
 \otimes \trho^M(3))
\otimes (\alpha^{X^{-a}})^{1/2} 
\otimes (\beta^{Y^{-b}})^{1/2})
\right] \\
& = &
\Tr\left[
(\trho^{X_a Y_b M}(3) - 
  (\alpha^{X_a})^{1/2} \otimes \trho^{Y_b M}(3) -
  (\beta^{Y_b})^{1/2} \otimes \trho^{X_a M}(3)
\right. \\
& &
~~~~~~~
\left.
{} +
  (\alpha^{X_a})^{1/2} \otimes (\beta^{Y_b})^{1/2} \otimes \trho^M(3))^2
 \otimes \alpha^{X^{-a}} \otimes \beta^{Y^{-b}} 
\right]\\
& = &
\Tr\left[
(\trho^{X_a Y_b M}(3) - 
  (\alpha^{X_a})^{1/2} \otimes \trho^{Y_b M}(3) -
  (\beta^{Y_b})^{1/2} \otimes \trho^{X_a M}(3)
\right. \\
& &
~~~~~~~
\left.
{} +
  (\alpha^{X_a})^{1/2} \otimes (\beta^{Y_b})^{1/2} \otimes \trho^M(3))^2
\right]\\
& = &
\left\|
(\trho^{X_a Y_b M}(3) - 
  (\alpha^{X_a})^{1/2} \otimes \trho^{Y_b M}(3) -
  (\beta^{Y_b})^{1/2} \otimes \trho^{X_a M}(3) +
  (\alpha^{X_a})^{1/2} \otimes (\beta^{Y_b})^{1/2} \otimes \trho^M(3))^2
\right\|_2^2 \\
& \leq &
\left(
\|\trho^{X_a Y_b M}(3)\|_2 + 
\|(\alpha^{X_a})^{1/2} \otimes \trho^{Y_b M}(3)\|_2 +
\|(\beta^{Y_b})^{1/2} \otimes \trho^{X_a M}(3)\|_2 +
\|(\alpha^{X_a})^{1/2} \otimes (\beta^{Y_b})^{1/2}\otimes \trho^M(3)\|_2
\right)^2 \\
& = &
\left(
\|\trho^{X_a Y_b M}(3)\|_2 + 
\|\trho^{Y_b M}(3)\|_2 +
\|\trho^{X_a M}(3)\|_2 +
\|\trho^M(3)\|_2
\right)^2 \\
& \leq &
16 \cdot \|\trho^{X_a Y_b M}(3)\|_2^2 
\;=\;
16 \cdot 
2^{D_2(\rho^{XYM}(3) \| 
       \alpha^X \otimes \beta^Y \otimes \rho^M)}
\;=\;
16 \cdot 
2^{D_2^\epsilon(\rho^{XYM} \| 
                \alpha^X \otimes \beta^Y \otimes \rho^M)}.
\end{eqnarray*}
The first inequality follows from the triangle inequality for the
Schatten-$\ell_2$ norm, while the second inequality follows from
the monotonicity of $D_2(\cdot \| \cdot)$ under trace out.

We have thus shown that the first term in the telescoping 
Schatten-$\ell_2$ inequality is upper bounded by
\begin{equation}
\label{eq:telescopeconvexsplit4}
\begin{array}{rcl}
\lefteqn{
\left\|
\sum_{a=1}^A \sum_{b=1}^B
((\trho^{X_a Y_b M}(3) - 
  (\alpha^{X_a})^{1/2} \otimes \trho^{Y_b M}(3) -
  (\beta^{Y_b})^{1/2} \otimes \trho^{X_a M}(3) 
\right.
} \\
&  &
~~~~~~~~~~~~~~~~~~
\left.
{} +
  (\alpha^{X_a})^{1/2} \otimes (\beta^{Y_b})^{1/2} \otimes \trho^M(3))
\otimes (\alpha^{X^{-a}})^{1/2} \otimes (\beta^{Y^{-b}})^{1/2})
\right\|_2^2 \\
& \leq &
16 A B \cdot
2^{D_2^\epsilon(\rho^{XYM} \| 
                \alpha^X \otimes \beta^Y \otimes \rho^M)}.
\end{array}
\end{equation}
The next two terms are similarly upper bounded by
\begin{equation}
\label{eq:telescopeconvexsplit5}
\begin{array}{rcl}
\lefteqn{
\left\|
\sum_{a=1}^A \sum_{b=1}^B
(((\alpha^{X_a})^{1/2} \otimes \trho^{Y_b M}(2) -
  (\alpha^{X_a})^{1/2} \otimes (\beta^{Y_b})^{1/2} \otimes \trho^M(2))
\otimes (\alpha^{X^{-a}})^{1/2} \otimes (\beta^{Y^{-b}})^{1/2})
\right\|_2^2
} \\
& \leq &
4 A B (B-1) \cdot
2^{D_2^\epsilon(\rho^{XM} \| \alpha^X \otimes \rho^M)},
~~~~~~~~~~~~~~~~~~~~~~~~~~~~~~~~~~~~~~~~~~~~~~~~~~~~~~~~~~~~~~~~~~~
\end{array}
\end{equation}
and
\begin{equation}
\label{eq:telescopeconvexsplit6}
\begin{array}{rcl}
\lefteqn{
\left\|
\sum_{a=1}^A \sum_{b=1}^B
(((\beta^{Y_b})^{1/2} \otimes \trho^{X_a M}(1) -
  (\alpha^{X_a})^{1/2} \otimes (\beta^{Y_b})^{1/2} \otimes \trho^M(1))
\otimes (\alpha^{X^{-a}})^{1/2} \otimes (\beta^{Y^{-b}})^{1/2})
\right\|_2^2
} \\
& \leq &
4 A(A-1) B \cdot
2^{D_2^\epsilon(\rho^{YM} \| \beta^Y \otimes \rho^M)}.
~~~~~~~~~~~~~~~~~~~~~~~~~~~~~~~~~~~~~~~~~~~~~~~~~~~~~~~~~~~~~~~~~~~
\end{array}
\end{equation}

Putting Equations~\ref{eq:telescopeconvexsplit4},
\ref{eq:telescopeconvexsplit5}, \ref{eq:telescopeconvexsplit6} back into
Equation~\ref{eq:telescopeconvexsplit2}, we get
\begin{eqnarray*}
\lefteqn{
\|\sigma^{X^A Y^B M} - \tau^{X^A Y^B M}\|_1
} \\
& \leq &
8\epsilon +
(AB)^{-1} 
(
\sqrt{16 A B \cdot
2^{D_2^\epsilon(\rho^{XYM} \| 
                \alpha^X \otimes \beta^Y \otimes \rho^M)}
} +
\sqrt{4 A B(B-1) \cdot
2^{D_2^\epsilon(\rho^{XM} \| \alpha^X \otimes \rho^M)}
} \\
&  &
~~~~~~~~~~~~~~~~~~~~~~~~
{} +
\sqrt{4 A(A-1) B \cdot
2^{D_2^\epsilon(\rho^{YM} \| \beta^Y \otimes \rho^M)}
}
) \\
& \leq &
8\epsilon +
\sqrt{16 (A B)^{-1} \cdot
2^{D_2^\epsilon(\rho^{XYM} \| 
                \alpha^X \otimes \beta^Y \otimes \rho^M)}
} +
\sqrt{4 A^{-1}  \cdot
2^{D_2^\epsilon(\rho^{XM} \| \alpha^X \otimes \rho^M)}
} \\
&  &
~~~~~~~~~~~~~~~~
{} +
\sqrt{4 B^{-1} \cdot
2^{D_2^\epsilon(\rho^{YM} \| \beta^Y \otimes \rho^M)}
}
).
\end{eqnarray*}

Putting in the lower bounds on $\log A$, $\log B$ and $\log AB$
into the above inequality finishes the proof of the theorem.
\end{proof}

A similar theorem can be
obtained for any constant number of  parties $X, Y, Z, \ldots,$, with a
completely analogous proof. We state the result below for completeness.
\begin{theorem}
\label{thm:smoothconvexsplitmanyparties}
Let $k$ be a positive integer. Let $X_1, \ldots, X_k$ be $k$ Hilbert
spaces. For any subset
$S \subseteq [k]$, let $X_S := \otimes_{s \in S} X_s$.
Let $\rho^{X_{[k]}M}$ be a subnormalised density matrix.
Let $\alpha^{X_1}, \ldots, \alpha^{X_k}$ be normalised
density matrices such that 
$\supp(\rho^{X_i}) \leq \supp(\alpha^{X_i})$.
For any subset $S \subseteq [k]$, let 
$\alpha^{X_S} := \otimes_{s \in S} \alpha^{X_s}$.
Let $A_1, \ldots, A_k$ be positive integers.
Define for each $i \in [k]$, $X_i^{A_i} := X_i^{\otimes A_i}$.
The $j$th tensor multiplicand in the previous expression will be denoted
by $A_{ij}$ i.e. $X_i^{A_i} = \otimes_{j=1}^{A_i} X_{ij}$.
Define the $|A_i|$-fold tensor product state
$
\alpha^{X_i^{A_i}} := \otimes_{j=1}^{A_i} \alpha^{X_{ij}},
$
and the $|A_i|$ many $(A_i-1)$-fold tensor product states 
$
\alpha^{X_i^{-a_i}} := \otimes_{j=1,j\neq a_i}^{A_i} \alpha^{X_{ij}}
$
for any $a_i \in [A_i]$.
Define the {\em convex split state}
\[
\sigma^{X_1^{A_1} \cdots X_k^{A_k} M} :=
(A_1 \cdots A_k)^{-1} 
\sum_{a_1=1}^{A_1} \cdots \sum_{a_k=1}^{A_k}
\rho^{X_{a_1} \cdots X_{a_k} M} \otimes 
\alpha^{X_1^{-a_1}} \otimes \cdots \otimes \alpha^{X_k^{-a_k}},
\]
and the fully decoupled state 
\[
\tau^{X_1^{A_1} \cdots X_k^{A_k} M} := 
\alpha^{X_1^{A_1}} \otimes \cdots \otimes \alpha^{X_k^{A_k}} 
\otimes \rho^M.
\]
Suppose for each non-empty subset $\{\} \neq S \subseteq [k]$,
\begin{eqnarray*}
\sum_{s \in S} \log A_s 
& > &
D_2^\epsilon(\rho^{X_S M} \| \alpha^{X_S} \otimes \rho^M) + 
\log \epsilon^{-2}.
\end{eqnarray*}
Then,
\[
\|\sigma^{X_1^{A_1} \cdots X_k^{A_k} M} - 
  \tau^{X_1^{A_1} \cdots X_k^{A_k} M}\|_1 < 
2(3^k - 1) \epsilon (\Tr \rho).
\]
\end{theorem}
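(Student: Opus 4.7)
The plan is to follow the same telescoping plus matrix-weighted Cauchy--Schwarz strategy used in Proposition~\ref{prop:smoothconvexsplit}, with the bipartite three-term decomposition replaced by a $(2^k - 1)$-term M\"obius decomposition indexed by the non-empty subsets of $[k]$. The starting point is the inclusion--exclusion identity
\[
\rho^{X_{[k]} M} \;-\; \alpha^{X_{[k]}} \otimes \rho^M
\;=\; \sum_{\emptyset \neq S \subseteq [k]} \Delta_S,
\quad
\Delta_S := \sum_{T \subseteq S} (-1)^{|S \setminus T|}\, \rho^{X_T M} \otimes \alpha^{X_{[k] \setminus T}},
\]
which specialises to the three-term decomposition of Proposition~\ref{prop:smoothconvexsplit} when $k = 2$. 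A short coefficient cancellation (using $\Tr \alpha^{X_i} = 1$ for $i \notin T$) shows that $\Tr_{X_i}[\Delta_S] = 0$ for every $i \in S$; this is the multipartite generalisation of the mean-zero identity.

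Substituting the M\"obius identity into $\sigma - \tau$, tensoring each $\Delta_S$ with the corresponding $\alpha^{X_i^{-a_i}}$ complement factors, and applying the trace-norm triangle inequality yields one telescope term per non-empty $S \subseteq [k]$. For each such $S$, replace every $\rho^{X_T M}$ (for $T \subseteq S$) appearing inside $\Delta_S$ by the marginal on $X_T M$ of the optimiser $\rho^{X_S M}(S) \approx_\epsilon \rho^{X_S M}$ witnessing $D_2^\epsilon(\rho^{X_S M} \| \alpha^{X_S} \otimes \rho^M)$. Monotonicity of trace norm under partial trace caps the error per sub-term by $\epsilon$, so the cumulative smoothing price is
\[
\sum_{\emptyset \neq S \subseteq [k]} 2^{|S|} \epsilon \;=\; (3^k - 1)\epsilon.
\]

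Next, apply Fact~\ref{fact:matrixCauchySchwarz} to each telescope term with weighting matrix $\tau$. Since $\tau$ factorises across all registers, the conjugation $\tau^{-1/4} \circ \cdot$ acts coordinate-wise, and squaring the resulting Schatten-$\ell_2$ norm produces a double sum over index tuples $(a_i)_{i=1}^k$ and $(\hat{a}_i)_{i=1}^k$. The mean-zero identity forces every cross term with $a_i \neq \hat{a}_i$ for some $i \in S$ to vanish, by the same coordinate-wise collapse performed explicitly for $k=2$ in the proof of Proposition~\ref{prop:smoothconvexsplit}: the weighted partial trace over the $X_{i,a_i}$ copy converts the $\alpha^{X_{i,a_i}}$ factor coming from the other copy of $\Delta_S$ into a plain partial trace of $\Delta_S$, which is zero. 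The surviving terms are those with $a_i = \hat{a}_i$ for all $i \in S$; there are $\prod_{i \in S} A_i \cdot \prod_{j \notin S} A_j^2$ of them, and applying the Schatten-$\ell_2$ triangle inequality to the $2^{|S|}$ pieces of $\Delta_S$ together with monotonicity of $D_2$ under partial trace bounds each surviving term by $4^{|S|} \cdot 2^{D_2^\epsilon(\rho^{X_S M} \| \alpha^{X_S} \otimes \rho^M)}$.

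Combined with the $(A_1 \cdots A_k)^{-2}$ normalisation from the squared convex split state, the $S$-block contributes at most $\sqrt{4^{|S|} \cdot 2^{D_2^\epsilon(\rho^{X_S M} \| \alpha^{X_S} \otimes \rho^M)} / \prod_{i \in S} A_i} < 2^{|S|} \epsilon$ under the stated hypothesis on $\sum_{s \in S} \log A_s$. Summing over non-empty $S \subseteq [k]$ gives another $(3^k - 1)\epsilon$, for a final bound of $2(3^k - 1)\epsilon \cdot (\Tr \rho)$ as claimed. The only genuinely new ingredient beyond the $k = 2$ proof is the careful bookkeeping of the M\"obius cancellation and the coordinate-by-coordinate verification of the mean-zero identity; I expect this to be the main potential source of error, but it follows a completely mechanical pattern once organised in inclusion--exclusion form.
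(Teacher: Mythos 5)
Your proposal is correct and matches the approach the paper takes: the theorem is stated with only the remark that the proof is ``completely analogous'' to the proof of Proposition~\ref{prop:smoothconvexsplit}, and your M\"obius/inclusion--exclusion decomposition $\rho^{X_{[k]}M} - \alpha^{X_{[k]}}\otimes\rho^M = \sum_{\emptyset\neq S\subseteq[k]}\Delta_S$ is precisely the $k$-party systematisation of the three-term decomposition used there for $k=2$, with the mean-zero property $\Tr_{X_i}[\Delta_S]=0$ for $i\in S$ driving the same cross-term cancellation after weighting by $\tau^{-1/4}$. The bookkeeping you give — smoothing cost $\sum_{\emptyset\neq S}2^{|S|}\epsilon = (3^k-1)\epsilon$, surviving-term count $\prod_{i\in S}A_i\prod_{j\notin S}A_j^2$, per-term bound $4^{|S|}\cdot 2^{D_2^\epsilon}$, and the resulting $(3^k-1)\epsilon$ from the hypotheses — is exactly right and recovers the stated constant $2(3^k-1)$; the only caveat is that the paper's own Equation~\ref{eq:telescopeconvexsplit5} contains a minor transcription slip in its count and divergence label which your accounting corrects.
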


Observe that Theorem~\ref{thm:smoothconvexsplitmanyparties}
applied to classical quantum states can alternatively be viewed
as a {\em soft classical quantum smooth multipartite covering lemma
in expectation}.
\begin{proposition}
\label{prop:smoothcovering}
Let $k$ be a positive integer. Let $X_1, \ldots, X_k$ be $k$ classical
alphabets. For any subset
$S \subseteq [k]$, let $X_S := (X_s)_{s \in S}$.
Let $p^{X_{[k]}}$ be a normalised probability distribution on $X_{[k]}$.
The notation $p^{X_S}$ denotes the marginal distribution on $X_S$. 
Let $q^{X_1}, \ldots, q^{X_k}$ be normalised
probability distributions on the respective alphabets.
For each $(x_1, \ldots, x_k) \in X_{[k]}$, 
let $\rho^M_{x_1, \ldots, x_k}$
be a subnormalised density matrix on $M$. 
The classical quantum {\em control state} is now defined as
\[
\rho^{X_{[k]} M} :=
\sum_{(x_1, \ldots, x_k) \in X_{[k]}}
p^{X_{[k]}}(x_1, \ldots, x_k) \ketbra{x_1, \ldots, x_k}^{X_{[k]}}
\otimes \rho^M_{x_1, \ldots, x_k}.
\]
Suppose $\supp(p^{X_i}) \leq \supp(q^{X_i})$.
For any subset $S \subseteq [k]$, let 
$q^{X_S} := \times_{s \in S} q^{X_s}$.
Let $A_1, \ldots, A_k$ be positive integers.
For each $i \in [k]$, let
$x_i^{(A_i)} := (x_i(1), \ldots, x_i(A_i))$ denote a $|A_i|$-tuple
of elements from  $X_i$.
Denote the $A_i$-fold product alphabet 
$X_i^{A_i} := X_i^{\times A_i}$, and the product probability distribution
$
q^{X_i^{A_i}} :=
(q^{X_i})^{\times A_i}.
$
For any collection of tuples
$x_i^{(A_i)} \in X_i^{A_i}$, $i \in [k]$, we
define the {\em sample average covering state}
\[
\sigma^M_{x_1^{(A_1}, \ldots, x_k^{(A_k)}} :=
(A_1 \cdots A_k)^{-1}
\sum_{a_1 = 1}^{A_1} \cdots \sum_{a_k = 1}^{A_k}
\frac{p^{X_{[k]}}(x_1(a_1), \ldots, x_k(a_k))}
     {q^{X_1}(x_1(a_1)) \cdots q^{X_k}(x_k(a_k))}
\rho^M_{x_1(a_1), \ldots, x_k(a_k)},
\]
where the fraction term above represents the `change of measure' from
the product probability distribution $q^{X_{[k]}}$ to the
joint probability distribution $p^{X_{[k]}}$.
Suppose for each non-empty subset $\{\} \neq S \subseteq [k]$,
\begin{eqnarray*}
\sum_{s \in S} \log A_s 
& > &
D_2^\epsilon(\rho^{X_S M} \| q^{X_S} \otimes \rho^M) + 
\log \epsilon^{-2}.
\end{eqnarray*}
Then,
\[
\E_{x_1^{(A_1)}, \ldots, x_k^{(A_k)}}[
\|\sigma^M_{x_1^{(A_1)}, \ldots, x_k^{(A_k)}} - \rho^M\|_1
] < 
2(3^k - 1) \epsilon (\Tr \rho).
\]
where the expectation is taken over independent choices of tuples
$x_i^{(A_i)}$ from the distributions $q^{X_i^{A_i}}$, $i \in [k]$.
\end{proposition}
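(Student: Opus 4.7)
The plan is to recognise Proposition~\ref{prop:smoothcovering} as a direct specialisation of Theorem~\ref{thm:smoothconvexsplitmanyparties} to classical-quantum control states, where the auxiliary states $\alpha^{X_i}$ play the role of the covering distributions $q^{X_i}$. The only real work is a bookkeeping identification that rewrites the multipartite convex split quantum state as the coherent/cq encoding of the sample averaging procedure.

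First, in the theorem, I take $\alpha^{X_i} := q^{X_i}$, viewed as a diagonal density matrix in the standard basis of $X_i$. The support hypothesis $\supp(p^{X_i}) \leq \supp(q^{X_i})$ of the proposition immediately implies $\supp(\rho^{X_i}) \leq \supp(\alpha^{X_i})$, so the hypotheses of Theorem~\ref{thm:smoothconvexsplitmanyparties} are met whenever the eight $D_2^\epsilon$ conditions of Proposition~\ref{prop:smoothcovering} hold; indeed, the required lower bounds on $\sum_{s \in S}\log A_s$ are literally the same. The decoupled state $\tau^{X_1^{A_1}\cdots X_k^{A_k} M} = q^{X_1^{A_1}}\otimes\cdots\otimes q^{X_k^{A_k}}\otimes \rho^M$ has, on the classical registers, the product distribution $\prod_i q^{X_i^{A_i}}$ over the tuples $x_i^{(A_i)}$, and the quantum register $M$ is in state $\rho^M$ independent of these.

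The key identification is then to show that the convex split state $\sigma^{X_1^{A_1}\cdots X_k^{A_k} M}$ of the theorem equals the classical-quantum state
\[
\Sigma^{X_1^{A_1}\cdots X_k^{A_k} M} := \sum_{x_1^{(A_1)},\ldots,x_k^{(A_k)}} \Bigl(\prod_i q^{X_i^{A_i}}(x_i^{(A_i)})\Bigr)\, \ketbra{x_1^{(A_1)},\ldots,x_k^{(A_k)}} \otimes \sigma^M_{x_1^{(A_1)},\ldots,x_k^{(A_k)}}.
\]
To see this, expand $\rho^{X_{1,a_1}\cdots X_{k,a_k}M}$ in the standard basis, expand each $\alpha^{X_i^{-a_i}} = q^{X_i^{A_i-1}}$ likewise, and then regroup the sum over $\vec a$ and the sum over inserted classical values into a single sum over full tuples $\vec x^{(A)}$. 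The factor $\prod_i q(x_i(a_i))$ that is missing from the $(i,a_i)$ position combines with the chosen $p(x_1(a_1),\ldots,x_k(a_k))$ to produce exactly the change-of-measure weight $p/\prod_i q$ from the definition of $\sigma^M_{\vec x^{(A)}}$, once we pull out the full product weight $\prod_i\prod_j q(x_i(j)) = \prod_i q^{X_i^{A_i}}(x_i^{(A_i)})$.

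Once this identification is made, the trace distance between two classical-quantum states with identical (and normalised) classical marginals is exactly the expected trace distance of the quantum parts conditioned on the classical values, giving
\[
\bigl\|\Sigma^{X_1^{A_1}\cdots X_k^{A_k} M} - \tau^{X_1^{A_1}\cdots X_k^{A_k} M}\bigr\|_1 = \E_{x_1^{(A_1)},\ldots,x_k^{(A_k)}}\bigl[\|\sigma^M_{x_1^{(A_1)},\ldots,x_k^{(A_k)}} - \rho^M\|_1\bigr].
\]
Applying Theorem~\ref{thm:smoothconvexsplitmanyparties} to the left hand side then yields the desired bound $2(3^k-1)\epsilon(\Tr\rho)$. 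The main potential pitfall is purely notational, namely keeping track of the indices in the regrouping step and making sure that $p$ appears only at the inserted positions while $q$ supplies every other factor; there is no analytic obstacle once that matching is done.
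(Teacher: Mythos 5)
Your reduction is correct and is precisely the observation the paper makes when it states this proposition without proof (``Theorem~\ref{thm:smoothconvexsplitmanyparties} applied to classical quantum states can alternatively be viewed as a soft classical quantum smooth multipartite covering lemma in expectation''): taking $\alpha^{X_i} := q^{X_i}$ as diagonal density matrices, the convex split state becomes block diagonal in the classical basis with conditional quantum blocks $\sigma^M_{x_1^{(A_1)},\ldots,x_k^{(A_k)}}$ weighted by $\prod_i q^{X_i^{A_i}}$, and the trace norm of a block-diagonal difference decomposes into the expected block-wise trace norms. One small correction: the number of $D_2^\epsilon$ conditions is $2^k-1$, not eight (eight was specific to the CMG example in Section~\ref{sec:CMGcovering}); also, the identity you use for cq states does not actually require the quantum blocks to be normalised, only that both states share the same classical weights $\prod_i q^{X_i^{A_i}}(x_i^{(A_i)})$ on the same block structure.
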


The term `in expectation' in the expression `covering lemma in expectation'
means that the sample average state 
$\sigma^M_{x_1^{(A_1)}, \ldots, x_k^{(A_k)}}$ in 
Proposition~\ref{prop:smoothcovering} is close to 
the marginal state $\rho^M$ in expectation
over the independent choices of tuples
$x_i^{(A_i)}$ from the iid distributions $q^{X_i^{A_i}}$, $i \in [k]$.
As remarked after the proof of Lemma~\ref{lem:smoothCMGcovering} above,
the telescoping proof also tells us that 
exactly the same result of Proposition~\ref{prop:smoothcovering} holds 
if the expectation is taken over
the independent choices, over $i \in [k]$, of tuples $x_i^{(A_i)}$ 
taken from a pairwise
independent distribution on the alphabet $X_i^{A_i}$, if the marginal
distribution on any single copy of $X_i$ equals $q^{X_i}$. 
For independent choices of tuples
$x_i^{(A_i)}$ from the iid distributions $q^{X_i^{A_i}}$, one can hence
hope for some better viz. if there is a
{\em covering lemma in concentration}. 
By this, we ask if the sample average 
state is close to the marginal state in concentration i.e. whether
for an overwhelming probability of tuples
$x_i^{(A_i)}$ chosen independently from the iid distributions 
$q^{X_i^{A_i}}$, $i \in [k]$, the sample average state
$\sigma^M_{x_1^{(A_1)}, \ldots, x_k^{(A_k)}}$ is close to 
the marginal state $\rho^M$. The paper \cite{Sen:MatrixChernoff} proves
just such a result.

Anshu and Jain \cite{decoupling_convexsplit} give 
`more economical' unipartite 
non-smooth convex split lemmas that require smaller amounts of catalytic 
ancillas
and  are more efficient / closer to classical than the standard
unipartite convex split lemma of \cite{Jain:convexsplit}. These
lemmas require only pairwise independence amongst certain
(different) marginals of
quantum states involved in their statements. Being unipartite, they
can be easily smoothed. Unfortunately, Anshu and Jain do not give any
multipartite versions for their economical convex split lemmas. We
remark that one can easily prove fully smooth natural multipartite
versions of those economical convex split lemmas via telescoping.
Our versions of those lemmas are stated in terms of 
$D_2^\epsilon$ instead of $D_\infty$ as in the original paper.

We note that the notion of pairwise independence referred to
in Section~\ref{sec:CMGcovering} in the context of the CMG covering
problem, or in the context of the soft covering lemma
Proposition~\ref{prop:smoothcovering}
is different from the notion of pairwise independence in
Anshu and Jain's work \cite{decoupling_convexsplit}. 
All the smooth and multipartite generalisations of Anshu and
Jain's results that we prove by telescoping 
continue to have the pairwise independence
in their sense \cite{decoupling_convexsplit}. 
\begin{theorem}
\label{thm:smoothefficientconvexsplit1}
Let $\rho^{XYM}$ be a subnormalised density matrix.
Let $\mu$ denote the completely mixed state over a Hilbert space. 
Let $A$, $B$ be positive integers.
Let $\{V_a^{X X^{'4}}\}_{a \in [A]}$ be an appropriate unitary
family as defined in \cite{decoupling_convexsplit} (which is constructed
from a family of pairwise independent hash functions and a 1-design
of unitaries), where $X'$ is another Hilbert space of the same dimension
as $X$ and $X^{'4} := (X')^{\otimes 4}$. Similarly, let
$\{V_b^{Y Y^{'4}}\}_{b \in [B]}$ be an appropriate unitary family.
Define the {\em convex split state}
\[
\sigma^{X X^{'4} Y Y^{'4} M} :=
(AB)^{-1} 
\sum_{a=1}^A \sum_{b=1}^B
(V_a \otimes V_b \otimes \one^M) \circ
(\rho^{X Y M} \otimes \mu^{X^{'4} Y^{'4}}),
\]
and the fully decoupled state 
\[
\tau^{X X^{'4} Y Y^{'4} M} :=
\mu^X \otimes \mu^Y \otimes \rho^M \otimes \mu^{X^{'4} Y^{'4}}.
\]
Suppose 
\begin{eqnarray*}
\log A 
& > &
D_2^\epsilon(\rho^{XM} \| \mu^X \otimes \rho^M) + \log \epsilon^{-2} \\
\log B 
& > &
D_2^\epsilon(\rho^{YM} \| \mu^Y \otimes \rho^M) + \log \epsilon^{-2} \\
\log A  + \log B
& > &
D_2^\epsilon(\rho^{XYM} \| \mu^X \otimes \mu^Y \otimes \rho^M) + 
\log \epsilon^{-2}.
\end{eqnarray*}
Then,
$
\|\sigma^{X X^{'4} Y Y^{'4} M} - 
  \tau^{X X^{'4} Y Y^{'4} M}\|_1 < 16 \epsilon (\Tr \rho).
$
\end{theorem}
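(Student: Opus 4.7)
The plan is to mimic the telescoping proof of Proposition~\ref{prop:smoothconvexsplit} essentially verbatim, with the role of the tensor product factors $\alpha^{X^{-a}}$, $\beta^{Y^{-b}}$ now played by the maximally mixed ancillary state $\mu^{X^{'4} Y^{'4}}$, and the role of the ``independence of fresh copies'' $a \neq \hat{a}$ (which made cross terms collapse) played by the pairwise independence properties of the unitary families $\{V_a^{XX^{'4}}\}$ and $\{V_b^{YY^{'4}}\}$ from \cite{decoupling_convexsplit}. We take $\tau^{XX^{'4}YY^{'4}M}$ itself as the weighting matrix in Fact~\ref{fact:matrixCauchySchwarz}. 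Since $\tau$ is a product with $\mu^X \otimes \mu^Y \otimes \mu^{X^{'4} Y^{'4}}$, its negative fractional powers just produce scalar dimension factors on the ancillary spaces, and the nontrivial weighting happens only on the $M$ register via $\rho^M$.

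First I would write down the telescoping decomposition
\[
\sigma - \tau = T_3 + T_2 + T_1
\]
where $T_3$ involves
$\rho^{XYM} - \mu^X \otimes \rho^{YM} - \mu^Y \otimes \rho^{XM} + \mu^X \otimes \mu^Y \otimes \rho^M$,
$T_2$ involves
$\mu^X \otimes \rho^{YM} - \mu^X \otimes \mu^Y \otimes \rho^M$,
and $T_1$ involves
$\mu^Y \otimes \rho^{XM} - \mu^X \otimes \mu^Y \otimes \rho^M$,
each tensored with $\mu^{X^{'4} Y^{'4}}$ and then averaged via $(AB)^{-1} \sum_{a,b} (V_a \otimes V_b) \circ (\cdot)$. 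Next I smooth each of the three pieces by replacing the occurrences of $\rho^{XYM}$, $\rho^{YM}$, $\rho^{XM}$ by the optimisers $\rho(3)$, $\rho(2)$, $\rho(1)$ of the three smooth R\'enyi-2 divergences in the hypothesis, paying at most $8\epsilon$ total in trace distance.

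Then I would apply Fact~\ref{fact:matrixCauchySchwarz} with weight $\tau$ to each of the three smoothed telescoping terms, obtaining expressions of the form
$\|(AB)^{-1} \sum_{a,b} (V_a \otimes V_b) \circ (\trho(\cdot) \otimes \mu^{X^{'4}Y^{'4}})^{(\text{telescoped})}\|_2$,
where the tildes denote conjugation by $(\mu^X \otimes \mu^Y \otimes \rho^M)^{-1/4}$ tensored with the appropriate square roots on the ancillas. Expanding the square $\|\cdot\|_2^2$ as a sum over pairs $(a, \hat{a}, b, \hat{b})$ produces cross terms. Here is where the pairwise independence of the Anshu-Jain families enters: for $a \neq \hat{a}$, the joint action $\E_{a,\hat{a}}[V_a \otimes V_{\hat{a}}^\dag]$ applied to any operator on $X X^{'4}$ is indistinguishable from the corresponding average over two independent Haar-random unitaries, and in particular $\E_a[V_a \circ (\xi^X \otimes \mu^{X^{'4}})] = \mu^X \otimes \mu^{X^{'4}} \cdot \Tr\xi$ for any operator $\xi^X$ (the 1-design property). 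Plugging these two facts into each cross term, the $a \neq \hat{a}$ contributions in the telescoped inner term $T_3$ collapse just as in the calculation following Equation~\ref{eq:telescopeconvexsplit3}: every $\rho(3)$-factor gets replaced by its marginal and the four signed pieces cancel exactly, giving the mean-zero property. The same collapse handles the $b \neq \hat{b}$ direction and the analogous cross terms in $T_2$, $T_1$.

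The only surviving contributions come from diagonal pairs, and these are bounded by the appropriate $2^{D_2(\rho(\cdot)\|\cdot)}$ via the triangle inequality for $\|\cdot\|_2$, monotonicity of $D_2$ under partial trace, and the identity $\Tr[\mu \cdot \mu^{-1/2} \xi \mu^{-1/2}] = \Tr\xi$ on the ancillas, exactly as in Equations~\ref{eq:telescopeconvexsplit4}--\ref{eq:telescopeconvexsplit6}. Combining with the $8\epsilon$ smoothing cost and using the hypothesised lower bounds on $\log A$, $\log B$, $\log A + \log B$ yields the claimed bound $16\epsilon(\Tr\rho)$. The main obstacle I anticipate is a bookkeeping one: verifying carefully that the pairwise independence property stated in \cite{decoupling_convexsplit} gives \emph{exactly} the two moment identities used above (first moment $=$ complete randomisation of the $X$-register against the ancilla, second moment on $a \neq \hat{a}$ $=$ independent 1-designs), and that the four copies $X^{'4}$ are indeed what is needed for this to hold; once that is checked, the mean-zero cancellation proceeds as in Proposition~\ref{prop:smoothconvexsplit} without any genuinely new ideas.
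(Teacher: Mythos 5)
The paper does not actually prove Theorem~\ref{thm:smoothefficientconvexsplit1}; it states it and remarks only that ``one can easily prove fully smooth natural multipartite versions of those economical convex split lemmas via telescoping.'' Your sketch is therefore not being compared against a written proof but against an implicit plan, and at that level it follows the same strategy as Proposition~\ref{prop:smoothconvexsplit}: telescope, smooth each piece, apply Fact~\ref{fact:matrixCauchySchwarz} with weight $\tau$, exploit mean-zero cancellation for cross terms, bound diagonal terms by $D_2^\epsilon$. The telescoping identity itself is fine since every unitary fixes $\mu^X \otimes \mu^{X^{'4}}$, and the smoothing step transfers directly.

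The genuine gap, which you flag but do not close, is the moment-identity verification for the Anshu--Jain family, and I would push back on the way you describe what is needed. In Proposition~\ref{prop:smoothconvexsplit}, the mean-zero cancellation for, say, $a \neq \hat a$, $b = \hat b$ happens because the tensor factor $\alpha^{X^{-a}}$ has disjoint support from the $\rho$-bearing slot of the $\hat a$-indexed term; tracing over $X_a$ literally produces $\Tr_X[\bar\rho(3)] = 0$. In the economical setting nothing is traced out: the cross term for a \emph{fixed} pair $a \neq \hat a$, after cancelling $V_b$, has the form $\Tr[\tilde\Xi(3)\,(V_a^\dagger V_{\hat a} \otimes \one)\,\tilde\Xi(3)\,(V_{\hat a}^\dagger V_a \otimes \one)]$, and there is no reason for an individual such term to vanish. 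What you need is that the \emph{sum} over $a \neq \hat a$ collapses, and this requires a structural property of $\{V_a\}$ beyond what you state. Your appeal to ``$\E_a[V_a \circ (\xi^X \otimes \mu^{X^{'4}})] = \mu^X \otimes \mu^{X^{'4}}\Tr\xi$ (the 1-design property)'' is also not literally a 1-design property (a unitary 1-design only needs to fix the identity); it is a full depolarization statement on the $X$-register conditioned on the ancilla being maximally mixed, and if it held exactly then $T_3$, $T_2$, $T_1$ would \emph{all} be identically zero and the lemma would be a triviality, which it is not. So the correct statement has to be an approximate or second-moment condition tied to the pairwise-independent-hash-plus-1-design construction, and deriving the clean cancellation of the cross-term sum from that condition is exactly where the four ancilla copies $X^{'4}$ and the precise Anshu--Jain pairwise independence structure must enter. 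Until you write out that moment calculation and show that the $a \neq \hat a$ contributions either vanish or are dominated by the diagonal in the way the tensor-power argument delivers for free, the proof is incomplete at its central step.

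One further small point: the paper explicitly warns (in the paragraph preceding the theorem) that the notion of pairwise independence in the Anshu--Jain construction is \emph{different} from the pairwise independence of classical marginals discussed in Section~\ref{sec:CMGcovering}. Your sketch treats them somewhat interchangeably (``indistinguishable from the corresponding average over two independent Haar-random unitaries''), which is exactly the conflation the paper cautions against. Disentangling these two notions is part of the verification you have deferred.
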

\begin{theorem}
\label{thm:smoothefficientconvexsplit2}
Let $\rho^{XYM}$ be a subnormalised density matrix.
Let $\mu$ denote the completely mixed state over a Hilbert space. 
Let $A$, $B$ be positive integers and $Q_X$, $Q_Y$ be two Hilbert spaces 
of dimension 2 each.
Let $\{U_a^{X X^{'3} Q_X^2}\}_{a \in [A]}$ be an appropriate unitary
family as defined in \cite{decoupling_convexsplit} (which is a 
family of so-called `classical' unitaries derived from operations
over a certain cyclic group),
where $X'$ is another Hilbert space of the same dimension
as $X$ and $X^{'3} := (X')^{\otimes 3}$. Similarly, let
$\{V_b^{Y Y^{'3} Q_Y^2}\}_{b \in [B]}$ be an appropriate unitary family.
Define the {\em convex split state}
\[
\sigma^{X X^{'3} Q_X^2 Y Y^{'3} Q_Y^2 M} :=
(AB)^{-1} 
\sum_{a=1}^A \sum_{b=1}^B
(V_a \otimes V_b \otimes \one^M) \circ
(\rho^{X Y M} \otimes \mu^{X^{'3} Q_X Y^{'3} Q_Y} 
 \otimes \ketbra{0}^{Q_X Q_Y}),
\]
and the fully decoupled state 
\[
\tau^{X X^{'3} Q_X^2 Y Y^{'3} Q_Y^2 M} :=
\mu^X \otimes \mu^Y \otimes \rho^M \otimes 
\mu^{X^{'3} Q_X^2 Y^{'3} Q_Y^2}.
\]
Suppose 
\begin{eqnarray*}
\log A 
& > &
D_2^\epsilon(\rho^{XM} \| \mu^X \otimes \rho^M) + \log \epsilon^{-2} \\
\log B 
& > &
D_2^\epsilon(\rho^{YM} \| \mu^Y \otimes \rho^M) + \log \epsilon^{-2} \\
\log A  + \log B
& > &
D_2^\epsilon(\rho^{XYM} \| \mu^X \otimes \mu^Y \otimes \rho^M) + 
\log \epsilon^{-2}.
\end{eqnarray*}
Then,
$
\|\sigma^{X X^{'3} Q_X^2 Y Y^{'3} Q_Y^2 M} - 
  \tau^{X X^{'3} Q_X^2 Y Y^{'3} Q_Y^2 M}\|_1 < 16 \epsilon (\Tr \rho).
$
\end{theorem}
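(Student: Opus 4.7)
The plan is to adapt the telescoping proof of Proposition~\ref{prop:smoothconvexsplit} to the economical unitary setting, following the same template that yields Theorem~\ref{thm:smoothefficientconvexsplit1}. The key structural substitution is that the $A$-fold tensor product $\alpha^{X^A}$ of Proposition~\ref{prop:smoothconvexsplit} is replaced here by the uniform mixture of conjugations $A^{-1}\sum_a U_a(\cdot) U_a^\dag$ acting on the single tensor factor $X X^{'3} Q_X^2$, and analogously for $Y$. Fact~\ref{fact:matrixCauchySchwarz} with weighting matrix $\tau$ is the natural starting point; since $\tau$'s reduction on all ancilla registers is maximally mixed and hence commutes with $U_a \otimes V_b$, conjugation by these unitaries behaves symmetrically with respect to the Schatten-$\ell_2$ inner product induced by $\tau$.

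First, I would invoke the triangle inequality to split $\|\sigma - \tau\|_1$ into three telescoping terms $T_{XY}$, $T_X$, $T_Y$, mirroring Equation~(\ref{eq:telescopeconvexsplit2}). Each term is then replaced by its smoothed counterpart obtained from the optimising subnormalised states $\rho^{XYM}(3)$, $\rho^{XM}(1)$, $\rho^{YM}(2)$ coming from the three hypothesis inequalities, incurring an additive $8\epsilon$ cost. Next, each smoothed term is upper bounded by the matrix-weighted Cauchy-Schwarz inequality, reducing the trace norm to a Schatten-$\ell_2$ norm weighted by $\tau^{-1/4}$. Squaring and expanding leaves a sum of the form $\sum_{a, \hat a=1}^A \sum_{b, \hat b=1}^B \Tr[\cdots]$.

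This is where the properties of the unitary family from \cite{decoupling_convexsplit} enter. The cyclic-group construction guarantees: (i) a one-design-like identity $A^{-1}\sum_a U_a (\eta^X \otimes \mu^{X^{'3} Q_X} \otimes \ketbra{0}^{Q_X}) U_a^\dag = \mu^X \otimes \mu^{X^{'3} Q_X^2}$ for every density matrix $\eta^X$; and (ii) a pairwise independence property, so that for $a \neq \hat a$ the second moment $(A(A-1))^{-1}\sum_{a \neq \hat a} (U_a \otimes U_{\hat a}) \xi (U_a \otimes U_{\hat a})^\dag$ applied to two copies $\xi$ of the ancilla input factorises as a tensor product of fully depolarised states between the two copies. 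Together with the analogous properties for $\{V_b\}$, these substitute respectively for the simple marginal identities $\E_a[\rho^{X_a}] = \rho^X$ and $\E_{a \neq \hat a}[\rho^{X_a} \otimes \rho^{X_{\hat a}}] = \rho^X \otimes \rho^X$ that drove the mean zero calculation in Proposition~\ref{prop:smoothconvexsplit}.

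Given those two identities, the mean zero mechanism runs as before: all off-diagonal cross terms $a \neq \hat a$ or $b \neq \hat b$ collapse to zero, while the diagonal contributions yield $2^{D_2(\rho^{XYM}(3) \| \mu^X \otimes \mu^Y \otimes \rho^M)}$ for $T_{XY}$ and analogously $2^{D_2(\rho^{XM}(1) \| \mu^X \otimes \rho^M)}$, $2^{D_2(\rho^{YM}(2) \| \mu^Y \otimes \rho^M)}$ for $T_X, T_Y$, carrying prefactors $(AB)^{-1}$, $A^{-1}$, $B^{-1}$ after the overall $(AB)^{-1}$ normalisation is applied. Substituting the hypothesis inequalities yields the desired $16\epsilon(\Tr \rho)$ bound. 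The main obstacle is not the algebra, which essentially transliterates the tensor product proof of Proposition~\ref{prop:smoothconvexsplit}, but rather the clean verification of the two averaging identities for the specific cyclic-group construction in \cite{decoupling_convexsplit}; in particular one must confirm that the $Q_X^2$ ancilla initialised half to $\ket{0}$ and half to $\mu$ does yield the fully mixed state on $Q_X^2$ after averaging over $a$, and that the classical pairwise independent randomness carries through when two copies are tensored together for the second moment calculation.
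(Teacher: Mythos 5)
The paper states Theorem~\ref{thm:smoothefficientconvexsplit2} without proof, remarking only that it follows by the same telescoping technique as Proposition~\ref{prop:smoothconvexsplit}; your proposal is a correct elaboration of exactly that remark: the three-term triangle-inequality decomposition, the $8\epsilon$ smoothing cost, Cauchy--Schwarz with weighting matrix $\tau$, and the mean-zero cancellation of cross terms all carry over once the $A$-fold iid tensor power is replaced by the two structural facts you identify for the economical unitary family, namely the first-moment depolarisation identity $A^{-1}\sum_a U_a(\eta^X\otimes\mu^{X^{'3}Q_X}\otimes\ketbra{0}^{Q_X})U_a^\dag = \mu^X\otimes\mu^{X^{'3}Q_X^2}$ and the pairwise-independent factorisation of the $a\neq\hat a$ second moment. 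You rightly flag the verification of these two identities for the Anshu--Jain cyclic-group construction of \cite{decoupling_convexsplit} as the one substantive remaining step; granting them, the arithmetic transliterates verbatim from Proposition~\ref{prop:smoothconvexsplit} to give the $16\epsilon(\Tr\rho)$ bound.
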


Our fully smooth convex split result in 
Proposition~\ref{prop:smoothconvexsplit} easily allows us to prove
a fully smooth one shot inner bound for the generalised Slepian
Wolf problem of Anshu, Jain and Warsi \cite{anshu:slepianwolf}. Such
a result was open till now. Anshu, Jain and Warsi stated their one
shot inner bound in terms of non-smooth $D_\infty$ and smooth
$D^\epsilon_H$. 
Our fully smooth result is stated in terms of 
smooth $D_2^\epsilon$ and smooth $D^\epsilon_H$. 
\begin{theorem}
\label{thm:slepianwolf}
Let $0 < \epsilon < 1$. Let Alice hold registers $A M$, Bob 
$B N$ and Charlie $C$. The three parties share a quantum state 
which is purified by a reference system $R$. Let 
$\ket{\Psi}^{R (AM) (BN) C}$ denote that state. Alice and Charlie,
as well as Bob and Charlie share independent unlimited prior entanglement.
Alice and Bob perform local quantum computations on their respective
registers. Alice sends a classical message of $R_1$ bits to
Charlie. Bob send a classical message of $R_2$ bits to Charlie. Charlie
does a local quantum operation on his registers conditioned on the
two received messages. At the end of this protocol, Alice, Bob,
Charlie and Reference share a (not necessarily pure) quantum state
$\Phi^{R A B (CMN)}$ where Charlie now possesses the registers
$C$, $M$, $N$. Let $\sigma^M$, $\omega^N$ be density matrices.

Suppose the rates satisfy the following inequalities:
\begin{eqnarray*}
R_1 
& > &
D^\epsilon_2(\Psi^{R AM B} \| \Psi^{R A B} \otimes \sigma^M) -
D^{\epsilon^2}_H(\Psi^{AM} \| \Psi^{A} \otimes \sigma^M) + 
\log \epsilon^{-2}, \\
R_2 
& > &
D^\epsilon_2(\Psi^{R A BN} \| \Psi^{R A B} \otimes \omega^N) -
D^{\epsilon^2}_H(\Psi^{BN} \| \Psi^{B} \otimes \omega^N) + 
\log \epsilon^{-2}, \\
R_1 + R_2 
& > &
D^\epsilon_2(\Psi^{R AM BN} \| \Psi^{R A B} \otimes \sigma^M \otimes
	                       \omega^N) \\
& &
~~~~~~
{} -
D^{\epsilon^2}_H(\Psi^{AM} \| \Psi^{A} \otimes \sigma^M) - 
D^{\epsilon^2}_H(\Psi^{BN} \| \Psi^{B} \otimes \omega^N) + 
\log \epsilon^{-2}. 
\end{eqnarray*}
Then,
$
\|\Phi^{R A B (CMN)} - \Psi^{R (AM) (BN) C} \|_1 < 21 \epsilon.
$
\end{theorem}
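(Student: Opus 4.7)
The plan is to follow the Anshu--Jain--Warsi framework~\cite{anshu:slepianwolf} for the generalised Slepian--Wolf problem, but with their non-smooth multipartite convex split lemma replaced by our fully smooth Proposition~\ref{prop:smoothconvexsplit}. This substitution will immediately upgrade the inner bound to be fully smooth, producing the $D^\epsilon_2$ terms in the rate expressions; the $-D^{\epsilon^2}_H$ corrections will then arise from a standard hypothesis-testing based bin-decoding step on Charlie's side, analogous to random binning in classical Slepian--Wolf.

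First I would set $s_1 := D^{\epsilon^2}_H(\Psi^{AM} \| \Psi^A \otimes \sigma^M)$, $s_2 := D^{\epsilon^2}_H(\Psi^{BN} \| \Psi^B \otimes \omega^N)$, $A := 2^{\lceil R_1 + s_1 \rceil}$, $B := 2^{\lceil R_2 + s_2 \rceil}$. Using their shared prior entanglement, Alice and Charlie prepare $A$ purifications of $\sigma^M$ (Alice holds $A$ copies of $M$, Charlie holds the $A$ purifying copies $M'$); Bob and Charlie similarly prepare $B$ purifications of $\omega^N$. Alice then places her actual $M$ register at a uniformly random position $a \in [A]$ and Bob places his $N$ at a uniformly random position $b \in [B]$. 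Applying Proposition~\ref{prop:smoothconvexsplit} with $X := M$, $Y := N$, $\alpha^X := \sigma^M$, $\beta^Y := \omega^N$, and the proposition's reference register playing the role of $RAB$, gives, after substituting $\log A = R_1 + s_1$ and $\log B = R_2 + s_2$, exactly the three rate inequalities of the theorem. Its conclusion places the resulting convex split state within trace distance $16\epsilon$ of the fully decoupled state $\Psi^{RABC} \otimes \sigma^{\otimes A} \otimes \omega^{\otimes B}$.

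Next I would bin the positions as $a = (a_1, a_2)$ with $a_2 \in [2^{R_1}]$, $a_1 \in [2^{s_1}]$, and similarly $b = (b_1, b_2)$. Alice transmits only the bin index $a_2$ to Charlie (using $R_1$ classical bits) and Bob transmits only $b_2$ (using $R_2$ bits). Charlie then runs a port-style pretty-good-measurement decoder within each bin, using the operator-inequality witness supplied by $D^{\epsilon^2}_H$ to separate the true ``loaded'' purification copy inside bin $a_2$ from the other $2^{s_1} - 1$ unloaded ones, and similarly inside bin $b_2$. Upon successful identification he relabels his registers so that the correctly identified purification copies become his new $M$ and $N$. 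Triangle inequality then chains the $16\epsilon$ error from the convex split with the decoding errors produced by the two hypothesis tests, yielding the overall bound $21\epsilon$.

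The hard part will be the decoding step: one must argue that Charlie's hypothesis-testing projector inside each bin identifies the correct loaded copy without destroying the purification structure shared with the reference $R$. This is precisely where the quadratic smoothing $\epsilon^2$ in $D^{\epsilon^2}_H$ is crucial, since it provides the slack needed for a pretty-good measurement (whose error scales like the square root of the average bin indistinguishability) to succeed with error $O(\epsilon)$, and for the gentle measurement lemma to preserve the purification on $M', N'$ up to $O(\epsilon)$ in trace distance. Handling this cleanly while simultaneously tracking the multipartite decoupling provided by Proposition~\ref{prop:smoothconvexsplit} --- and verifying that the various error contributions genuinely add rather than compound multiplicatively --- is the central technical obstacle.
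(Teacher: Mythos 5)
Your proposal follows exactly the route the paper intends: Theorem~\ref{thm:slepianwolf} is presented in the paper without its own proof, being flagged as an ``easy'' consequence of Proposition~\ref{prop:smoothconvexsplit} plugged into the Anshu--Jain--Warsi protocol architecture, and that is precisely what you do. You correctly identify the role of each divergence: the $D^\epsilon_2$ terms arise from the three conditions of the smooth bipartite convex split lemma (with $X := M$, $Y := N$, $\alpha := \sigma^M$, $\beta := \omega^N$, reference $= RAB$), and the $-D^{\epsilon^2}_H$ savings arise from binning the position index together with a hypothesis-testing decoder inside each bin, with the triangle inequality chaining the $16\epsilon$ convex-split error against the decoding and gentle-measurement errors to reach $21\epsilon$. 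Your arithmetic $\log A = R_1 + s_1 > D^\epsilon_2 + \log\epsilon^{-2}$ is also the right instantiation.

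Two small points worth cleaning up, neither of which changes the architecture. First, if the convex-split reference is $RAB$ (as the theorem's $D^\epsilon_2(\Psi^{RAMB}\,\|\,\Psi^{RAB}\otimes\sigma^M)$ conditions require), the decoupled target is $\Psi^{RAB} \otimes \sigma^{\otimes A} \otimes \omega^{\otimes B}$, not $\Psi^{RABC} \otimes \cdots$ as you wrote; $C$ only re-enters at the Uhlmann/purification stage of the recovery argument. Replacing $RAB$ with $RABC$ as the reference would instead produce the larger divergence $D^\epsilon_2(\Psi^{RABCM}\,\|\,\Psi^{RABC}\otimes\sigma^M)$ by data processing, which does not match the theorem. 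Second, you describe the within-bin hypothesis test as a pretty-good measurement performed by Charlie, but the hypothesis-testing divergence $D^{\epsilon^2}_H(\Psi^{AM}\,\|\,\Psi^A \otimes \sigma^M)$ is between states on Alice's registers $A$ and $M$; in the AJW protocol the position-based test that yields this saving is effectively done on the sender's side using her retained register $A$, not by Charlie's $C$. Worth reconciling that with how you set up the decoding step, but it does not affect the correctness of the overall plan.
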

The above fully smooth one shot result easily implies the following
result in the asymptotic iid limit, stated in terms of Shannon
mutual information $I(\cdot : \cdot)$ only. Anshu et al. 
\cite{anshu:slepianwolf}
had proved the asymptotic iid result below using ad hoc techniques
assuming that $C$ was trivial i.e. Charlie did not have any prior 
information. They had left the general asymptotic iid case open. 
The improved convex split lemma of 
Cheng et al. \cite{Cheng:convexsplit} does allow one to prove the
asymptotic iid result below without any restriction on $C$, but 
still fails to prove
the smooth one shot result in Theorem~\ref{thm:slepianwolf} above.
\begin{corollary}
\label{cor:slepianwolf}
In the asymptotic iid limit, the communcation rates per channel
use satisfy the following inner bound:
\begin{eqnarray*}
R_1 
& > &
I(R A B : M)_\Psi - I(A : M)_\Psi, \\
R_2 
& > &
I(R A B : N)_\Psi - I(B : N)_\Psi, \\
R_1 + R_2 
& > &
I(R A B : M N)_\Psi + I(M : N)_\Psi
- I(A : M)_\Psi - I(B : N)_\Psi.
\end{eqnarray*}
\end{corollary}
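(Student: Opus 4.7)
The plan is to apply Theorem~\ref{thm:slepianwolf} to the $n$-fold iid state $(\Psi^{R(AM)(BN)C})^{\otimes n}$, divide through by $n$, and let $n \to \infty$ followed by $\epsilon \to 0$. For fixed $\epsilon$, I would first choose the auxiliary density matrices $\sigma^M := \Psi^M$ and $\omega^N := \Psi^N$ in the one-shot statement (any fixed choice will do, but this one is convenient). Applying Theorem~\ref{thm:slepianwolf} to the $n$-fold tensor product shows that if the per-block rates $n R_1$, $n R_2$ exceed the right hand sides of the three inequalities evaluated at $\Psi^{\otimes n}$ (with auxiliary states $(\sigma^M)^{\otimes n}$, $(\omega^N)^{\otimes n}$), then there is a protocol with error at most $21 \epsilon$.

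The next step is to invoke the asymptotic equipartition property for the smooth Rényi-$2$ divergence and for the smooth hypothesis testing divergence. For any subnormalised $\alpha$ and any positive semidefinite $\beta$ acting on the same space, both
\[
\lim_{n \to \infty} \frac{1}{n} D_2^\epsilon(\alpha^{\otimes n} \| \beta^{\otimes n}) = D(\alpha \| \beta), \qquad \lim_{n \to \infty} \frac{1}{n} D_H^{\epsilon^2}(\alpha^{\otimes n} \| \beta^{\otimes n}) = D(\alpha \| \beta)
\]
hold, the first by the standard quantum AEP for R\'{e}nyi divergences with $\alpha > 1$ (recalling, as noted in the paper after Lemma~\ref{lem:smoothCMGcovering}, that $D_2^\epsilon$ is equivalent to $D_\infty^\epsilon$ up to tweaks in $\epsilon$ and additive $\mathrm{poly}(\log \epsilon^{-1})$ corrections independent of dimension), and the second by quantum Stein's lemma. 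The additive $\log \epsilon^{-2}$ terms in Theorem~\ref{thm:slepianwolf} contribute $O(1/n)$ and vanish.

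After dividing by $n$ and passing to the limit, each constraint becomes a difference of Shannon (Umegaki) divergences. Here one uses the chain rule identity
\[
D(\Psi^{XY} \| \Psi^X \otimes \sigma^Y) = I(X : Y)_\Psi + D(\Psi^Y \| \sigma^Y),
\]
and the analogous bipartite version for the sum rate:
\[
D(\Psi^{X Y_1 Y_2} \| \Psi^X \otimes \sigma^{Y_1} \otimes \omega^{Y_2}) = I(X : Y_1 Y_2)_\Psi + I(Y_1 : Y_2)_\Psi + D(\Psi^{Y_1} \| \sigma^{Y_1}) + D(\Psi^{Y_2} \| \omega^{Y_2}).
\]
Applying these to each term of each inequality, the $D(\Psi^M \| \sigma^M)$ and $D(\Psi^N \| \omega^N)$ contributions cancel exactly between the positive $D_2^\epsilon$ term and the negative $D_H^{\epsilon^2}$ terms, leaving precisely the three mutual information bounds in the statement of the corollary. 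The error $21 \epsilon$ is then made arbitrarily small by sending $\epsilon \to 0$, so the rate triple is achievable.

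The only subtle step is the asymptotic evaluation of the smooth divergences; for the R\'{e}nyi-$2$ quantity one should invoke the equivalence (up to smoothing tweaks) with $D_\infty^\epsilon$ and then the standard AEP, while for $D_H^{\epsilon^2}$ one appeals directly to quantum Stein. Everything else is bookkeeping with the classical chain rule for Shannon divergence and the cancellation of the arbitrary auxiliary states, so no genuine obstacle remains.
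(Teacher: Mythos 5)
Your proof is correct and takes precisely the route the paper has in mind (the paper offers no explicit proof, saying only that the one-shot result "easily implies" the corollary): apply Theorem~\ref{thm:slepianwolf} to $n$-fold iid tensor powers with auxiliary states chosen as tensor powers, invoke quantum AEP for $D_\infty^\epsilon$ (and hence $D_2^\epsilon$ via the equivalence noted in the paper) and quantum Stein's lemma for $D_H^{\epsilon^2}$, divide by $n$, and let $n \to \infty$ then $\epsilon \to 0$. The chain-rule decomposition of the Umegaki divergence into mutual information plus marginal relative-entropy terms, and the exact cancellation of the auxiliary-state contributions $D(\Psi^M\|\sigma^M)$ and $D(\Psi^N\|\omega^N)$ between the positive and negative terms in each rate bound, are all verified correctly and yield exactly the stated pentagon.
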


The smooth multipartite convex split lemma also allows us to easily
prove the natural pentagonal one shot inner bound for sending 
private classical information over a wiretap QMAC. This problem was
studied earlier in \cite{Chakraborty:wiretapQMAC}, but the authors could 
only obtain a subset of the pentagonal rate region due to the lack
of a smooth multipartite convex split lemma. Nevertheless, their
one shot region was good enough to lead to the pentagonal inner bound
in the asymptotic iid limit. Of course, obtaining the pentagonal
region in the asymptotic iid limit is trivial given our pentagonal
one shot inner bound.

Let $\cN^{A B \rightarrow C E}$ denote a wiretap QMAC from two
senders Alice, Bob to a single legitimate receiver Charlie and
an eavesdropper Eve. Alice, Bob would like to send classical
messages $m \in 2^{R_1}$, $n \in 2^{R_2}$ respectively to Charlie
by using the channel $\cN$ in such a way that Eve gets almost
no information about $(m,n)$. For this, they adopt the following
strategy: 
Alice fixes a classical alphabet $X$ and a classical to
quantum encoding $x \mapsto \alpha^A_x$ from classical symbols in
$X$ to quantum states in the input Hilbert space $A$.
Similarly, Bob fixes a classical alphabet $Y$ and a classical to
quantum encoding $y \mapsto \beta^B_x$ from classical symbols in
$Y$ to quantum states in the input Hilbert space $B$. 
In order to send her message $m$, Alice encodes $m$ into a certain
mixture of states from the ensemble $\{\alpha^A_x\}_x$. Bob adopts
a similar strategy; he encodes $n$  into a certain mixture of states
from the ensemble $\{\beta^B_y\}_y$.

Good encoding strategies for Alice and Bob were described in the
work of \cite{Chakraborty:wiretapQMAC}. For doing so, the authors 
defined a cq `control state' as follows: Define a new
`timesharing' alphabet $Q$. Put a joint probability distribution
on $Q \times X \times Y$ of the form $p(q) p(x|q) p(y|q)$ i.e.
the distributions on $X$ and $Y$ are independent conditioned on 
any $q \in Q$.
Consider the cq `control' state
\[
\sigma^{Q X Y C E} :=
\sum_{q \in Q} \sum_{x \in X} \sum_{y \in Y}
p(q) p(x|q) p(y|q) \ketbra{q,x,y}^{QXY} \otimes 
\cN^{A B\rightarrow CE} (\alpha^A_x \otimes \beta^B_y).
\]
The {\em smooth conditional hypothesis testing mutual information} can now
be defined in the standard fashion e.g.
\begin{eqnarray*}
I^\epsilon_H(XY : C | Q)_\sigma 
& := &
D^\epsilon_H(\sigma^{QXYC} \| \bar{\sigma}^{XY:C|Q}), \\
\bar{\sigma}^{XY:C|Q} 
& := &
\sum_{qxy} p(q)p(x|q) p(y|q) \ketbra{q,x,y}^{QXY} \otimes
\sigma^C_q, \\
\sigma^C_q 
& := &
\sum_{x,y} p(x|q) p(y|q) \sigma^C_{xy}, ~~
\sigma^C_{xy} 
 := 
\Tr_E[\cN^{A B\rightarrow CE} (\alpha^A_x \otimes \beta^B_y)].
\end{eqnarray*}
Then \cite{Chakraborty:wiretapQMAC} showed that a random codebook 
construction together with standard obfuscation
techniques using the control state gives rise to a private coding
scheme for $\cN$. Analysing their same scheme using the fully smooth
multipartite convex split lemma gives the following theorem.
\begin{theorem}
\label{thm:wiretapQMAC}
Let the rates $R_1$, $R_2$ satisfy the following inequalities.
\begin{eqnarray*}
R_1
& < & 
I^\epsilon_H(X : YC | Q)_\sigma - 
I^\epsilon_\infty(X : E | Q)_\sigma + \log \epsilon^{-1}, \\
R_2
& < & 
I^\epsilon_H(Y : XC | Q)_\sigma - 
I^\epsilon_\infty(Y : E | Q)_\sigma + \log \epsilon^{-1}, \\
R_1 + R_2
& < & 
I^\epsilon_H(XY : C | Q)_\sigma - 
I^\epsilon_\infty(XY : E | Q)_\sigma + \log \epsilon^{-1}.
\end{eqnarray*}
Then,
\begin{eqnarray*}
\E_{m,n}[\mbox{probability Charlie decodes $(m,n)$ incorrectly}] 
& < &
50 \sqrt{\epsilon}
~~~~~~~ \mbox{accurate transmission}, \\
\E_{m,n}[\|\sigma^E_{m,n} - \sigma^E\|_1] 
& < & 
16 \sqrt{\epsilon}
~~~~~~~ \mbox{high privacy},
\end{eqnarray*}
where $\E_{m,n}[\cdot]$ denotes the expectation over a uniform
choice of message pair $(m,n) \in [2^{R_1}] \times [2^{R_2}]$,
$\sigma^E_{m,n}$ denotes Eve's state when the message pair 
$(m,n)$ is sent, and $\sigma^E$ denotes the marginal of the control
state on $E$.
\end{theorem}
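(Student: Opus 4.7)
The plan is to follow the random coding scheme of \cite{Chakraborty:wiretapQMAC} and swap in the fully smooth multipartite convex split lemma Proposition~\ref{prop:smoothconvexsplit} to bound Eve's leakage. I would fix obfuscation block sizes $L$ and $M$ with $\log L := I^\epsilon_\infty(X:E|Q)_\sigma + \log\epsilon^{-1}$ and $\log M := I^\epsilon_\infty(Y:E|Q)_\sigma + \log\epsilon^{-1}$. For each $(m,l) \in [2^{R_1}] \times [L]$ I sample $x_{m,l}$ iid from $p(x|q)$, and symmetrically build Bob's codebook indexed by $(n,m') \in [2^{R_2}] \times [M]$. The timesharing letter $q$ may be treated as shared common randomness and derandomised to a single deterministic sequence at the very end. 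To send $(m,n)$ Alice picks $l$ uniformly and transmits $\alpha^A_{x_{m,l}}$; Bob does the symmetric thing.

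For the privacy analysis, fix any $(m,n)$, condition on $q$, and apply Proposition~\ref{prop:smoothconvexsplit} with $A := L$, $B := M$, the reference register $E$ playing the role of $M$, and $\alpha^X := \sigma^X_q$, $\beta^Y := \sigma^Y_q$. The three hypotheses of the proposition become the $q$-conditional smooth R\'enyi-2 mutual informations of $X$, $Y$, and $XY$ against $E$; after averaging over $q$ these become $I^\epsilon_2(X:E|Q)_\sigma$ etc., which are bounded above by the $I^\epsilon_\infty$ counterparts used in the statement (with only dimension-independent tweaks in the smoothing parameter, per \cite{Jain:minimax}). Thus, for every fixed $(m,n)$,
\[
\E_{\text{codebook}}\|\sigma^E_{m,n} - \sigma^E\|_1 \;\leq\; 16\epsilon,
\]
and since the codebook is chosen independently of $(m,n)$ the same bound holds after averaging over $(m,n)$.

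For the accuracy analysis, I would view the effective non-private codebook indexed by $((m,l),(n,m'))$ of size $2^{R_1}L \times 2^{R_2}M$ and invoke the simultaneous joint-decoding machinery for QMACs of \cite{sen:oneshot,sen:simultaneous}, which bypasses simultaneous smoothing on the packing side. The resulting packing constraints are
\[
R_1 + \log L < I^\epsilon_H(X:YC|Q)_\sigma,\;\;
R_2 + \log M < I^\epsilon_H(Y:XC|Q)_\sigma,\;\;
R_1 + R_2 + \log L + \log M < I^\epsilon_H(XY:C|Q)_\sigma,
\]
up to $O(\log\epsilon^{-1})$ additive corrections. Substituting the chosen $\log L$, $\log M$ recovers precisely the three pentagonal inequalities in the theorem. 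Both the expected decoding error and the expected privacy leakage being $O(\epsilon)$, Markov's inequality combined with a union bound yields a single deterministic codebook (and derandomised $q$) achieving expected decoding error $<50\sqrt\epsilon$ and expected privacy leakage $<16\sqrt\epsilon$ simultaneously.

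The main obstacle bypassed here is simultaneous smoothing of the three privacy constraints: previous work \cite{Chakraborty:wiretapQMAC} could only smooth one at a time and so obtained merely a subregion of the pentagon in the one shot setting. Proposition~\ref{prop:smoothconvexsplit}, proved by telescoping, smooths all three constraints at once, and the $\sqrt\epsilon$ loss is just the standard price of converting `in expectation' guarantees into the existence of a good deterministic codebook via Markov.
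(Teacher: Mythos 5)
Your proposal captures the paper's intended proof route: take the random codebook construction of \cite{Chakraborty:wiretapQMAC}, use the simultaneous-decoding packing machinery of \cite{sen:oneshot,sen:simultaneous} for the accuracy side, and replace the previously available covering bound by the fully smooth multipartite convex split lemma (Proposition~\ref{prop:smoothconvexsplit}) to handle all three privacy constraints at once. The paper does not spell out a detailed proof for Theorem~\ref{thm:wiretapQMAC} -- it only points to \cite{Chakraborty:wiretapQMAC} and says to re-analyse the scheme with the new lemma -- and your sketch fills in exactly that outline.

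There is, however, one point where your parameter bookkeeping is off. With your chosen block size $\log L := I_\infty^\epsilon(X:E|Q)_\sigma + \log\epsilon^{-1}$, the hypothesis of Proposition~\ref{prop:smoothconvexsplit} at smoothing parameter $\epsilon$, namely $\log A > D_2^\epsilon(\cdot) + \log\epsilon^{-2}$, is \emph{not} satisfied, so you cannot directly conclude the per-message expected leakage is $\leq 16\epsilon$. What works is to invoke the proposition at smoothing parameter $\sqrt\epsilon$: then the hypothesis becomes $\log A > D_2^{\sqrt\epsilon}(\cdot) + \log\epsilon^{-1}$, which your $\log L$ does satisfy (since $D_2^{\sqrt\epsilon} \leq D_\infty^{\sqrt\epsilon} \leq D_\infty^\epsilon = I_\infty^\epsilon$), and the conclusion is leakage $\leq 16\sqrt\epsilon$. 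This is where the $\sqrt\epsilon$ in the theorem's guarantees really comes from -- it matches the $\log\epsilon^{-1}$ slack appearing in the stated rate region -- rather than from a Markov / derandomisation step as you suggest. In fact the theorem's conclusion is stated as expectation bounds, so no Markov step is required at all. Correcting this parameter choice makes the rest of your argument, including the substitution of $\log L, \log M$ into the three packing constraints and the conversion of $I_2^{\cdot}$ to $I_\infty^{\cdot}$ via \cite{Jain:minimax}, line up with the stated rate region and the numerical constants $16\sqrt\epsilon$ and $50\sqrt\epsilon$.
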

The asymptotic iid limit is now immediate.
\begin{corollary}
\label{cor:wiretapQMAC}
In the asymtotic iid limit of a wiretap QMAC, the rate pairs 
per channel use satisfying the following inequalities are achievable.
\begin{eqnarray*}
R_1
& < & 
I(X : YC | Q)_\sigma - 
I(X : E | Q)_\sigma, \\
R_2
& < & 
I(Y : XC | Q)_\sigma - 
I(Y : E | Q)_\sigma, \\
R_1 + R_2
& < & 
I(XY : C | Q)_\sigma - 
IXY : E | Q)_\sigma.
\end{eqnarray*}
\end{corollary}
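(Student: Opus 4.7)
The plan is a standard blocklength tensorisation followed by the quantum asymptotic equipartition property (AEP). Fix a control state $\sigma^{QXYCE}$ as defined above, fix a blocklength $n$, and consider the $n$-fold product channel $\cN^{\otimes n}$ together with the $n$-fold control state $\sigma^{\otimes n}$ over the registers $Q^n X^n Y^n C^n E^n$. Apply Theorem~\ref{thm:wiretapQMAC} to this product setup with smoothing parameter $\epsilon_n := n^{-1}$ (say), to obtain achievable one shot rates $(nR_1, nR_2)$ for sending private classical information across $n$ parallel uses of $\cN$. Dividing the resulting bounds by $n$ gives per channel use rates satisfying
\begin{eqnarray*}
R_1 & < & \tfrac{1}{n} I^{\epsilon_n}_H(X^n : Y^n C^n \mid Q^n)_{\sigma^{\otimes n}}
         - \tfrac{1}{n} I^{\epsilon_n}_\infty(X^n : E^n \mid Q^n)_{\sigma^{\otimes n}}
         + \tfrac{1}{n} \log \epsilon_n^{-1}, \\
R_2 & < & \tfrac{1}{n} I^{\epsilon_n}_H(Y^n : X^n C^n \mid Q^n)_{\sigma^{\otimes n}}
         - \tfrac{1}{n} I^{\epsilon_n}_\infty(Y^n : E^n \mid Q^n)_{\sigma^{\otimes n}}
         + \tfrac{1}{n} \log \epsilon_n^{-1}, \\
R_1 + R_2 & < & \tfrac{1}{n} I^{\epsilon_n}_H(X^n Y^n : C^n \mid Q^n)_{\sigma^{\otimes n}}
         - \tfrac{1}{n} I^{\epsilon_n}_\infty(X^n Y^n : E^n \mid Q^n)_{\sigma^{\otimes n}}
         + \tfrac{1}{n} \log \epsilon_n^{-1},
\end{eqnarray*}
while the accuracy and privacy errors are at most $50\sqrt{\epsilon_n}$ and $16\sqrt{\epsilon_n}$, both vanishing as $n\to\infty$.

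I would then invoke the quantum AEP twice. For the hypothesis testing mutual information, the standard quantum Stein lemma on the tensor power state $\sigma^{\otimes n}$ (conditioned classically on the iid timesharing $Q^n$, so that one can average over $q^n$ and apply the unconditional AEP for each typical $q^n$-slice) gives $\tfrac{1}{n} I^{\epsilon_n}_H(X^n : Y^n C^n \mid Q^n)_{\sigma^{\otimes n}} \to I(X:YC|Q)_\sigma$, and analogously for the other two hypothesis testing quantities. For the smooth max mutual information, the fully quantum AEP of Tomamichel--Hayashi (or the conditional version thereof) yields $\tfrac{1}{n} I^{\epsilon_n}_\infty(X^n : E^n \mid Q^n)_{\sigma^{\otimes n}} \to I(X:E|Q)_\sigma$, and similarly for the $Y$ and $XY$ versions. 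Since $\tfrac{1}{n} \log \epsilon_n^{-1} = \tfrac{\log n}{n} \to 0$, taking $n \to \infty$ shows that any rate pair $(R_1, R_2)$ strictly satisfying the three Shannon inequalities of Corollary~\ref{cor:wiretapQMAC} is achievable. Finally, a standard time sharing / convex hull argument over choices of control state closes out the rate region.

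The only place requiring care is the conditional versions of the AEP, since the timesharing register $Q$ is classical; but because $Q$ is classical one can reduce to the unconditional AEP on each $q^n$-typical slice weighted by the iid distribution $p(q)^n$, which is routine. I expect no genuine obstacle beyond bookkeeping the $\epsilon_n \to 0$ and confirming that the mild correction $\tfrac{1}{n} \log \epsilon_n^{-1}$ is negligible.
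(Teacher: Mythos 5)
Your argument matches what the paper intends (the paper simply declares the asymptotic iid limit ``immediate'' after Theorem~\ref{thm:wiretapQMAC}): apply the one shot theorem to the $n$-fold product channel, divide by $n$, and invoke the quantum AEP for the smooth hypothesis-testing and smooth max conditional mutual informations, reducing the conditional AEP to the unconditional one slice by slice over the classical $Q^n$ register. The only point deserving a little more care is your coupled choice $\epsilon_n = n^{-1}$: this forces you to invoke second-order asymptotics of $D^\epsilon_H$ and $D^\epsilon_\infty$ (the $\Phi^{-1}(\epsilon_n)$ penalty grows like $\sqrt{\log n}$, so the per-copy correction is $O(\sqrt{\log n / n}) \to 0$), whereas the more standard and slightly cleaner route is the iterated limit---take $n \to \infty$ at fixed $\epsilon$ using first-order Stein/AEP, observing that the accuracy and privacy errors stay bounded by $50\sqrt{\epsilon}$ and $16\sqrt{\epsilon}$, and then send $\epsilon \to 0$.
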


\section{Fully smooth multipartite decoupling via telescoping}
\label{sec:decoupling}
In this section, we will show how to extend the telescoping arguments
of Colomer and Winter \cite{Colomer:decoupling} in order to obtain
a fully smooth multipartite decoupling theorem. Below, we let
$\I$ denote the identity superoperator on a system. The EPR state is
defined as
\[
\Phi^{A'A} := 
|A|^{-1} \sum_{a_1,a_2 = 1}^{|A|} \ket{a_1, a_1}^{A'A}\bra{a_2, a_2},
\]
where $A'$ is another Hilbert space of the same dimension as $A$.
We remark that though Proposition~\ref{prop:decoupling} is stated in terms
of expectation over the Haar measure on unitaries as is the convention,
it continues to hold without change for expectation over a perfect
2-design of unitaries, and with minor additive terms for expectation
over an approximate 2-design of unitaries. We believe that our 
proof of Proposition~\ref{prop:decoupling} is simpler than the treatment
in \cite{Colomer:decoupling}, while being more general in the sense
that it leads to a fully smooth result.

Below, the smooth conditional R\'{e}nyi entropy is defined by
\[
H_2^\epsilon(A|R)_\rho :=
-D_2^\epsilon(\rho^{AR} \| \one^A \otimes \rho^R), ~~
H_\infty^\epsilon(A|R)_\rho :=
-D_\infty^\epsilon(\rho^{AR} \| \one^A \otimes \rho^R).
\]
Various alternate definitions of the smooth conditional 
R\'{e}nyi-2 entropy
have been given in the literature, but there all equivalent up to 
minor tweaks in the smoothing parameter $\epsilon$ and small dimension
independent additive terms polynomial in $\log \epsilon^{-1}$.
Also, $H_2^\epsilon(A|R)_\rho \approx H_\infty^\epsilon(A|R)_\rho$
\cite{Jain:minimax}.
\begin{proposition}
\label{prop:decoupling}
Let $\cT^{A_1 A_2 \rightarrow B}$ a completely positive trace 
non increasing superoperator with Choi state defined by
$
\tau^{A'_1 A'_2 B} :=
(\cT^{A_1 A_2 \rightarrow B} \otimes \I^{A'_1 A'_2})
(\Phi^{A'_1 A_1} \otimes \Phi^{A'_2 A_2}).
$.
Let $U^{A_1}$, $U^{A_2}$ be two unitary matrices on their respective
Hilbert spaces.
Let $\rho^{A_1 A_2 R}$ be a subnormalised density matrix. Define
the {\em unitarily perturbed state} 
\[
\sigma^{B R}(U^{A_1}, U^{A_2}) :=
(\cT \otimes \I^R)(
(U^{A_1} \otimes U^{A_2} \otimes \one^R) \circ \rho^{A_1 A_2 R}).
\]
Define the positive quantities
\[
D_1 := (1 - |A_1|^{-2})^{-1/2},
D_2 := (1 - |A_2|^{-2})^{-1/2},
D_{1,2} := 2 (1 - |A_1|^{-2})^{-1/2} (1 - |A_2|^{-2})^{-1/2}
\]
Let $0 < \epsilon < 1$.
Then,
\begin{eqnarray*}
\lefteqn{
\E_{U^{A_1} U^{A_2}}[
\|\sigma^{B R}(U^{A_1}, U^{A_2}) - \tau^B \otimes \rho^R\|_1]
} \\
& \leq &
16 \epsilon + 
2 D_1 \cdot 
2^{-\frac{1}{2} H_2^\epsilon(A_1 | R)_\rho
   -\frac{1}{2} H_2^\epsilon(A'_1 | B)_\tau} \\
& &
~~~~~
{} +
2 D_2 \cdot 
2^{-\frac{1}{2} H_2^\epsilon(A_2 | R)_\rho
   -\frac{1}{2} H_2^\epsilon(A'_2 | B)_\tau} +
4 D_{1,2} \cdot 
2^{-\frac{1}{2} H_2^\epsilon(A_1 A_2 | R)_\rho
   -\frac{1}{2} H_2^\epsilon(A'_1 A'_2 | B)_\tau},
\end{eqnarray*}
where the expectation on the left is taken over independent choices
of unitaries $U^{A_1}$, $U^{A_2}$ from their respective Haar measures.
\end{proposition}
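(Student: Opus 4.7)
The approach mirrors the fully smooth convex-split proof of Proposition~\ref{prop:smoothconvexsplit}: combine a three-way telescoping decomposition with matrix-weighted Cauchy-Schwarz, then evaluate the second moment over the two independent Haar measures. Write $\sigma(U_1,U_2):=\sigma^{BR}(U^{A_1},U^{A_2})$ for brevity, and introduce the partial Haar averages
\[
\sigma_0:=\E_{U_1,U_2}[\sigma(U_1,U_2)]=\tau^B\otimes\rho^R,\quad
\sigma_1(U_1):=\E_{U_2}[\sigma(U_1,U_2)],\quad
\sigma_2(U_2):=\E_{U_1}[\sigma(U_1,U_2)],
\]
each obtained by substituting the appropriate maximally mixed state on the averaged subsystem(s) before applying $\cT$. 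The telescoping identity
\[
\sigma(U_1,U_2)-\sigma_0=\bigl[\sigma(U_1,U_2)-\sigma_1(U_1)-\sigma_2(U_2)+\sigma_0\bigr]+\bigl[\sigma_1(U_1)-\sigma_0\bigr]+\bigl[\sigma_2(U_2)-\sigma_0\bigr],
\]
together with the triangle inequality for $\|\cdot\|_1$, reduces the problem to bounding the expectations of these three terms. Each has the crucial mean-zero property: the joint first term vanishes after integration over either $U_1$ or $U_2$ alone, and each marginal term vanishes upon averaging over the unitary it depends on.

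For the smoothing step, I introduce a separate pair of optimising states for each of the three pieces. For the joint piece, let $\rho(12)^{A_1A_2R}$ and $\tau(12)^{A'_1A'_2B}$ achieve the optima in $H_2^\epsilon(A_1A_2|R)_\rho$ and $H_2^\epsilon(A'_1A'_2|B)_\tau$ respectively; analogously define $\rho(i)^{A_iR},\tau(i)^{A'_iB}$ for the two marginal pieces $i\in\{1,2\}$. Each $\tau(i)$ is realised as the (marginal) Choi state of a CPTNI superoperator $\tilde\cT(i)$ via the inverse Choi-Jamiolkowski correspondence. Using the triangle inequality, together with the fact that each piece of the telescoping depends on $\tau$ only through a single marginal Choi state that will appear cleanly after the Haar integral, each piece can be re-expressed in terms of its smoothed quantities at an additive cost of $O(\epsilon)$ per perturbed state. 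Summing over the individual perturbations (four in the joint piece, two in each marginal piece) yields the leading $16\epsilon$ term.

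In the third step, each smoothed piece is upper bounded via matrix-weighted Cauchy-Schwarz (Fact~\ref{fact:matrixCauchySchwarz}) with weighting matrix $\tau^B\otimes\rho^R$, converting $\|\cdot\|_1$ to $\|\cdot\|_2$. For the single-unitary pieces this is the standard Dupuis decoupling calculation: expanding the square and evaluating the Haar integral on $A_i$ via the Weingarten formula yields the bound $D_i\cdot 2^{-\tfrac12 H_2^\epsilon(A_i|R)_\rho-\tfrac12 H_2^\epsilon(A'_i|B)_\tau}$, with $D_i=(1-|A_i|^{-2})^{-1/2}$ arising as the traceless-subspace contribution. For the joint piece, the independence of $U_1$ and $U_2$ makes the combined Weingarten integral factorise, while the mean-zero structure of the four-term decomposition annihilates every cross contribution that is diagonal in one unitary but not the other. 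The surviving fully-off-diagonal term produces $(1-|A_1|^{-2})^{-1}(1-|A_2|^{-2})^{-1}\cdot 2^{-H_2^\epsilon(A_1A_2|R)_\rho-H_2^\epsilon(A'_1A'_2|B)_\tau}$; taking the square root and applying the Schatten-$\ell_2$ triangle inequality to the four-term expansion of the joint piece accounts for the factor of $2$ inside $D_{1,2}=2D_1D_2$.

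The main technical obstacle is the Choi-state smoothing: unlike in the convex-split setting, where every smoothed quantity is a marginal of a single control state, the decoupling statement treats $\tau$ as a second argument whose smoothing formally requires perturbing $\cT$ itself. The Choi-Jamiolkowski identification above is what allows this perturbation to be absorbed into a uniform $O(\epsilon)$ trace-norm error on each telescoping piece, leveraging the fact that only the marginal Choi state corresponding to the subsystem-subset relevant to that piece actually enters the bound after the Haar integration, so that the six separate smoothings (three pairs of $\rho$ and $\tau$) can be applied independently across the three telescoping contributions.
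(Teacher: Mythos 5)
Your high-level strategy -- the three-term telescoping into a joint mean-zero piece plus two single-unitary marginal pieces, separate smoothing of each piece with its own optimising $(\rho(i),\tau(i))$, matrix-weighted Cauchy--Schwarz with weight $\tau^B\otimes\rho^R$, and the Dupuis swap trick with the Weingarten expansion -- is exactly the route the paper takes, and you correctly identify the role of the mean-zero property in killing the cross contributions for the joint piece.

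There is, however, a genuine gap in how you handle the smoothing of the Choi state. You write that each optimiser $\tau(i)$ is ``realised as the (marginal) Choi state of a CPTNI superoperator $\tilde\cT(i)$ via the inverse Choi--Jamiolkowski correspondence,'' and you rely on this to absorb the Choi perturbation into an $O(\epsilon)$ trace-norm error. This is precisely what \emph{cannot} be assumed: the Choi preimage of a subnormalised state close to $\tau$ is completely positive, but in general it is \emph{not} trace non-increasing, and no control on how it acts on the trace of positive operators is available a priori. The paper flags this explicitly and instead invokes the decomposition $\tau(3)-\tau=\omega_+-\omega_-$ with orthogonal positive parts satisfying $\Tr\omega_++\Tr\omega_-\le\epsilon$, passes to the CP maps $\cO_\pm$ with those Choi states, and then -- crucially under the Haar expectation -- converts the applied trace-norm perturbation $\E[\|((\cT(3)-\cT)\otimes\I^R)(\cdots)\|_1]$ into a trace, which collapses to $\Tr\omega_+^B+\Tr\omega_-^B\le\epsilon$. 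Without this step (originating with Szehr et al.), the claim that each smoothing costs $O(\epsilon)$ in trace distance does not follow, and the $16\epsilon$ prefactor is unjustified. Replacing your ``CPTNI preimage'' assertion with the positive/negative Choi decomposition plus Haar averaging closes the gap and brings your proposal in line with the paper's proof.
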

\begin{proof}
We follow the lines of the proof of the half smooth decoupling
theorem in \cite{Colomer:decoupling}. Define the matrices
\[
\sigma^{B R}(U^{A_2}) :=
\E_{U^{A_1}}[\sigma^{B R}(U^{A_1} U^{A_2})] =
(\cT \otimes \I^R)(
(\one^{A_1} \otimes U^{A_2} \otimes \one^R) \circ 
(\frac{\one^{A_1}}{|A_1|}) \otimes \rho^{A_2 R}).
\]
The matrices $\sigma^{B R}(U^{A_1})$, $\sigma^{BR}$ are defined similarly.
Note that 
$
\sigma^{BR} = \tau^B \otimes \rho^R.
$

We first set up the telescoping sum
\begin{equation}
\label{eq:decoupling1}
\begin{array}{rcl}
\lefteqn{
\E_{U^{A_1} U^{A_2}}[
\|\sigma^{B R}(U^{A_1}, U^{A_2}) - \tau^B \otimes \rho^R\|_1]
} \\
& \leq &
\E_{U^{A_1} U^{A_2}}[
\|\sigma^{B R}(U^{A_1}, U^{A_2}) - 
  \sigma^{BR}(U^{A_1}) -
  \sigma^{BR}(U^{A_2}) +
  \sigma^{BR}\|_1] \\
&  &
~~~
{} +
\E_{U^{A_1} U^{A_2}}[\|\sigma^{B R}(U^{A_1}) - \sigma^{BR}\|_1] +
\E_{U^{A_1} U^{A_2}}[\|\sigma^{B R}(U^{A_2}) - \sigma^{BR}\|_1].
\end{array}
\end{equation}

Let $\rho^{A_1 A_2 R}(3)$ be the subnormalised state
achieving the optimum in the definition of the third smooth R\'{e}nyi-2
conditional information term involving $\rho$ in the upper
bound of Proposition~\ref{prop:decoupling}.
Similarly, let $\tau^{A'_1 A'_2 B}(3)$ be the subnormalised state
achieving the optimum in the definition of the third smooth R\'{e}nyi-2
conditional information term involving $\tau$ in the upper
bound of Proposition~\ref{prop:decoupling}. We similarly define
$\rho^{A_1 R}(1)$, $\tau^{A'_1 B}(1)$, 
$\rho^{A_2 R}(2)$, $\tau^{A'_2 B}(2)$.
These states are close to the respective marginals of $\rho^{A_1 A_2 R}$
and $\tau^{A'_1 A'_2 B}$.
\[
\begin{array}{c}
\|\rho^{A_1 A_2 R}(3) - \rho^{A_1 A_2 R}\|_1 \leq \epsilon, ~
\|\rho^{A_1 R}(1) - \rho^{A_1 R}\|_1 \leq \epsilon, ~
\|\rho^{A_2 R}(2) - \rho^{A_2 R}\|_1 \leq \epsilon, \\
\|\tau^{A'_1 A'_2 B}(3) - \tau^{A'_1 A'_2 B}\|_1 \leq \epsilon, ~
\|\tau^{A'_1 B}(1) - \tau^{A'_1 B}\|_1 \leq \epsilon, ~
\|\tau^{A'_2 B}(2) - \tau^{A'_2 B}\|_1 \leq \epsilon.
\end{array}
\]
Let $T^{A_1 A_2 \rightarrow B}(3)$ be the completely positive superoperator
obtained by taking the Choi preimage of $\tau^{A'_1 A'_2 B}(3)$. Note
that no general statement can be made regarding how 
$T^{A_1 A_2 \rightarrow B}(3)$ 
affects the trace of positive semidefinite matrices.
Define $T^{A_1 \rightarrow B}(1)$,
$T^{A_2 \rightarrow B}(2)$ similarly.
Define 
\begin{equation}
\label{eq:decoupling2}
\begin{array}{c}
\sigma^{B R}(U^{A_1}, U^{A_2})(3) :=
(\cT(3) \otimes \I^R)(
(U^{A_1} \otimes U^{A_2} \otimes \one^R) \circ \rho^{A_1 A_2 R}(3)), \\
\sigma^{B R}(U^{A_1})(3) :=
(\cT(3) \otimes \I^R)(
((U^{A_1} \otimes \one^R) \circ \rho^{A_1 R}(3)) \otimes 
(\frac{\one^{A_2}}{|A_2|})), \\
\sigma^{B R}(U^{A_2})(3) :=
(\cT(3) \otimes \I^R)(
((U^{A_2} \otimes \one^R) \circ \rho^{A_2 R}(3)) \otimes 
(\frac{\one^{A_1}}{|A_1|})), \\
\sigma^{B R}(3) :=
(\cT(3) \otimes \I^R)(
(\frac{\one^{A_1}}{|A_1|}) \otimes (\frac{\one^{A_2}}{|A_2|})
\otimes \rho^R(3)), \\
\sigma^{B R}(U^{A_1})(1) :=
(\cT(1) \otimes \I^R)(
(U^{A_1} \otimes \one^R) \circ \rho^{A_1 R}(1)), \\ 
\sigma^{B R}(1) :=
(\cT(1) \otimes \I^R)(
(\frac{\one^{A_1}}{|A_1|}) \otimes \rho^R(1)), \\
\sigma^{B R}(U^{A_2})(2) :=
(\cT(2) \otimes \I^R)(
(U^{A_2} \otimes \one^R) \circ \rho^{A_2 R}(2)), \\
\sigma^{B R}(2) :=
(\cT(2) \otimes \I^R)(
(\frac{\one^{A_2}}{|A_2|}) \otimes \rho^R(2)).
\end{array}
\end{equation}

We now follow a proof argument originally
given in \cite{Smooth_decoupling}. Let
\[
\tau^{A'_1 A'_2 B}(3) - \tau^{A'_1 A'_2 B}
= \omega^{A'_1 A'_2 B}_+ - \omega^{A'_1 A'_2 B}_-,
\]
where $\omega^{A'_1 A'_2 B}_+$, $\omega^{A'_1 A'_2 B}_-$ are positive
semidefinite matrices with orthogonal support satisfying
\[
\Tr[\omega^{A'_1 A'_2 B}_+] + \Tr[\omega^{A'_1 A'_2 B}_-] = 
\|\tau^{A'_1 A'_2 B}(3) - \tau^{A'_1 A'_2 B}\|_1 \leq \epsilon.
\]
Let $\cO^{A'_1 A'_2 \rightarrow B}_+$, $\cO^{A'_1 A'_2 \rightarrow B}_-$
be completely positive superoperators corresponding to the Choi preimagess
of $\omega^{A'_1 A'_2 B}_+$, $\omega^{A'_1 A'_2 B}_-$. Then,
\[
\cT^{A'_1 A'_2 \rightarrow B}(3) - \cT^{A'_1 A'_2 \rightarrow B}
= \cO^{A'_1 A'_2 \rightarrow B}_+ - \cO^{A'_1 A'_2 \rightarrow B}_-.
\]
Then,
\begin{eqnarray*}
\lefteqn{
\E_{U^{A_1} U^{A_2}}[
\|(\cT(3) \otimes \I^R)(
    (U^{A_1} \otimes U^{A_2} \otimes \one^R) \circ \rho^{A_1 A_2 R}(3)) -
  (\cT \otimes \I^R)(
    (U^{A_1} \otimes U^{A_2} \otimes \one^R) \circ \rho^{A_1 A_2 R}(3))
\|_1]
} \\
& = &
\E_{U^{A_1} U^{A_2}}[
\|((\cT(3) - \cT) \otimes \I^R)(
    (U^{A_1} \otimes U^{A_2} \otimes \one^R) \circ \rho^{A_1 A_2 R}(3))
\|_1] \\
& = &
\E_{U^{A_1} U^{A_2}}[
\|((\cO_+ - \cO_-) \otimes \I^R)(
    (U^{A_1} \otimes U^{A_2} \otimes \one^R) \circ \rho^{A_1 A_2 R}(3))
\|_1] \\
& \leq &
\E_{U^{A_1} U^{A_2}}[
\|(\cO_+ \otimes \I^R)(
    (U^{A_1} \otimes U^{A_2} \otimes \one^R) \circ \rho^{A_1 A_2 R}(3))
\|_1 \\
&  &
~~~~~~~~~~~~~~~~~
{} +
\|(\cO_- \otimes \I^R)(
    (U^{A_1} \otimes U^{A_2} \otimes \one^R) \circ \rho^{A_1 A_2 R}(3))
\|_1] \\
& =    &
\E_{U^{A_1} U^{A_2}}[
\Tr[(\cO_+ \otimes \I^R)(
(U^{A_1} \otimes U^{A_2} \otimes \one^R) \circ \rho^{A_1 A_2 R}(3))]] \\
&  &
~~~
{} +
\E_{U^{A_1} U^{A_2}}[
\Tr[(\cO_- \otimes \I^R)(
(U^{A_1} \otimes U^{A_2} \otimes \one^R) \circ \rho^{A_1 A_2 R}(3))]] \\
& =    &
\Tr[(\cO_+ \otimes \I^R)(
(\frac{\one^{A_1 A_2}}{|A_1| |A_2|}) \otimes \rho^{R}(3))] +
\Tr[(\cO_- \otimes \I^R)(
(\frac{\one^{A_1 A_2}}{|A_1| |A_2|}) \otimes \rho^{R}(3))] \\
& =    &
\Tr[\omega^B_+ \otimes \rho^R(3)] +
\Tr[\omega^B_- \otimes \rho^R(3)] 
\;\leq\;
\Tr[\omega^B_+] + \Tr[\omega^B_-] 
\;\leq\;
\epsilon.
\end{eqnarray*}
By the triangle inequality for the trace distance, we get
\begin{equation}
\label{eq:decoupling3}
\begin{array}{rcl}
\lefteqn{
\E_{U^{A_1} U^{A_2}}[
\|\sigma^{BR}(U^{A_1}, U^{A_2})(3) - \sigma^{BR}(U^{A_1}, U^{A_2})\|_1]
} \\
& \leq &
\E_{U^{A_1} U^{A_2}}[
\|(\cT(3) \otimes \I^R)(
    (U^{A_1} \otimes U^{A_2} \otimes \one^R) \circ \rho^{A_1 A_2 R}(3)) \\ 
& &
~~~~~~~~~~~~~~~~~~
{} -
\|(\cT \otimes \I^R)(
    (U^{A_1} \otimes U^{A_2} \otimes \one^R) \circ \rho^{A_1 A_2 R}(3))
\|_1] \\
& &
~~~
{} +
\E_{U^{A_1} U^{A_2}}[
\|(\cT \otimes \I^R)(
    (U^{A_1} \otimes U^{A_2} \otimes \one^R) \circ \rho^{A_1 A_2 R}(3)) \\
& &
~~~~~~~~~~~~~~~~~~~~~~~~
{} -
\|(\cT \otimes \I^R)(
    (U^{A_1} \otimes U^{A_2} \otimes \one^R) \circ \rho^{A_1 A_2 R})
\|_1] \\
& \leq &
\epsilon +
\E_{U^{A_1} U^{A_2}}[
\|(U^{A_1} \otimes U^{A_2} \otimes \one^R)
  \circ (\rho^{A_1 A_2 R}(3) - \rho^{A_1 A_2 R})\|_1]  \\
&  =   &
\epsilon +
\|\rho^{A_1 A_2 R}(3) - \rho^{A_1 A_2 R}\|_1 
\;\leq\;
2\epsilon,
\end{array}
\end{equation}
where we used the fact that $\cT^{A_1 A_2 \rightarrow B}$ is a trace
non-increasing completely positive superoperator in the second
inequality. By triangle inequality of trace distance, we
also similarly get
\begin{equation}
\label{eq:decoupling4}
\begin{array}{c}
\E_{U^{A_1} U^{A_2}}[
\|\sigma^{BR}(U^{A_1})(3) - \sigma^{BR}(U^{A_1})\|_1] \leq 2\epsilon, \\
\E_{U^{A_1} U^{A_2}}[
\|\sigma^{BR}(U^{A_2})(3) - \sigma^{BR}(U^{A_2})\|_1] \leq 2\epsilon, \\
\E_{U^{A_1} U^{A_2}}[
\|\sigma^{BR}(3) - \sigma^{BR}\|_1] \leq 2\epsilon, \\
\E_{U^{A_1} U^{A_2}}[
\|\sigma^{BR}(U^{A_1})(1) - \sigma^{BR}(U^{A_1})\|_1] \leq 2\epsilon, \\
\E_{U^{A_1} U^{A_2}}[
\|\sigma^{BR}(U^{A_2})(2) - \sigma^{BR}(U^{A_2})\|_1] \leq 2\epsilon, \\
\E_{U^{A_1} U^{A_2}}[
\|\sigma^{BR}(1) - \sigma^{BR}\|_1] \leq 2\epsilon, ~~
\|\sigma^{BR}(2) - \sigma^{BR}\|_1] \leq 2\epsilon.
\end{array}
\end{equation}

Putting Equations~\ref{eq:decoupling3}, \ref{eq:decoupling4} back into
Equation~\ref{eq:decoupling1}, and using the triangle inequality,
we get
\begin{equation}
\label{eq:decoupling5}
\begin{array}{rcl}
\lefteqn{
\E_{U^{A_1} U^{A_2}}[
\|\sigma^{B R}(U^{A_1}, U^{A_2}) - \tau^B \otimes \rho^R\|_1]
} \\
& \leq &
16 \epsilon +
\E_{U^{A_1} U^{A_2}}[
\|\sigma^{B R}(U^{A_1}, U^{A_2})(3) - 
  \sigma^{BR}(U^{A_1})(3) -
  \sigma^{BR}(U^{A_2})(3) +
  \sigma^{BR}(3)\|_1] \\
&  &
~~~
{} +
\E_{U^{A_1} U^{A_2}}[\|\sigma^{B R}(U^{A_1})(1) - \sigma^{BR}(1)\|_1] +
\E_{U^{A_1} U^{A_2}}[\|\sigma^{B R}(U^{A_2})(2) - \sigma^{BR}(2)\|_1].
\end{array}
\end{equation}

We now use Cauchy-Schwarz inequality Fact~\ref{fact:matrixCauchySchwarz} 
with weighting matrix
$\tau^B \otimes \rho^R$ to upper bound each term in 
telescoping sum of Equation~\ref{eq:decoupling5}
in terms of the Schatten-$\ell_2$ norm. Define
\[
\tsigma^{B R}(U^{A_1}, U^{A_2})(3) := 
(\tau^B \otimes \rho^R)^{-1/4} \circ
\sigma^{B R}(U^{A_1}, U^{A_2})(3).
\]
Other weighted terms like e.g.
$\tsigma^{B R}(U^{A_1})(1)$ are defined similarly. We get,
\begin{equation}
\label{eq:decoupling6}
\begin{array}{rcl}
\lefteqn{
\E_{U^{A_1} U^{A_2}}[
\|\sigma^{B R}(U^{A_1}, U^{A_2}) - \tau^B \otimes \rho^R\|_1]
} \\
& \leq &
16 \epsilon +
\E_{U^{A_1} U^{A_2}}[
\|\tsigma^{B R}(U^{A_1}, U^{A_2})(3) - 
  \tsigma^{BR}(U^{A_1})(3) -
  \tsigma^{BR}(U^{A_2})(3) +
  \tsigma^{BR}(3)\|_2] \\
&  &
~~~
{} +
\E_{U^{A_1} U^{A_2}}[\|\tsigma^{B R}(U^{A_1})(1) - \tsigma^{BR}(1)\|_2] +
\E_{U^{A_1} U^{A_2}}[\|\tsigma^{B R}(U^{A_2})(2) - \tsigma^{BR}(2)\|_2].
\end{array}
\end{equation}

We now upper bound a Schatten-$\ell_2$ term as follows. For example,
\begin{eqnarray*}
\lefteqn{
\left(
\E_{U^{A_1} U^{A_2}}[
\|\tsigma^{B R}(U^{A_1}, U^{A_2})(3) - 
  \tsigma^{BR}(U^{A_1})(3) -
  \tsigma^{BR}(U^{A_2})(3) +
  \tsigma^{BR}(3)\|_2] 
\right)^2
} \\
& \leq  &
\E_{U^{A_1} U^{A_2}}[
\|\tsigma^{B R}(U^{A_1}, U^{A_2})(3) - 
  \tsigma^{BR}(U^{A_1})(3) -
  \tsigma^{BR}(U^{A_2})(3) +
  \tsigma^{BR}(3)\|_2^2].
\end{eqnarray*}
Let $\hA_1$, $\hA_2$, $\hA'_1$, $\hA'_2$, $\hB$, $\hR$ 
be new Hilbert spaces of the same dimensions as
$A_1$, $A_2$, $A'_1$, $A'_2$, $B$, $R$. 
Define 
\[
\tcT(3)^{A_1 A_2 \rightarrow B} := 
(\tau^B)^{-1/4} \circ
\cT(3)^{A_1 A_2 \rightarrow B},
\trho(3)^{A_1 A_2 R} :=
(\rho^B)^{-1/4} \circ
\rho(3)^{A_1 A_2 R}.
\]
The corresponding Choi state will be denote by
$\ttau^{A'_1 A'_2 B}(3)$.
Let $\htcT(3)^{\hA_1 \hA_2 \rightarrow \hB}$ be the same
superoperator as $\tcT(3)$ but on the new Hilbert spaces. The corresponding
Choi state will be denoted by $\httau^{\hA'_1 \hA'_2 \hB}(3)$.
Similarly, let $\htrho^{\hA_1 \hA_2 \hR}(3)$ be the same quantum
state as $\trho(3)$ but on the new Hilbert spaces. We define the 
Hermitian matrix
\[
\btrho^{A_1 A_2 R}(3) :=
\trho^{A_1 A_2 R}(3) -
(\frac{\one^{A_1}}{|A_1|}) \otimes \trho^{A_2 R}(3) -
(\frac{\one^{A_2}}{|A_2|}) \otimes \trho^{A_1 R}(3) +
(\frac{\one^{A_1}}{|A_1|}) \otimes (\frac{\one^{A_2}}{|A_2|}) \otimes
\trho^{R}(3).
\]
The Hermitian matrix $\hbtrho^{\hA_1 \hA_2 \hR}(3)$ is defined similarly.
Let the so-called {\em swap operator} be defined by $F^{R \hR}$;
swap operators on other pairs of Hilbert spaces are defined accordingly.
We handle the $\|\cdot\|_2^2$ in the above expression 
using the so-called {\em tensorisation cum swap trick} going back
to Dupuis \cite{decoupling}:
\begin{eqnarray*}
\lefteqn{
\E_{U^{A_1} U^{A_2}}[
  \|\tsigma^{B R}(U^{A_1}, U^{A_2})(3) - \tsigma^{BR}(U^{A_1})(3) -
    \tsigma^{BR}(U^{A_2})(3) + \tsigma^{BR}(3)\|_2^2]
} \\
& = &
\E_{U^{A_1} U^{A_2}}[
      \Tr[((\tcT(3) \otimes \I^R)(
              (U^{A_1} \otimes U^{A_2} \otimes \one^R) \circ 
	      \btrho^{A_1 A_2 R}(3)))^2]] \\
& = &
\E_{U^{A_1} U^{A_2}}[
      \Tr[(((\tcT(3) \otimes \I^R)(
               (U^{A_1} \otimes U^{A_2} \otimes \one^R) \circ 
	       \btrho^{A_1 A_2 R}(3))) \\
&  &
~~~~~~~~~~~~~~~~~~~
           {} \otimes
          ((\htcT(3) \otimes \I^{\hR})(
	      (U^{\hA_1} \otimes U^{\hA_2} \otimes \one^{\hR}) \circ 
	      \hbtrho^{\hA_1 \hA_2 \hR}(3)))) 
	        (F^{B \hB} \otimes F^{R \hR})]]\\
& = &
\E_{U^{A_1} U^{A_2}}[
      \Tr[((((U^{A_1} \otimes U^{A_2})^\dag \otimes 
	     (U^{\hA_1} \otimes U^{\hA_2})^\dag) \circ {} \\
&  &
~~~~~~~~~~~~~~~~~
	   ((\tcT(3)^\dag \otimes \htcT(3)^\dag)(F^{B \hB}))) 
	      \otimes F^{R \hR})
	   (\btrho^{A_1 A_2 R}(3) \otimes \hbtrho^{\hA_1 \hA_2 \hR}(3))]]\\
& = &
\Tr[((\E_{U^{A_1} U^{A_2}}[
          (((U^{A_1} \otimes U^{A_2})^\dag \otimes 
	    (U^{\hA_1} \otimes U^{\hA_2})^\dag) \circ {} \\
&  &
~~~~~~~~~~~~~~~~~
	    ((\tcT(3)^\dag \otimes \htcT(3)^\dag)(F^{B \hB})))]) 
           \otimes F^{R \hR})
	(\btrho^{A_1 A_2 R}(3) \otimes \hbtrho^{\hA_1 \hA_2 \hR}(3))].
\end{eqnarray*}

As argued in \cite{Chakraborty:simultaneous}, the expression
\begin{eqnarray*}
\lefteqn{
\E_{U^{A_1} U^{A_2}}[
    ((U^{A_1} \otimes U^{A_2})^\dag \otimes 
     (U^{\hA_1} \otimes U^{\hA_2})^\dag) M^{A_1 A_2 \hA_1 \hA_2}]
} \\
& = &
\alpha_{00} \one^{A_1 \hA_1} \otimes \one^{A_2 \hA_2} +
\alpha_{01} \one^{A_1 \hA_1} \otimes F^{A_2 \hA_2} +
\alpha_{10} F^{A_1 \hA_1} \otimes \one^{A_2 \hA_2} +
\alpha_{11} F^{A_1 \hA_1} \otimes F^{A_2 \hA_2}
\end{eqnarray*}
for any $M^{A_1 A_2 \hA_1 \hA_2}$, where the complex numbers
$\alpha_{00}$, $\alpha_{01}$, $\alpha_{10}$, $\alpha_{11}$ are functions
of $M$. Now observe that
\begin{eqnarray*}
\lefteqn{
\Tr[((\one^{A_1 \hA_1} \otimes F^{A_2 \hA_2}) \otimes F^{R \hR})
    (\btrho^{A_1 A_2 R}(3) \otimes \hbtrho^{\hA_1 \hA_2 \hR}(3))]
} \\
& = &
\Tr[(F^{A_2 \hA_2} \otimes F^{R \hR})
    (\btrho^{A_2 R}(3) \otimes \hbtrho^{\hA_2 \hR}(3))] 
\;=\;
0,
\end{eqnarray*}
since
$
\btrho^{A_2 R}(3) :=
\Tr_{A_1}[\btrho^{A_1 A_2 R}(3)] = 0.
$
This is an example of the {\em mean zero property} of the Colomer
and Winter telescoping decomposition \cite{Colomer:decoupling}.
Similarly arguing, we see that the only possibly non-zero term 
that can survive in the above expression is
\begin{eqnarray*}
\lefteqn{
\Tr[((\E_{U^{A_1} U^{A_2}}[
          (((U^{A_1} \otimes U^{A_2})^\dag \otimes 
	    (U^{\hA_1} \otimes U^{\hA_2})^\dag) \circ {} 
} \\
&  &
~~~~~~~~~~~~~~~~~
	    ((\tcT(3)^\dag \otimes \htcT(3)^\dag)(F^{B \hB})))]) 
           \otimes F^{R \hR})
	(\btrho^{A_1 A_2 R}(3) \otimes \hbtrho^{\hA_1 \hA_2 \hR}(3))] \\
& = &
\alpha_{11} 
\Tr[((F^{A_1 \hA_1} \otimes F^{A_2 \hA_2}) \otimes F^{R \hR})
    (\btrho^{A_1 A_2 R}(3) \otimes \hbtrho^{\hA_1 \hA_2 \hR}(3))] 
\;=\;
\alpha_{11} \|\btrho^{A_1 A_2 R}(3)\|_2^2 \\
& \leq &
4 \alpha_{11} 
(\|\trho^{A_1 A_2 R}(3)\|_2^2 +
 |A_1|^{-1} \|\trho^{A_2 R}(3)\|_2^2 +
 |A_2|^{-1} \|\trho^{A_1}(3)\|_2^2 +
 |A_1|^{-1} |A_2|^{-1} \|\trho^{R}(3)\|_2^2) \\
& \leq &
16 \alpha_{11} (\|\trho^{A_1 A_2 R}(3)\|_2^2,
\end{eqnarray*}
where $\alpha_{11}$ is the complex number arising from the matrix
\[
M^{A_1 A_2 \hA_1 \hA_2} =
((\tcT(3)^\dag)^{B \rightarrow A_1 A_2} \otimes 
 (\htcT(3)^\dag)^{\hB \rightarrow \hA_1 \hA_2})(F^{B \hB}),
\]
the first inequality follows from Cauchy Schwarz, and the second
inequality follows from the fact that
\[
|X|^{-1} \|\rho^{Y}\|_2^2 
\leq \|\rho^{XY}\|_2^2 \leq
|X| \|\rho^{Y}\|_2^2
\]
for any positive semidefinite matrix $\rho^{XY}$ \cite{decoupling}.
If the mean zero property were not used above as in decoupling papers
like \cite{Chakraborty:simultaneous} 
prior to \cite{Colomer:decoupling}, we would not get a clean
upper bound in terms of only $\|\trho^{A_1 A_2 R}(3)\|_2^2$.
There would have been extra terms like $\|\trho^{A_1 R}(3)\|_2^2$,
$\|\trho^{A_2 R}(3)\|_2^2$ etc., on which there seems to be
no handle without using simultaneous smoothing. 

Following \cite{Chakraborty:simultaneous},
we can easily show that
\begin{eqnarray*}
\alpha_{11} 
& = &
\frac{|A_1||A_2|}{(|A_1|^2 - 1)(|A_2|^2 - 1)}
(\|\ttau^{B}(3)\|_2^2 - |A_1| \|\ttau^{A'_1 B}(3)\|_2^2 -
 |A_2| \|\ttau^{A'_2 B}(3)\|_2^2 + 
 |A_1| |A_2| \|\ttau^{A'_1 A'_2 B}(3)\|_2^2) \\
& \leq &
\frac{|A_1||A_2|}{(|A_1|^2 - 1)(|A_2|^2 - 1)}
(|A_1||A_2| \|\ttau^{A'_1 A'_2 B}(3)\|_2^2 + 
 |A_1| |A_2| \|\ttau^{A'_1 A'_2 B}(3)\|_2^2) 
\;=\;
D_{1,2}^2 \|\ttau^{A'_1 A'_2 B}(3)\|_2^2.
\end{eqnarray*}
We have thus shown that
\begin{equation}
\label{eq:decoupling7}
\begin{array}{rcl}
\lefteqn{
\E_{U^{A_1} U^{A_2}}[
\|\tsigma^{B R}(U^{A_1}, U^{A_2})(3) - 
  \tsigma^{BR}(U^{A_1})(3) -
  \tsigma^{BR}(U^{A_2})(3) +
  \tsigma^{BR}(3)\|_2^2]
} \\
& \leq &
16 D_{1,2}^2 \cdot
\|\tilde{\tau}^{A'_1 A'_2 B}(3)\|_2^2 \cdot
\|\tilde{\rho}^{A_1 A_2 R}(3)\|_2^2 
\;=\;
16 D_{1,2}^2 \cdot
2^{-H_2^\epsilon(A'_1 A'_2 | B)_\tau} \cdot
2^{-H_2^\epsilon(A_1 A_2 | R)_\rho}.
\end{array}
\end{equation}

At this point we remark that the reason we get a fully smooth
decoupling lemma is because we used arguments from
Szehr et al. \cite{Smooth_decoupling} to smooth the Choi state
$\tau^{A'_1 A'_2 B}$ to $\tau^{A'_1 A'_2 B}(3)$. Colomer and Winter
did not use that argument, which left them with the non-smooth
Choi state $\tau^{A'_1 A'_2 B}$. They did use $\rho^{A_1 A_2 R}(3)$
though, which ultimately lead them to their 
half smooth decoupling lemma.

Similarly, we can show
\begin{equation}
\label{eq:decoupling8}
\begin{array}{c}
\E_{U^{A_1} U^{A_2}}[
\|\tsigma^{B R}(U^{A_1})(1) - 
  \tsigma^{BR}(1)\|_2^2] \leq
4 D_1^2 \cdot 2^{-H_2^\epsilon(A'_1 | B)_\tau} \cdot
2^{-H_2^\epsilon(A_1 | R)_\rho}, \\
\E_{U^{A_1} U^{A_2}}[
\|\tsigma^{B R}(U^{A_2})(2) - 
  \tsigma^{BR}(2)\|_2^2] \leq
4 D_2^2 \cdot 2^{-H_2^\epsilon(A'_2 | B)_\tau} \cdot
2^{-H_2^\epsilon(A_2 | R)_\rho}.
\end{array}
\end{equation}
Putting Equations~\ref{eq:decoupling7}, \ref{eq:decoupling8}
back into Equation~\ref{eq:decoupling6}, we get
\begin{eqnarray*}
\lefteqn{
\E_{U^{A_1} U^{A_2}}[
\|\sigma^{B R}(U^{A_1}, U^{A_2}) - \tau^B \otimes \rho^R\|_1]
} \\
& \leq &
16 \epsilon +
4 D_{1,2} \cdot
2^{-\frac{1}{2} H_2^\epsilon(A'_1 A'_2 | B)_\tau
   -\frac{1}{2} H_2^\epsilon(A_1 A_2 | R)_\rho} \\
&  &
~~~
{} +
2 D_1 \cdot 
2^{-\frac{1}{2} H_2^\epsilon(A'_1 | B)_\tau
   -\frac{1}{2} H_2^\epsilon(A_1 | R)_\rho}  +
2 D_2 \cdot 
2^{-\frac{1}{2} H_2^\epsilon(A'_2 | B)_\tau
   -\frac{1}{2} H_2^\epsilon(A_2 | R)_\rho},
\end{eqnarray*}
finishing the proof of the theorem.
\end{proof}

A similar decoupling result can be stated for any constant number 
of parties, with a completely analogous proof. We state the result
below for completeness.
\begin{theorem}
\label{thm:decouplingmanyparties}
Let $k$ be a positive integer. 
Let $\cT^{A_1 \cdots A_k \rightarrow B}$ a completely positive trace 
non increasing superoperator with Choi state defined by
\[
\tau^{A'_1 \cdots A'_k B} :=
(\cT^{A_1 \cdots A_k \rightarrow B} \otimes \I^{A'_1 \cdots A'_k})
(\Phi^{A'_1 A_1} \otimes \cdots \otimes \Phi^{A'_k A_k}).
\]
Let $U^{A_1}, \ldots, U^{A_k}$ be $k$ unitary matrices on their respective
Hilbert spaces.
Let $\rho^{A_1 \cdots A_k R}$ be a subnormalised density matrix. Define
the {\em unitarily perturbed state} 
\[
\sigma^{B R}(U^{A_1}, \ldots, U^{A_k}) :=
(\cT \otimes \I^R)(
(U^{A_1} \otimes \cdots \otimes U^{A_k} \otimes \one^R) \circ 
\rho^{A_1 \cdots A_k R}).
\]
For any subset $S \subseteq [k]$, let 
$A_S := \otimes_{s \in S} A_s$, 
$A'_S := \otimes_{s \in S} A'_s$, and
$
D_{S} := 2^{|S|-1} \prod_{s\in S} (1 - |A_s|^{-2})^{-1/2}.
$
Let $0 < \epsilon < 1$.
Then,
\begin{eqnarray*}
\lefteqn{
\E_{U^{A_1} U^{A_2}}[
\|\sigma^{B R}(U^{A_1}, \ldots, U^{A_k}) - \tau^B \otimes \rho^R\|_1]
} \\
& \leq &
2(3^k - 1) \epsilon + 
\sum_{\{\} \neq S \subseteq [k]}
2^{|S|} D_S \cdot 
2^{-\frac{1}{2} H_2^\epsilon(A_S | R)_\rho
   -\frac{1}{2} H_2^\epsilon(A'_S | B)_\tau},
\end{eqnarray*}
where the expectation on the left is taken over independent choices
of unitaries $U^{A_1}, \ldots, U^{A_k}$ from their respective Haar 
measures.
\end{theorem}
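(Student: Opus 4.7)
The plan is to mimic the two-party argument of Proposition~\ref{prop:decoupling} verbatim, with the telescoping sum now indexed by nonempty subsets of $[k]$. For each $T \subseteq [k]$, I will define the partially averaged state
\[
\sigma^{BR}(U^{A_T})
:=
(\cT \otimes \I^R)\Bigl(\Bigl(\bigotimes_{s \in T} U^{A_s} \otimes \one^{A_{[k] \setminus T}} \otimes \one^R\Bigr) \circ
 \Bigl(\rho^{A_T R} \otimes \bigotimes_{s \notin T} |A_s|^{-1} \one^{A_s}\Bigr)\Bigr),
\]
which equals the Haar average of $\sigma^{BR}(U^{A_1}, \ldots, U^{A_k})$ over the unitaries $\{U^{A_s} : s \notin T\}$. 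For each nonempty $S \subseteq [k]$ I will set $\bar{\sigma}_S := \sum_{T \subseteq S} (-1)^{|S \setminus T|} \sigma^{BR}(U^{A_T})$. Discrete inclusion--exclusion then yields the telescoping identity $\sigma^{BR}(U^{A_{[k]}}) - \tau^B \otimes \rho^R = \sum_{\{\} \neq S \subseteq [k]} \bar{\sigma}_S$, and triangle inequality gives $\|\sigma^{BR}(U^{A_{[k]}}) - \tau^B \otimes \rho^R\|_1 \leq \sum_{\{\} \neq S \subseteq [k]} \|\bar{\sigma}_S\|_1$.

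Next, for each nonempty $S$ I will invoke the smoothing states $\rho^{A_S R}(S)$ and $\tau^{A'_S B}(S)$ realising the optima defining $H_2^\epsilon(A_S|R)_\rho$ and $H_2^\epsilon(A'_S|B)_\tau$, taking the Choi preimage of the latter to obtain a completely positive superoperator $\cT(S)^{A_S \to B}$. Following the argument leading to Equations~\ref{eq:decoupling3}--\ref{eq:decoupling4}, each of the $2^{|S|}$ subterms of $\bar{\sigma}_S$ can be replaced by its smoothed analogue at additive trace distance cost $2\epsilon$ (one $\epsilon$ for $\rho \to \rho(S)$, one for $\cT \to \cT(S)$). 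Summing over all nonempty $S$ gives total smoothing cost $\sum_{\{\} \neq S \subseteq [k]} 2^{|S|} \cdot 2\epsilon = 2(3^k - 1)\epsilon$, using the identity $\sum_{S \subseteq [k]} 2^{|S|} = 3^k$.

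Then I will apply matrix weighted Cauchy-Schwarz (Fact~\ref{fact:matrixCauchySchwarz}) with weighting $\tau^B \otimes \rho^R$ to pass from $\ell_1$ to Schatten-$\ell_2$, apply the tensorisation--swap trick of Dupuis to rewrite each $\E_U \|\cdot\|_2^2$ as a trace against swap operators on an auxiliary copy of every system, and perform the Haar integration independently over $U^{A_1}, \ldots, U^{A_k}$. The twirl on the $s$th factor produces a linear combination of $\one^{A_s \hA_s}$ and $F^{A_s \hA_s}$, so the full $k$-fold integral expands into a sum of $2^k$ tensor products. The mean-zero property is the crux: the natural analogue $\btrho^{A_S R}(S)$ of the Hermitian combination displayed just after Equation~\ref{eq:decoupling6} has vanishing marginal on $A_s$ for every $s \in S$, so every twirl term containing an $\one^{A_s \hA_s}$ factor with $s \in S$ is killed, leaving only the all-swap term. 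Its Haar coefficient is bounded by $D_S^2 \|\ttau^{A'_S B}(S)\|_2^2$ via the same Weingarten calculation carried out in \cite{Chakraborty:simultaneous}, and the dimension bound $|X|^{-1}\|\rho^Y\|_2^2 \leq \|\rho^{XY}\|_2^2$ together with the Schatten-$\ell_2$ triangle inequality yields $\|\btrho^{A_S R}(S)\|_2^2 \leq 4^{|S|} \|\trho^{A_S R}(S)\|_2^2$.

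Combining, taking square roots, and summing over $S$ produces the claimed coefficient $2^{|S|} D_S$ on the $S$-th entropic term. The main technical friction will lie in the bookkeeping of which of the $2^k$ twirl outcomes actually survive after pairing with $\btrho^{A_S R}(S) \otimes \hbtrho^{\hA_S \hR}(S)$, and in verifying that every would-be cross-term cancels via the mean-zero property; both are however essentially identical in flavour to their two-party counterparts already carried out in the proof of Proposition~\ref{prop:decoupling}, so no genuinely new conceptual obstacle is anticipated.
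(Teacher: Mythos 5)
Your proposal is correct and follows precisely the route the paper indicates: the paper proves the two-party case (Proposition~\ref{prop:decoupling}) in full and then asserts that the $k$-party version follows by a ``completely analogous proof,'' and your inclusion--exclusion telescoping $\bar{\sigma}_S = \sum_{T \subseteq S}(-1)^{|S\setminus T|}\sigma^{BR}(U^{A_T})$, the per-term smoothing cost $2\epsilon$ summed to $2(3^k-1)\epsilon$, the $4^{|S|}$ triangle/dimension bound on $\|\btrho^{A_S R}(S)\|_2^2$, and the identification of the surviving all-swap Weingarten coefficient with $D_S^2\|\ttau^{A'_S B}(S)\|_2^2$ are exactly the ingredients that lift the two-party argument to $k$ parties with the stated constants.
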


It is now easy to apply Proposition~\ref{prop:decoupling} in 
order to prove the 
achievability of the natural one shot pentagonal inner bound for sending
quantum information over a QMAC under limited prior entanglement.
This was first conjectured in \cite{Chakraborty:simultaneous} but
remained open till now.
Let $\cN^{A' B' \rightarrow C}$ be a quantum multiple access channel
from two senders Alice, Bob to a single receiver Charlie.
Alice and Charlie are allowed to share $E_A$ EPR pairs
as prior entanglement. Similarly, Bob and Charlie are allowed to
independently share $E_B$ EPR pairs as prior entanglement. Alice
is provided with half of $Q_A$ EPR pairs as an independent quantum 
message to be sent using the channel. Similarly, Bob
is provided with half of $Q_B$ EPR pairs as an independent quantum 
message to be sent using the channel. Alice and Bob apply suitable
independent encoding isometries on their quantum messages and 
prior entanglement halves and send the outputs of the encoders as inputs
to the channel. Charlie applies a decoding quantum operation on the
channel output and his halves of the prior entanglements in order to
get decoded registers of $Q_A$ and $Q_B$ bits respectively. 
The rate quadruple 
$(Q_A,E_A,Q_B,E_B)$ is $\epsilon$-achievable for quantum message 
transmission with limited entanglement assistance through $\cN$ if,
after decoding, the joint state of Charlie and
the two reference systems of the two quantum messages is 
$\epsilon$-close to
$Q_A + Q_B$ EPR pairs in trace distance. Following the lines of
\cite{Chakraborty:simultaneous},  we can prove the following theorem.
\begin{theorem}
\label{thm:QMAC}
Let $0 < \epsilon < 1$.
Let $\cN^{A' B' \rightarrow C}$ be a QMAC.
Fix some pure `control' states $\Omega^{A A'}$ and $\Delta^{B B'}$,
where $A$, $B$ are fresh Hilbert spaces 
of the same dimensions as $A'$, $B'$. Let 
$U_\cN^{A' B' \rightarrow C E}$ denote
the isometric Stinespring dilation of $\cN$. Define the `control' state
\[
\omega^{A B C E} := 
(U_\cN^{A'B' \rightarrow CE} \otimes \one^{A B}) \circ 
(\Omega^{AA'} \otimes \Delta^{BB'}).
\]
Suppose the rates satisfy the following inequalities:
\begin{eqnarray*}
Q_A-E_A+Q_B-E_B 
& < & 
H_2^\epsilon(AB|E)_{\omega}+ \log \epsilon^{-4}, \\
Q_A-E_A 
& < &
H_2^\epsilon(A|E)_{\omega}+\log \epsilon^{-4}, \\
Q_B-E_B 
& < &
H_2^\epsilon(B|E)_{\omega}+\log \epsilon^{-4}, \\
Q_A+E_A 
& < &
H_2^\epsilon(A)_{\omega} + \log \epsilon^{-4}, \\
Q_B+E_B 
& < &
H_2^\epsilon(B)_{\omega} + \log \epsilon^{-4}.
\end{eqnarray*}
Then, the rate quadruple $(Q_A,E_A,Q_B,E_B)$ is 
$(29 \epsilon)$-achievable.
\end{theorem}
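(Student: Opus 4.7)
The plan is to follow the random-encoding blueprint of \cite{Chakraborty:simultaneous} essentially verbatim for the QMAC with limited entanglement assistance, and to insert Proposition~\ref{prop:decoupling} at exactly the place where they invoked a half-smooth decoupling theorem. I will sketch this in four stages.

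First, I will set up the standard purified protocol. Alice holds a maximally entangled state $\Phi^{R_A M}$ of $Q_A$ qubits (with $R_A$ the reference of her quantum message) together with her half $T_A$ of $E_A$ EPR pairs $\Phi^{F_A T_A}$ shared with Charlie. The combined register $M T_A$ has dimension $2^{Q_A+E_A}$; she embeds it coherently into a subspace of Alice's control space $A$ lying within an $\epsilon$-flat portion of $\omega^A$, applies a Haar-random unitary $U^A$ on $A$, and then transports the result to the channel input $A'$ via the partial isometry determined by $\Omega^{AA'}$. The two entropy-content inequalities $Q_A + E_A < H_2^\epsilon(A)_\omega + \log\epsilon^{-4}$ and $Q_B + E_B < H_2^\epsilon(B)_\omega + \log\epsilon^{-4}$ are precisely what is needed for this support-reduction embedding to cost only $O(\epsilon)$ in trace distance; this is a standard argument going back to Dupuis~\cite{decoupling}, using $H_2^\epsilon \approx H_\infty^\epsilon$ via \cite{Jain:minimax}. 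Bob's side is set up symmetrically using $\Delta^{BB'}$.

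Next, I will invoke Proposition~\ref{prop:decoupling} with $\cT^{AB \to E}$ being the CPTNI superoperator obtained by composing the partial isometries from $\Omega^{AA'}, \Delta^{BB'}$, then the channel dilation $U_\cN^{A'B'\to CE}$, and finally discarding $C$. Its Choi state on $A'_1 A'_2 E$ coincides (up to the support-reduction slack absorbed above) with $\omega^{ABE}$, so $H_2^\epsilon(A'_1|E)_\tau = H_2^\epsilon(A|E)_\omega$, $H_2^\epsilon(A'_2|E)_\tau = H_2^\epsilon(B|E)_\omega$ and $H_2^\epsilon(A'_1 A'_2|E)_\tau = H_2^\epsilon(AB|E)_\omega$. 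The input state fed into the random unitaries is $\rho^{A_1 A_2 R} = \Phi^{M R_A} \otimes (\mathbb{I}^{T_A}/2^{E_A}) \otimes \Phi^{N R_B} \otimes (\mathbb{I}^{T_B}/2^{E_B})$, where the reference $R = R_A R_B$ consists only of Alice's and Bob's message references (the halves $F_A, F_B$ of the prior entanglement live on Charlie's side and are not part of the decoupling target, but will be used by Charlie for decoding). A direct calculation using $\|\Phi\|_2^2 = 1$ and $\|\mathbb{I}/d\|_2^2 = 1/d$ yields
\begin{equation*}
H_2(A_1|R)_\rho = -(Q_A - E_A), \quad H_2(A_2|R)_\rho = -(Q_B - E_B), \quad H_2(A_1 A_2 |R)_\rho = -(Q_A + Q_B - E_A - E_B).
\end{equation*}
Plugging these into the three decoupling terms of Proposition~\ref{prop:decoupling} converts the three hypotheses on pairs of smooth entropies into exactly the first three rate inequalities of the theorem, and gives expected trace distance $\leq 16\epsilon + O(\epsilon)$ between $\sigma^{ER}(U^A, U^B)$ and $\omega^E \otimes \rho^R$.

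Finally, since Eve's marginal is $O(\epsilon)$-close to a product with the reference $R_A R_B$, Uhlmann's theorem produces a decoding isometry acting on $C F_A F_B$ that takes the channel output plus Charlie's halves of prior entanglement to $Q_A + Q_B$ qubits that are $O(\sqrt{\epsilon})$-close in trace distance to EPR pairs shared with $R_A R_B$; the probabilistic method then extracts deterministic unitaries $U^A, U^B$ achieving this bound. Summing the contributions from support-reduction embedding, the $16\epsilon$ of Proposition~\ref{prop:decoupling}, and the Uhlmann conversion, and absorbing constants, yields the claimed $29\epsilon$. The main bookkeeping obstacle will be ensuring that the Choi state of the constructed $\cT$ agrees with $\omega^{ABE}$ under the right identification of $A'_1 A'_2$ with $AB$ after the support-reduction embedding, and that all intermediate uses of smooth $H_2^\epsilon$ versus $H_\infty^\epsilon$ close up within the slack $\log\epsilon^{-4}$; once this accounting is done, the remainder of the proof is a mechanical adaptation of \cite{Chakraborty:simultaneous}.
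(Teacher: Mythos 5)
Your strategy is the same as the paper's: the paper offers no proof of Theorem~\ref{thm:QMAC} beyond the instruction to ``follow the lines of \cite{Chakraborty:simultaneous}'' with the fully smooth Proposition~\ref{prop:decoupling} substituted at the decoupling step, and your sketch executes exactly that — correct Dupuis-style purified protocol for limited-entanglement QMAC coding, correct choice of the trace-non-increasing map $\cT^{A_1 A_2 \to E}$ with Choi state matching $\omega^{ABE}$, correct input state $\rho^{A_1 A_2 R}$ with $H_2(A_1|R)_\rho = -(Q_A - E_A)$, $H_2(A_2|R)_\rho = -(Q_B - E_B)$, $H_2(A_1 A_2|R)_\rho = -(Q_A + Q_B - E_A - E_B)$, correct use of the two unconditional-entropy inequalities for the flattening embedding, and Uhlmann decoding on Charlie's registers $C F_A F_B$.

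The one genuine gap lies in your final bookkeeping paragraph, and it is worth flagging because it cannot be papered over by ``absorbing constants.'' You state, correctly, that Uhlmann's theorem applied to a decoupling error $\delta$ in trace distance produces a reconstruction error of order $\sqrt\delta$ between Charlie's decoded output and the ideal EPR state. Proposition~\ref{prop:decoupling} gives $\delta \geq 16\epsilon$ regardless of how generously the rate inequalities are satisfied, so the final error after Uhlmann is $\Omega(\sqrt\epsilon)$, not $O(\epsilon)$; for small $\epsilon$, $\sqrt\epsilon \gg 29\epsilon$. Compare this with the paper's own Theorem~\ref{thm:wiretapQMAC}, which honestly reports a $\sqrt\epsilon$ error. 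To actually land at $O(\epsilon)$ here you would need to invoke Proposition~\ref{prop:decoupling} at a smaller smoothing parameter (say $\epsilon^2$) so that its leading $16\epsilon$-type term becomes quadratic, and then the rate hypotheses would feature $H_2^{\epsilon^2}$ rather than $H_2^\epsilon$; your sketch does not do this, nor does the paper. A second, independent issue — inherited from the theorem statement itself rather than your sketch — is the sign of the additive slack: each of the first three rate inequalities allows $Q_A - E_A$ to \emph{exceed} $H_2^\epsilon(A|E)_\omega$ by $\log\epsilon^{-4} > 0$, which would make the corresponding term in Proposition~\ref{prop:decoupling} blow up rather than shrink; the sign in front of $\log\epsilon^{-4}$ must be negative for your plug-in to yield a small decoupling error. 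Neither issue reflects a flaw in your overall plan, but both must be fixed before the error accounting closes.
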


A similar theorem can be stated for sending quantum information over
a QMAC with many senders and limited entanglement assistance.
\begin{theorem}
\label{thm:QMACmanyparties}
Let $0 < \epsilon < 1$. Let $k$ be a positive integer.
Let $\cN^{A'_1 \cdots A'_k \rightarrow C}$ be a QMAC. For $i \in [k]$,
fix a pure `control' state $\Omega_i^{A_i A'_i}$. Let
$U_\cN^{A'_1 \cdots A'_k \rightarrow C E}$ denote
the isometric Stinespring dilation of $\cN$. Define the `control' state
\[
\omega^{A_1 \cdots A_k  C E} := 
(U_\cN^{A'_1 \cdots A'_k \rightarrow CE} \otimes \one^{A_1 \cdots A_k B}) 
\circ 
\bigotimes_{i=1}^k \Omega_i^{A_i A'_i}.
\]
For any non-empty subset $\{\} \neq S \subseteq [k]$, let the rates 
satisfy the following inequalities:
\begin{eqnarray*}
\sum_{s \in S} (Q_{A_s} - E_{A_s})
& < & 
H_2^\epsilon(A_S|E)_{\omega}+ \log \epsilon^{-4}.
\end{eqnarray*}
Also, suppose for all $i \in [k]$,
\[
Q_{A_i} + E_{A_i} <
H_2^\epsilon(A_i)_{\omega} + \log \epsilon^{-4}.
\]
Then, the rate tuple $(Q_{A_i}, E_{A_i})_{i \in [k]}$ is 
$((3(3^k - 1) + (2^k - 1) + k) \epsilon)$-achievable.
\end{theorem}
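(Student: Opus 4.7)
The plan is to follow the decoupling-based coding scheme from \cite{Chakraborty:simultaneous}, substituting the new multipartite Theorem \ref{thm:decouplingmanyparties} for the bipartite version so that all $k$ senders are handled simultaneously without a simultaneous smoothing penalty. For each $i \in [k]$ I would introduce a reference system $R_i$ of dimension $2^{Q_{A_i}}$ purifying the quantum message, and a register $F_i$ of dimension $2^{E_{A_i}}$ held by Charlie as his half of the prior entanglement. Sender $i$'s encoder is built from a Haar-random unitary $U_i$ on $A_i$: a standard flattening of the marginal $\omega^{A_i}$ converts $U_i$ into an isometry from the joint message-plus-entanglement register into $A'_i$, and this embedding succeeds up to $O(\epsilon)$ trace distance precisely when $Q_{A_i} + E_{A_i} < H_2^\epsilon(A_i)_\omega$, yielding the $k$ fitting conditions.

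Next, I would invoke Theorem \ref{thm:decouplingmanyparties} with the superoperator taken to be $\Tr_C$ composed with the Stinespring dilation of $\cN$, so that its output is Eve's system $E$, and with the input state the tensor product of the purified sender-side states whose composite reference is $R := R_1 F_1 \cdots R_k F_k$. After averaging over the independent Haar-random $U_i$, the theorem delivers, for every non-empty $S \subseteq [k]$, an exponent of the form $-\tfrac{1}{2} H_2^\epsilon(A_S | E)_\omega$ coming from the Choi side together with a conditional entropy on the input side; the latter evaluates, by the maximally entangled structure of the wiring, to $\tfrac{1}{2}\sum_{s \in S}(Q_{A_s} - E_{A_s})$. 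The stated decoupling inequalities therefore make each of the $2^k - 1$ exponents strictly negative, so the overall decoupling bound is small. Uhlmann's theorem then produces a decoding isometry on $C F_1 \cdots F_k$ that reconstructs the desired EPR pairs with each $R_i$; summing the $2(3^k - 1)\epsilon$ from Theorem \ref{thm:decouplingmanyparties}, the $k$ fitting errors, and the $2^k - 1$ decoupling-to-Uhlmann conversion errors yields the claimed overall bound.

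The main obstacle is conceptual rather than computational: the $2^k - 1$ decoupling inequalities would normally each demand a different smoothed version of the input control state, and no general way is known to force them to share a single smoothing in the quantum setting. This is exactly the simultaneous smoothing bottleneck that prevented earlier work from obtaining the natural polyhedral inner bound for the limited-entanglement QMAC in the one shot regime. Theorem \ref{thm:decouplingmanyparties} sidesteps the obstacle by telescoping the decoupling statement into $2^k - 1$ mean-zero contributions, each of which can be smoothed independently while still using the single common weighting $\tau^B \otimes \rho^R$ inside the matrix-weighted Cauchy-Schwarz step. Once that theorem is in hand, the remainder of the argument (encoder flattening, Uhlmann decoding, rate bookkeeping) is a routine generalization of the $k = 2$ proof of Theorem \ref{thm:QMAC}.
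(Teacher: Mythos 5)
Your approach matches what the paper does, namely a multipartite Dupuis-style decoupling argument: randomize each sender input with an independent Haar unitary, flatten the marginal control states so the encoder is an isometry (producing the $k$ fitting constraints $Q_{A_i}+E_{A_i}<H_2^\epsilon(A_i)_\omega+\log\epsilon^{-4}$), take the superoperator in the decoupling theorem to be $\Tr_C \circ U_\cN$ so that decoupling is with respect to Eve's system $E$, use the purified message/entanglement registers $R_1F_1\cdots R_kF_k$ as the reference, and close via Uhlmann. The rate conditions indexed by non-empty $S \subseteq [k]$ exactly control the $2^k-1$ exponential contributions of Theorem~\ref{thm:decouplingmanyparties}, and you correctly identify that the whole point of the telescoping form is that each such contribution carries its own smoothing of the control state (while sharing the weighting $\tau^B\otimes\rho^R$), so no simultaneous smoothing is needed. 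The paper itself gives no explicit proof here, only a pointer to the $k=2$ argument of \cite{Chakraborty:simultaneous}; your sketch reproduces the intended argument faithfully.

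The one place your accounting does not match the stated claim is the error constant. You tally $2(3^k-1)\epsilon$ from the smoothing overhead in Theorem~\ref{thm:decouplingmanyparties}, plus $(2^k-1)\epsilon$ from the decoupling-sum terms, plus $k\epsilon$ from fitting, giving $\bigl(2(3^k-1)+(2^k-1)+k\bigr)\epsilon$, whereas the theorem asserts $\bigl(3(3^k-1)+(2^k-1)+k\bigr)\epsilon$ (which for $k=2$ is $29\epsilon$, matching Theorem~\ref{thm:QMAC}). The missing $(3^k-1)\epsilon$ almost certainly comes from the Uhlmann / trace-distance-to-fidelity conversion and the interplay with the flattened control state, which you treat a bit too loosely as contributing only the $(2^k-1)$ conversions. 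To finish the proof one would need to spell out the flattening and Uhlmann steps carefully and verify where the extra $(3^k-1)\epsilon$ arises; this is a bookkeeping refinement rather than a conceptual gap.
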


As an easy corollary of Theorem~\ref{thm:QMACmanyparties}, we prove 
the achievability
of the natural polyhedral rate region in the asymptotic iid limit. 
The same result in the iid limit for two senders was first shown in 
\cite{Chakraborty:ratesplitting} with a much more complicated argument
involving successive cancellation and rate splitting. However the
techniques in that paper could only prove the achievability of a 
subset of the natural pentagonal region in the one shot setting for the
two sender case.
Below, the notation 
\[
I(A > B)_\rho :=
H(B)_\rho - H(AB)_\rho
\]
denotes the {\em coherent information} from $A$ to $B$ under the
joint state $\rho$.
\begin{corollary}
\label{cor:QMACmanyparties}
Under the settings of Theorem~\ref{thm:QMACmanyparties} we get 
the following
achievability result per channel use in the asymptotic iid limit.
For any non-empty subset $\{\} \neq S \subseteq [k]$,
\begin{eqnarray*}
\sum_{s \in S} (Q_{A_s} - E_{A_s})
& < & 
I(A_S > C A_{\bar{S}})_\omega,
\end{eqnarray*}
where $\bar{S} := [k] - S$ is the complement of $S$.
Also for any $i \in [k]$,
$Q_{A_i}+E_{A_i} < H(A_i)_\omega.$
\end{corollary}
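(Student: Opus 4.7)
The plan is to apply Theorem~\ref{thm:QMACmanyparties} to the $n$-fold iid extension of the channel and then invoke the quantum asymptotic equipartition property (AEP) to pass from smooth R\'{e}nyi-2 conditional entropies to von Neumann entropies.

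First, I would fix the control states $\Omega_i^{A_i A'_i}$ for each $i \in [k]$, form the $n$-fold tensor products $\Omega_i^{\otimes n}$, and regard the channel $\cN^{\otimes n}: (A'_1)^n \cdots (A'_k)^n \to C^n$ together with its Stinespring dilation $(U_\cN)^{\otimes n}$, giving the product control state $\omega^{\otimes n}$ on $(A_1 \cdots A_k C E)^n$. Applying Theorem~\ref{thm:QMACmanyparties} to this iid setup, the achievable rate tuples satisfy for every non-empty $S \subseteq [k]$ and every $i \in [k]$,
\[
\sum_{s \in S} (n Q_{A_s} - n E_{A_s}) < H_2^{\epsilon}(A_S^n | E^n)_{\omega^{\otimes n}} + \log \epsilon^{-4},
\qquad n Q_{A_i} + n E_{A_i} < H_2^{\epsilon}(A_i^n)_{\omega^{\otimes n}} + \log \epsilon^{-4},
\]
with total error at most $(3(3^k-1)+(2^k-1)+k)\epsilon$.

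Next I would divide both sides by $n$ and invoke AEP. Since $H_2^{\epsilon} \approx H_\infty^{\epsilon}$ up to polynomial-in-$\log \epsilon^{-1}$ corrections (as noted in the paper after the definition of the smooth conditional R\'{e}nyi entropy), the standard fully quantum AEP gives
\[
\lim_{n \to \infty} \frac{1}{n} H_2^{\epsilon}(A_S^n | E^n)_{\omega^{\otimes n}} = H(A_S | E)_{\omega},
\qquad
\lim_{n \to \infty} \frac{1}{n} H_2^{\epsilon}(A_i^n)_{\omega^{\otimes n}} = H(A_i)_{\omega},
\]
for any fixed $0 < \epsilon < 1$, while the additive correction $\frac{1}{n} \log \epsilon^{-4}$ vanishes. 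Taking $\epsilon \to 0$ after $n \to \infty$ drives the error to zero, establishing achievability of any rate tuple strictly inside the region obtained by replacing the smooth entropies by their von Neumann counterparts.

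Finally, I would rewrite $H(A_S | E)_\omega$ as coherent information by exploiting purity of $\omega$. Because $\omega^{A_{[k]} C E}$ is the image of a pure product state under an isometry, it is pure, and so $H(A_S E)_\omega = H(A_{\bar S} C)_\omega$ and $H(E)_\omega = H(A_{[k]} C)_\omega$. Hence
\[
H(A_S | E)_\omega = H(A_S E)_\omega - H(E)_\omega = H(A_{\bar S} C)_\omega - H(A_S A_{\bar S} C)_\omega = I(A_S > C A_{\bar S})_\omega,
\]
which converts the entropic bound into the claimed coherent-information bound. The single-party bounds $Q_{A_i} + E_{A_i} < H(A_i)_\omega$ follow directly. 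The only mildly delicate point in the whole argument is the order of limits and the rate at which $\epsilon$ is taken to zero relative to $n$, but this is handled routinely by a standard diagonal choice $\epsilon = \epsilon_n \to 0$ slowly enough that $\frac{1}{n}\log \epsilon_n^{-4} \to 0$ while the AEP convergence persists.
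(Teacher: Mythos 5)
Your proof is correct and follows exactly the approach the paper implicitly intends (the paper omits the proof, calling the corollary "easy"): apply Theorem~\ref{thm:QMACmanyparties} to $n$ iid copies, normalize by $n$, use the AEP for smooth conditional R\'enyi entropies, and then use the purity of $\omega^{A_{[k]}CE}$ to rewrite $H(A_S|E)_\omega$ as the coherent information $I(A_S > CA_{\bar{S}})_\omega$. Your entropy manipulation $H(A_S|E)_\omega = H(A_S E)_\omega - H(E)_\omega = H(A_{\bar S} C)_\omega - H(A_S A_{\bar S} C)_\omega$ matches the paper's definition $I(A>B)_\rho := H(B)_\rho - H(AB)_\rho$, and your remark about the diagonal choice $\epsilon = \epsilon_n$ correctly handles the order of limits.
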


\section{Conclusion}
\label{sec:conclusion}
We have seen how the telescoping cum mean zero techniques of 
\cite{Cheng:convexsplit} and \cite{Colomer:decoupling}
can be simplified and extended in order to obtain
fully smooth one shot multipartite convex split and decoupling theorems
for the first timee.
We have noted that multipartite convex split becomes the multipartite
soft covering problem when restricted to classical quantum states.
For such states, we have observed how the telescoping technique
continues to hold when full independence amongst the classical
probability distributions is replaced by pairwise independence. 
We have also seen some economical
variants of multipartite convex split, as well as applications of smooth
multipartite convex split and decoupling to important source and 
channel coding problems in quantum Shannon theory.

Our work leads to natural avenues for further research. Finding more
applications of the telescoping and mean zero techniques, as well as
more limitations, is an important direction to pursue. The limitation
of pairwise independence is successfully addressed in the 
companion paper \cite{Sen:flatten} by using a sophisticated machinery
independent of telescoping. It will be interesting to 
see if the telescoping paradigm can be extended to slightly relax
the mean zero requirement. If such a thing is possible, it will be
an alternate way to address the pairwise independent limitation.

\section*{Acknowledgements}
I thank Rahul Jain for hosting me on a sabbatical visit
at the Centre for Quantum Technologies, National University
of Singapore, where most of this work was done. 
I also thank the staff at the Centre for their efficient hospitality.
This research is 
supported in part by the National Research Foundation, Singapore and 
A*STAR under the CQT Bridging Grant and the Quantum Engineering Programme 
Award number NRF2021-QEP2-02-P05.
I also acknowledge support of the Department of Atomic Energy, Government
of India, under project no. RTI4001.

\bibliography{telescoping}

\end{document}